\documentclass[pdflatex,sn-basic,Numbered]{sn-jnl}
\usepackage{amsthm, amsfonts, amsmath,amssymb,stmaryrd}
\usepackage{hyperref,graphicx,caption,tikz,comment,enumitem}
\usepackage{subcaption}
\usetikzlibrary{arrows.meta, positioning,calc}

\theoremstyle{thmstyleone}%
\newtheorem{theorem}{Theorem}
\newtheorem{proposition}[theorem]{Proposition}%
\newtheorem{lemma}[theorem]{Lemma}%
\newtheorem*{claim*}{Claim}

\theoremstyle{thmstyletwo}%
\newtheorem{example}{Example}%
\newtheorem{remark}{Remark}%

\theoremstyle{thmstylethree}%
\newtheorem{definition}{Definition}%

\newenvironment{claimproof}{\begin{proof}[Proof of claim]}{\phantom\qedhere\hfill$\blacksquare$\end{proof}}

\title{Eventual and Strong Eventual Notions in Public Announcements}
\date{\today}
\author*[1]{\fnm{Eiji} \sur{Yamada}}\email{yamada.e.112d@m.isct.ac.jp}
\affil*[1]{\orgdiv{Department of Mathematical and Computing Science}, \orgname{Institute of Science Tokyo}, \state{Tokyo}, \country{Japan}}

\abstract{In dynamic epistemic logic, the four notions of success, self-refutation, true lies, and impossible lies have been discussed in the context of public announcements. In this paper, we introduce eventual and strong eventual versions of these notions, as well as their transfinite versions, which allow transfinite iteration of announcements. We also introduce the notions of always informativeness when true or false. For example, a formula is eventually self-refuting if, whenever initially true, it eventually becomes false at some finite stage under iterated announcements, and strong eventual self-refutation further requires the formula to remain false at all sufficiently late stages. There are two main results. The first result gives the relationship among strong eventual notions, eventual notions, and several other conditions including conditions on the limit of the truth values of the announced formula, the uniform bound condition, and the fixed-point views of the Moore sentence and the self-fulfilling sentence. The second result gives the relationship among finite and transfinite versions of the eventual and strong eventual notions and the fixed-point views.}
\keywords{dynamic epistemic logic, public announcements, iterated announcements, Moore sentence}
\begin{document}

\maketitle
\section{Introduction}

The apparent absurdity of asserting the sentence ``\(p\), but I do not believe \(p\)'' was first noted by G. E. Moore in his ``Reply to My Critics''~\citep{Moore1942}.
Wittgenstein called this phenomenon \emph{Moore's paradox} and made it a central problem about belief and assertion~\citep{Wittgenstein1953}. Hintikka gave it a seminal treatment in epistemic logic
\citep{Hintikka1962}.  In that setting, the sentence is formalized as the \emph{Moore sentence} $\varphi:= p\land\neg\Box p$ where $\Box $ is interpreted as knowledge or belief. 

From a static perspective, the Moore sentence $\varphi$ is \emph{unknowable} in the sense that $\varphi\land\Box\varphi$ is unsatisfiable\footnote{The terms knowable and believable as the satisfiability of $\varphi\land\Box\varphi$ and $\Box\varphi$, respectively, are due to \citet{Yamada2026Unknowability}, which are static definitions. The terms unknowable and unbelievable are their negation, and they are known to be equivalent in KD45 and S5. To be precise, this definition of unknowability does not take the justification aspect of knowledge into consideration, and it rather means that there is no situation in which both $\varphi$ is true and the agent believes $\varphi$. \citep{Balbiani2008} uses the same term to mean that there is a formula $\psi$ such that after announcing $\psi$, $K\varphi$ holds where $K$ means knowledge, which is dynamic.}: In fact, we have
\[
    \varphi\land\Box\varphi\leftrightarrow (p\land\lnot\Box p)\land\Box (p\land\lnot\Box p)\leftrightarrow (p\land\lnot\Box p)\land\Box p\land\Box\lnot\Box p\leftrightarrow \bot. 
\]
In KD45 (i.e., if agents have belief consistency, positive and negative introspections), we have the validity $\Diamond\Box\psi\leftrightarrow \Box\psi$ so that $\varphi$ is also \emph{unbelievable} in the sense that $\Box\varphi$ is unsatisfiable. 

We can also view the Moore sentence from a dynamic perspective. \emph{Public announcements} are one of the main topics in dynamic epistemic logic. A public announcement of \(\varphi\) is an announcement to a group of agents that \(\varphi\) holds. In this setting, the Moore sentence $\varphi=p\land\lnot\Box p$ is the most typical example of \emph{self-refuting} formulas, formulas that become false after being announced whenever they are initially true. In fact, whenever \(\varphi\) is true, after its announcement \(\Box p\) holds, which in turn makes \(\varphi\) itself false.

As a closely related yet somewhat separate topic, the notion of \emph{success} has been a central topic in the study of public announcements. A formula $\varphi$ is successful if, whenever $\varphi$ is true, it remains true after being announced. This notion matters since success guarantees that true information is shared with others without changing its truth, which is often the purpose of announcements.

\citet{Holliday2010-HOLMPI-2} proved that in single-agent S5, the notions of \emph{always informativeness},
\emph{non-Cartesianness} (the unsatisfiability of $\Box\varphi$), and \emph{eventual
self-refutation} are equivalent. \citet{Yamada2026Unknowability} then interpreted non-Cartesianness of $\varphi$ as unbelievability of $\varphi$ and showed that  in single-agent KD45 and S5, the static notions of unknowability and unbelievability are equivalent to two different views of Moorean phenomena: One is the validity of $\varphi\leftrightarrow (\varphi\land\lnot\Box\varphi)$ and the other is a contradiction among $\varphi$ itself, the belief part and the possibility part of the disjunctive normal form of $\varphi$. Consequently in single-agent S5, those static conditions are equivalent to the dynamic notions of always informativeness and eventual self-refutation. 

On the other hand, \citep{AgotnesVanDitmarschWang2018} introduced the remaining cases, \emph{true lies} and \emph{impossible lies}, where $\varphi$ is a true lie if, whenever $\varphi$ is false, it becomes true after an announcement, and $\varphi$ is an impossible lie if, whenever $\varphi$ is false, it remains false after an announcement (see Table~\ref{tab:four_notions}). They extended these four notions to \emph{$\sigma$-validity} for a finite or infinite sequence $\sigma$ of $0$s and $1$s. For example, $\varphi$ is $101$-valid if, whenever initially true, it becomes false after the first announcement and becomes true after the second announcement. \citep{Yamada2026} then gave a complete classification of such sequences in terms of $\sigma$-validity in multi-agent K45, KD45, and S5.
\begin{table}[htbp]
  \centering
  \begin{tabular}{|c|c|}
    \hline
    Successful & $\models\varphi\to [\uparrow\varphi]\varphi$\\\hline
    Self-refuting & $\models\varphi\to [\uparrow\varphi]\lnot\varphi$\\\hline
    True lie & $\models\lnot\varphi\to [\uparrow\varphi]\varphi$\\\hline
    Impossible lie & $\models\lnot\varphi\to [\uparrow\varphi]\lnot\varphi$\\\hline 
  \end{tabular}
  \caption{Four notions}
   \label{tab:four_notions}
\end{table}

The idea of repeatedly announcing formulas (i.e., \emph{iterated announcements}) is not new. In fact, iterated
modal relativization has been studied as an epistemic program
\citep{Miller2005-MILTUO-11}; repeated updates have been approached as
dynamical systems with their own recurrence and limiting behavior
\citep{Sadzik2006,KleinRendsvig2017}; and repeated public announcements yield
fixed-point procedures in epistemic analyses of games
\citep{vanBenthem2007}.

In this paper, we introduce \emph{eventual} and \emph{strong eventual} versions of the above four notions. We also introduce \emph{transfinite} versions for these eventual and strong eventual notions. For example, we say that a formula $\varphi$ is eventually self-refuting (denoted $E_{10}$) iff whenever $\varphi$ is initially true, $\varphi$ eventually becomes false at some point when repeatedly announced. Also, $\varphi$ is strongly eventually self-refuting $(SE_{10})$ iff whenever $\varphi$ is initially true, $\varphi$ eventually becomes false forever at some point when repeatedly announced. Finally, transfinitely eventually self-refuting ($E_{10}^{Ord}$) and transfinitely strongly eventually self-refuting ($SE_{10}^{Ord}$) refer to their transfinite counterparts, allowing transfinitely repeated announcements through ordinals. We also introduce the notions of always informativeness when true and always informativeness when false, the former of which is the same as always informativeness as introduced by \citep{Holliday2010-HOLMPI-2}.

There are two main results in this paper. Theorem~\ref{thm:finite-classification} gives, for each of the above four notions, the logical relationships among strong eventual notions, eventual notions, and several other conditions. Always informativeness when true/false are also included for self-refutation/true lies, respectively. According to the theorem, strong eventual notions are characterized by the limit of the truth values of the announced formula. Eventual notions are characterized by (i) limit superior or limit inferior of the truth values and (ii) the uniform bound condition. 

Condition (ii) claims that for any $E_{ij}$-formula $\varphi$ (where $i,j\in\{0,1\}$), one can take a uniform bound $N\geq 1$ across all pointed models such that whenever the truth of $\varphi$ is $i$, that of $\varphi$ eventually becomes $j$ by the $N$-th announcement. Furthermore, for $(i,j)=(1,0)$ (i.e., eventual self-refutation), all the agents' beliefs are destroyed and 
become inconsistent by the $N$-th announcement. 

  Eventual notions imply the fixed-point views of the sentences $\varphi\land\lnot C_G\varphi$ and $\varphi\lor C_G\varphi$ (denoted $S_{ij}^{<\omega}$). These conditions suggest that what underlies self-refutation and impossible lies is the generalized form $\varphi\land\lnot C_G\varphi$ of the Moore sentence $p\land\lnot\Box p$ while what underlies success and true lies is the generalized form $\varphi\lor C_G\varphi$ of the \emph{self-fulfilling sentence} $p\lor\Box p$. The theorem also implies that while every displayed one-way direction is strict when there are at least two agents, all the conditions collapse to the same notion in the single-agent case.

Theorem~\ref{thm:transfinite-classification} gives, for each $(i,j)\in \{0,1\}^2$, the logical relationship among $E_{ij}$, $SE_{ij}$, $E_{ij}^{Ord}$, $SE_{ij}^{Ord}$, $S_{ij}^{<\omega}$, and $S_{ij}^{Ord}$ from the three perspectives: strong eventual vs eventual, finite iterated announcements vs transfinite iterated announcements, and what condition $S_{ij}^{Ord}$ is equivalent to.

This paper is structured as follows.
Section~\ref{sec:basic-definitions-properties} introduces definitions of the usual epistemic logic, public announcement logic, believed public announcement logic, the four notions, and their basic properties. Section~\ref{sec:eventual-strong-eventual-notions} defines eventual and strong eventual notions. Section~\ref{sec:lemmas-for-main-results} proves various lemmas for our main results.
Section~\ref{sec:main-results} gives our main results and their interpretations. Section~\ref{sec:conclusion} gives a conclusion and future work.

A Lean formalization is available at \url{https://github.com/eiyamada/lean-paper-formalizations}.

\section{Basic definitions and properties}\label{sec:basic-definitions-properties}
We first list several basic definitions. Let $G$ be a nonempty finite set of agents and $\mathbf{Prop}$ be a countably infinite set of proposition letters.
\begin{definition}
    Define \textit{formulas} in the multi-agent epistemic logic $\mathcal{L}$ by
    \begin{equation*}
        \varphi:=p\mid\lnot\varphi\mid\varphi\land\psi\mid\Box_i\varphi\quad(i\in G,\,p\in \mathbf{Prop})
    \end{equation*}
\end{definition}
\begin{definition}[\citep{plaza2007}]
    Define \textit{formulas} in public announcement logic $\mathcal{L}_{PAL}$ by
    \begin{equation*}
        \varphi:=p\mid\lnot\varphi\mid \varphi\land\psi\mid \Box_i\varphi\mid [!\varphi]\psi\quad (i\in G,\,p\in\mathbf{Prop})
    \end{equation*}
\end{definition}
\begin{definition}[\citep{AgotnesVanDitmarschWang2018,GerbrandyGroeneveld1997}]
    Define \textit{formulas} in believed public announcement logic $\mathcal{L}_{\text{BPAL}}$ by
    \begin{equation*}
        \varphi:=p\mid\lnot\varphi\mid \varphi\land\psi\mid \Box_i\varphi\mid [\uparrow\varphi]\psi\quad (i\in G,\,p\in\mathbf{Prop})
    \end{equation*}
\end{definition}

We let $\mathcal{L}_C$ and $\mathcal{L}_{BPALC}$ denote the languages that add the common belief operator $C_G$ to $\mathcal{L}$ and $\mathcal{L}_{BPAL}$, respectively.
\begin{definition}
    A \textit{model} is a tuple $M=(W,\{R_i\}_{i\in G},V)$ where $W$ is a non-empty set of \textit{states}, $R_i$ is a binary relation on $W$ (\textit{accessibility relation}), and $V\colon \mathbf{Prop}\to 2^W$ is a \textit{valuation function}.
\end{definition}
For a formula $\varphi$ and a pointed model $M,w$, let $\llbracket\varphi\rrbracket_M:=\{w\in W\colon M,w\models\varphi\}$ be the set of states in which $\varphi$ holds.

\begin{definition}\label{def:relativized model}
    Let $M=(W, \{R_i\}_{i\in G}, V)$ be a model and $\varphi$ be a formula.
    \begin{itemize}
        \item When $\llbracket\varphi\rrbracket_M\neq\varnothing$, the \emph{relativization} of $M$ to $\varphi$ under public announcement is the model $M_{|\varphi}=(W_{|\varphi}, \{R_{i|\varphi}\}_{i\in G}, V_{|\varphi})$ where $W_{|\varphi}=\llbracket\varphi\rrbracket_M$, $R_{i|\varphi}=R_i\cap (W_{|\varphi}\times W_{|\varphi})$, and $V_{|\varphi}(p)=V(p)\cap W_{|\varphi}$ for all $p\in\mathbf{Prop}$.
        \item The \emph{relativization} of $M$ to $\varphi$ under believed public announcement is the model $M|\varphi=(W, \{R_i|\varphi\}_{i\in G}, V)$ with $R_i|\varphi=R_i\cap (W\times \llbracket \varphi\rrbracket_M)$.
    \end{itemize}
\end{definition}
\begin{definition}
    Let $M=(W,\{R_i\}_{i\in G},V)$ be a model. Define truth as follows.
    \begin{enumerate}
        \item $M,w\models p\iff w\in V(p)$.
        \item $M,w\models\lnot\varphi\iff M,w\not\models\varphi$.
        \item $M,w\models\varphi\land\psi\iff M,w\models\varphi\text{ and }M,w\models\psi$.
        \item $M,w\models\Box_i\varphi\iff M,v\models\varphi$ for all $v$ with $wR_iv$.
        \item $M,w\models[!\varphi]\psi\iff M,w\models\varphi\Rightarrow M_{|\varphi}, w\models\psi$.
        \item $M,w\models[\uparrow\varphi]\psi\iff M|\varphi, w\models\psi$.
        \item $M,w\models C_G\varphi\iff M,v\models\varphi$ for all $v$ with $w(\bigcup_{i\in G}R_i)^+v$ where $+$ denotes the transitive closure.
    \end{enumerate}
\end{definition}

Although public announcement logic (PAL) has been widely used in the literature, in this paper we use believed public announcement logic (BPAL) since in PAL, we can no longer consider the truth of formulas at states that have already been eliminated. That is, when we consider the truth of a formula $\varphi$ at the pointed model $M,w$, we can no longer consider the truth at $w$ in the updated model $M_{|\varphi}$ whenever $M,w\models\lnot\varphi$, simply because $w\notin M_{|\varphi}$.

The expressive powers of $\mathcal{L}$, $\mathcal{L}_{PAL}$, and $\mathcal{L}_{BPAL}$ are known to be the same. That is, for any formula in one language, there is a logically equivalent formula in the other language. In fact, for example, the following lemma gives reduction axioms from $\mathcal{L}_{BPAL}$ to $\mathcal{L}$. The proof uses an induction on formulas.

\begin{lemma}[Reduction axioms for BPAL]\label{lem:bpal-reduction-axioms}
    For all formulas $\varphi,\psi,\chi$, all proposition letters $p$, and all agents $i\in G$, the following formulas are valid:
    \begin{align*}
        [\uparrow\varphi]p
        &\leftrightarrow p,\\
        [\uparrow\varphi]\neg\psi
        &\leftrightarrow \neg[\uparrow\varphi]\psi,\\
        [\uparrow\varphi](\psi\land\chi)
        &\leftrightarrow
        ([\uparrow\varphi]\psi\land[\uparrow\varphi]\chi),\\
        [\uparrow\varphi]\Box_i\psi
        &\leftrightarrow
        \Box_i(\varphi\to[\uparrow\varphi]\psi),\\
        [\uparrow\varphi]\Diamond_i\psi
        &\leftrightarrow
        \Diamond_i(\varphi\land[\uparrow\varphi]\psi)\\
        [\uparrow\varphi][\uparrow\psi]\chi
        &\leftrightarrow
        [\uparrow(\varphi\land[\uparrow\varphi]\psi)]\chi.
    \end{align*}
\end{lemma}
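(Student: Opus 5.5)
Each of the six equivalences is checked semantically, pointwise: fix an arbitrary model $M=(W,\{R_i\}_{i\in G},V)$ and state $w\in W$, and unfold the truth clause for $[\uparrow\cdot]$ together with Definition~\ref{def:relativized model}. The key structural fact is that $M|\varphi$ has the same domain $W$ and valuation $V$ as $M$; only the relations are shrunk, to $R_i\cap(W\times\llbracket\varphi\rrbracket_M)$. No induction on formulas is needed for the individual axioms, since each is a direct unfolding. Induction is only needed afterwards, if one wants to conclude that every $\mathcal{L}_{BPAL}$ formula reduces to an $\mathcal{L}$ formula.

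The first three axioms are immediate.
\begin{itemize}
  \item For $p$: $M|\varphi,w\models p$ iff $w\in V(p)$ iff $M,w\models p$, since the valuation is unchanged.
  \item For negation and conjunction: the truth clauses for $\lnot$ and $\land$ are evaluated in the single model $M|\varphi$, so they commute with $[\uparrow\varphi]$.
\end{itemize}

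For the box axiom, $M|\varphi,w\models\Box_i\psi$ iff for all $v$ with $wR_iv$ and $v\in\llbracket\varphi\rrbracket_M$ we have $M|\varphi,v\models\psi$. This is exactly the statement that for all $v$ with $wR_iv$, if $M,v\models\varphi$ then $M,v\models[\uparrow\varphi]\psi$, i.e.\ $M,w\models\Box_i(\varphi\to[\uparrow\varphi]\psi)$. The diamond axiom then follows by duality, using the negation axiom: $[\uparrow\varphi]\lnot\Box_i\lnot\psi\leftrightarrow\lnot\Box_i(\varphi\to\lnot[\uparrow\varphi]\psi)\leftrightarrow\Diamond_i(\varphi\land[\uparrow\varphi]\psi)$.

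The composition axiom is the step needing care. I would prove the model identity $(M|\varphi)|\psi=M|(\varphi\land[\uparrow\varphi]\psi)$, from which the axiom follows for every $\chi$ because the two pointed models coincide.
\begin{itemize}
  \item Both models have domain $W$ and valuation $V$.
  \item The relation of the left model is $R_i\cap(W\times\llbracket\varphi\rrbracket_M)\cap(W\times\llbracket\psi\rrbracket_{M|\varphi})$.
  \item The relation of the right model is $R_i\cap(W\times\llbracket\varphi\land[\uparrow\varphi]\psi\rrbracket_M)$.
  \item These agree because $\llbracket\varphi\land[\uparrow\varphi]\psi\rrbracket_M=\llbracket\varphi\rrbracket_M\cap\llbracket\psi\rrbracket_{M|\varphi}$, which is just the truth clause for $[\uparrow\varphi]\psi$.
\end{itemize}

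The only subtlety is to make sure $\llbracket\psi\rrbracket_{M|\varphi}$ is computed in the updated model rather than in $M$. That is exactly why the inner formula $[\uparrow\varphi]\psi$ appears instead of $\psi$.
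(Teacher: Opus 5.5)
Your proof is correct and more explicit than the paper's, which says only that ``the proof uses an induction on formulas.'' The difference is that you show the validities themselves need no induction. The first five equivalences are one-step unfoldings of the truth clause for $[\uparrow\varphi]$, using only that $M|\varphi$ keeps $W$ and $V$ and cuts targets down to $\llbracket\varphi\rrbracket_M$. The composition axiom follows from the literal identity $(M|\varphi)|\psi=M|(\varphi\land[\uparrow\varphi]\psi)$, so it holds for every $\chi$ without any analysis of the structure of $\chi$. Your route also gives some generality for free. The same argument works over arbitrary models, not only K45, and for $\varphi,\psi,\chi$ drawn from $\mathcal{L}_{BPALC}$, the setting of Remark~\ref{rem:extension-to-richer-languages}. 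Induction on formulas is needed only for the consequence the paper draws from the lemma, as you note. That consequence is that applying these axioms innermost-first translates every $\mathcal{L}_{BPAL}$ formula into an equivalent $\mathcal{L}$ formula.
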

Let $R^*$ be the reflexive transitive closure of a binary relation $R$.
\begin{definition}
    A model $M'=(W',\{R_i'\}_{i\in G},V')$ is a \textit{submodel} of $M=(W,\{R_i\}_{i\in G},V)$ (written $M'\subseteq M$) iff $W'\subseteq W$, $R_i'=R_i\cap (W'\times W')$ for all $i\in G$, and $V'(p)=V(p)\cap W'$ for all $p\in\mathbf{Prop}$. The \textit{generated submodel} of $M$  at $w$ is the submodel $M_w$ with domain $\{v\in W\colon w(\bigcup_{i\in G}R_i)^*v\}$.
\end{definition}
The following lemma states that in K45, the truth values of ``modal atoms'' do not change before and after moving between two states.

\begin{lemma}\label{lem:modal agreement lemma}
    Let $M=(W,\{R_i\}_{i\in G},V)$ be a K45 model. If $wR_iv$, then for any formula $\varphi$ of the form $\Box_i\psi$ or $\Diamond_i\psi$, we have
    \begin{equation*}
        M,w\models\varphi\Longleftrightarrow M,v\models\varphi
    \end{equation*}
\end{lemma}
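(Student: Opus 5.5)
The plan is to reduce to the case $\varphi=\Box_i\psi$ and prove a single implication. Since $M,w\models\Diamond_i\psi$ iff $M,w\not\models\Box_i\lnot\psi$, the $\Diamond_i$ case is the $\Box_i$ case applied to $\lnot\psi$, followed by negating both sides. For the $\Box_i$ case, it suffices to show that $wR_iv$ and $M,w\models\Box_i\psi$ imply $M,v\models\Box_i\psi$, and that $wR_iv$ and $M,w\not\models\Box_i\psi$ imply $M,v\not\models\Box_i\psi$. We use only that $R_i$ is transitive (axiom 4) and Euclidean (axiom 5), which is what being a K45 model means here.

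\emph{Forward direction (transitivity).} Assume $M,w\models\Box_i\psi$ and $wR_iv$. Take any $u$ with $vR_iu$. By transitivity $wR_iu$, so $M,u\models\psi$. Hence $M,v\models\Box_i\psi$.

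\emph{Backward direction (Euclideanness).} Assume $M,w\not\models\Box_i\psi$. Then there is $u$ with $wR_iu$ and $M,u\not\models\psi$. Since $wR_iv$ and $wR_iu$, Euclideanness gives $vR_iu$. Therefore $M,v\not\models\Box_i\psi$. Combining the two directions gives $M,w\models\Box_i\psi\iff M,v\models\Box_i\psi$.

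There is no real obstacle in this argument. The only point to check is that the class of K45 models is exactly the class of frames whose relations $R_i$ are transitive and Euclidean, so that these frame conditions may be used directly. The argument works for arbitrary $\psi$, including formulas in $\mathcal{L}_{BPALC}$ that contain announcement or common-belief operators, because it never examines the internal structure of $\psi$; it only uses truth of $\psi$ at states of the fixed model $M$.
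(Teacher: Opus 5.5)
Your proof is correct and matches the paper's argument: transitivity gives the forward direction for $\Box_i\psi$, and Euclideanness gives the backward direction. The paper handles the $\Diamond_i\psi$ case by noting that the roles of the two frame properties are reversed, which is exactly what your duality reduction via $\Diamond_i\psi\leftrightarrow\lnot\Box_i\lnot\psi$ yields.
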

\begin{proof}
    For $\varphi=\Box_i\psi$, $(\Rightarrow)$ follows from transitivity and $(\Leftarrow)$ follows from Euclideanness. For $\varphi=\Diamond_i\psi$, the order of the properties is reversed.
\end{proof}
From the perspective of model stabilization under believed public announcements, transfinite iteration is a natural extension of finite iteration. Beyond its mathematical role, it may also admit meaningful epistemic or communicative interpretations, a possibility we leave for future investigation (cf.~\citep[Sections~1.2, 2.4, 5.1--5.4, and~6.2]{vanBenthem2007},
where higher-order reasoning about players' rationality motivates
iterative strategy elimination, and transfinite approximation
connects the resulting epistemic processes to fixed points
characterizing game-theoretic solution concepts such as
rationalizability).
\begin{definition}
    Let $M=(W,\{R_i\}_{i\in G},V)$ be a model. For each ordinal $\alpha$, recursively define the model $M|^\alpha\varphi=M^\alpha=(W,\{R_i|^\alpha\varphi\}_{i\in G},V)$ by:
    \begin{enumerate}
        \item $R_i|^0\varphi=R_i$
        \item $R_i|^{\alpha+1}\varphi=(R_i|^\alpha\varphi)|\varphi$
        \item For limit ordinal $\lambda$, $R_i|^\lambda\varphi=\bigcap_{\alpha<\lambda}R_i|^\alpha\varphi$.
    \end{enumerate}
\end{definition}
The following proposition states that after sufficiently many (possibly transfinite) iterations, the model stabilizes and the announced formula becomes common belief. 
\begin{proposition}\label{prop:ordinal-stabilization}
    Let $\varphi$ be a formula. For all models $M$, there is an ordinal $\alpha$ such that $M^\alpha=M^{\alpha+1}$. Furthermore, this $\alpha$ satisfies $M^\alpha,w\models C_G\varphi$ for all $w\in M$.
\end{proposition}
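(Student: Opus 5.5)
The plan has two parts: first show that the relations decrease and hence stabilize, then show that a fixed point makes $\varphi$ common belief.

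\textbf{Stabilization.} I will show by transfinite induction that the family $(R_i|^\alpha\varphi)_\alpha$ is non-increasing: $R_i|^\beta\varphi\subseteq R_i|^\alpha\varphi$ whenever $\alpha\le\beta$. At successor steps this holds because $(R_i|^\alpha\varphi)|\varphi=R_i|^\alpha\varphi\cap(W\times\llbracket\varphi\rrbracket_{M^\alpha})\subseteq R_i|^\alpha\varphi$. At limit steps the relation is an intersection of the earlier ones, so it is contained in each of them.

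Suppose, for contradiction, that $M^\alpha\neq M^{\alpha+1}$ for every ordinal $\alpha$. The domain $W$ and the valuation $V$ never change, so for each $\alpha$ some $i\in G$ satisfies $R_i|^{\alpha+1}\varphi\subsetneq R_i|^\alpha\varphi$. Choose a pair $(u,v)\in R_i|^\alpha\varphi\setminus R_i|^{\alpha+1}\varphi$ for each $\alpha$. By monotonicity, a pair removed at stage $\alpha$ is absent at every later stage, so it cannot be removed again. This gives an injection $\alpha\mapsto(i,u,v)$ from the class of all ordinals into the set $G\times W\times W$, which is impossible by a cardinality argument (Hartogs). Concretely, a stage $\alpha$ with $M^\alpha=M^{\alpha+1}$ already appears below $|G\times W\times W|^+$.

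\textbf{Common belief at the fixed point.} Fix $\alpha$ with $M^\alpha=M^{\alpha+1}$. Then
\[
R_i|^\alpha\varphi=R_i|^\alpha\varphi\cap\bigl(W\times\llbracket\varphi\rrbracket_{M^\alpha}\bigr)
\]
for every $i$. Hence every $R_i|^\alpha\varphi$-successor of any state satisfies $\varphi$ in $M^\alpha$. Any $v$ with $w(\bigcup_iR_i|^\alpha\varphi)^+v$ is reached by a final step $uR_j|^\alpha\varphi\,v$, so $M^\alpha,v\models\varphi$. Therefore $M^\alpha,w\models C_G\varphi$ for every $w\in W$.

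\textbf{Main obstacle.} The delicate step is the size argument showing the sequence cannot decrease forever. I will handle it with the injection into $G\times W\times W$ described above, which avoids any appeal to descending chains along a well-order of pairs. Everything else is a direct unfolding of the definitions; in particular, the truth of $\varphi$ at stage $\alpha$ is evaluated in $M^\alpha$, which is exactly what the update to stage $\alpha+1$ uses.
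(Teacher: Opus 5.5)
Your proposal is correct and follows essentially the same route as the paper. The paper likewise injects the stages below $|E|^+$ (with $E$ the set of labelled arrows) into $E$ by sending each stage to the least-indexed arrow deleted there under a fixed enumeration, and it treats $C_G\varphi$ at the fixed point as immediate. Your monotonicity induction and final-step argument for $C_G\varphi$ spell out what the paper leaves implicit, and choosing the least removed pair under a well-order of $G\times W\times W$, rather than choosing arbitrarily over all ordinals, makes your injection canonical in the same way.
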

\begin{proof}
    Let $E=\{(i,u,v)\colon i\in G,\,(u,v)\in R_i\}$ be the set of all labelled arrows of $M$. Put $\kappa=|E|$, and fix an enumeration without repetition $E=\{e_\xi:\xi<\kappa\}$.
For each ordinal $\beta$, let $E^\beta=\{(i,u,v):i\in G,\ (u,v)\in R_i^\beta\}$ be the labelled arrows remaining in $M^\beta$.
Suppose, for a contradiction, that $M^\beta\ne M^{\beta+1}$ for every
$\beta<\kappa^+$. Then $E^\beta\setminus E^{\beta+1}\ne\varnothing$, so define
\[f:\kappa^+\longrightarrow\kappa,\qquad
f(\beta)=\min\{\xi<\kappa:e_\xi\in E^\beta\setminus E^{\beta+1}\}.
\]
Thus, $f(\beta)$ is the index of the first arrow deleted at stage $\beta$.
To check injectivity, let $\beta<\gamma<\kappa^+$. Since the arrow sets decrease,
\[
e_{f(\gamma)}\in E^\gamma\subseteq E^{\beta+1},
\qquad
e_{f(\beta)}\notin E^{\beta+1}.
\]
Hence $f(\beta)\ne f(\gamma)$. This gives the impossible injection from
$\kappa^+$ into $\kappa$. Hence $M^\alpha=M^{\alpha+1}$ for some $\alpha<|E|^+$. $M^\alpha,w\models C_G\varphi$ is immediate from the transitive closure style definition of $C_G$.
\end{proof}

\begin{definition}[\citep{AgotnesVanDitmarschWang2018}]Let $\varphi$ be a formula.
    \begin{itemize}
        \item $\varphi$ is \textit{successful} iff $\varphi\to[\uparrow\varphi]\varphi$ is valid.
        \item $\varphi$ is \textit{self-refuting} iff $\varphi\to[\uparrow\varphi]\lnot\varphi$ is valid.
        \item $\varphi$ is a \textit{true lie} iff $\lnot\varphi\to[\uparrow\varphi]\varphi$ is valid.
        \item $\varphi$ is an \textit{impossible lie} iff $\lnot\varphi\to[\uparrow\varphi]\lnot\varphi$ is valid.
    \end{itemize}
\end{definition}
\begin{example}\label{ex:p-or-box-p}
    We show that $\varphi:=p\lor\Box p$ (``$p$ is true or the agent believes $p$'') is a true lie in K45. Suppose that $M,w\models\lnot(p\lor\Box p)$. Then, we have $M,w\models \lnot p\land\lnot\Box p$. Take any $v\in R|\varphi(w)$. Then, $M,v\models p\lor\Box p$ but by Lemma~\ref{lem:modal agreement lemma}, we also have $M,v\models\lnot\Box p$. Thus, we must have $M,v\models p$ so $M|\varphi,w\models\Box p$ hence $M|\varphi,w\models\varphi$. In this paper, we call this formula the \emph{self-fulfilling sentence}.
\end{example}

\section{Eventual and strong eventual notions}\label{sec:eventual-strong-eventual-notions}
In this section, we introduce, for each $i,j\in\{0,1\}$, eventual $(E_{ij})$ and strong eventual ($SE_{ij}$) variants and their transfinite versions ($E_{ij}^{Ord}$ and $SE_{ij}^{Ord}$). We also introduce always informativeness when true/false ($AI_i$) and the $S_{ij}$ conditions.
Unless otherwise stated, all models considered below are multi-agent K45 models, and validity and satisfiability are understood relative to this class.
\begin{definition}
For $b\in\{0,1\}$, write $\varphi^1:=\varphi$ and $\varphi^0:=\neg\varphi$. For $i,j\in\{0,1\}$:
When $\varphi$ is fixed, we abbreviate $M|^\alpha\varphi$ by $M^\alpha$.
\begin{align*}
 E_{ij}(\varphi)
 &:\Longleftrightarrow
 \forall M,w\,
 \bigl(M,w\models\varphi^i
 \Rightarrow
 \exists n\ge1\;M^n,w\models\varphi^j\bigr),
 \\
 SE_{ij}(\varphi)
 &:\Longleftrightarrow
 \forall M,w\,
 \bigl(M,w\models\varphi^i
 \Rightarrow
 \exists N\ge1\;\forall n\ge N\;M^n,w\models\varphi^j\bigr),
 \\
 E^{Ord}_{ij}(\varphi)
 &:\Longleftrightarrow
 \forall M,w\,
 \bigl(M,w\models\varphi^i
 \Rightarrow
 \exists\alpha>0\;M^\alpha,w\models\varphi^j\bigr),
 \\
 SE^{Ord}_{ij}(\varphi)
 &:\Longleftrightarrow
 \forall M,w\,
 \bigl(M,w\models\varphi^i
 \Rightarrow
 \exists\alpha>0\;\forall\beta\ge\alpha\;M^\beta,w\models\varphi^j\bigr),\\
 AI_i(\varphi) &:\Longleftrightarrow
 \forall M,w\,\bigl(M,w\models\varphi^i\Rightarrow
 (M|\varphi)_w\ne M_w\bigr).
\end{align*}
For $k<\omega$, write 
\begin{align*}
 S_{ij}(k,\varphi)
 &:\Longleftrightarrow
 \models
 \varphi^i\to[\uparrow\varphi]^k(C_G\varphi\to\varphi^j),\\
 S^{<\omega}_{ij}(\varphi)
 &:\Longleftrightarrow
 \forall k<\omega\;S_{ij}(k,\varphi),\\
 S^{Ord}_{ij}(\varphi)&:\Longleftrightarrow
 \forall M,w,\alpha\,
 \bigl(M,w\models\varphi^i\land M^\alpha,w\models C_G\varphi
 \Rightarrow M^\alpha,w\models\varphi^j\bigr)
\end{align*}
where $\alpha$ ranges over the ordinals. When the formula argument is omitted, the symbols above denote the corresponding classes of formulas.

We use the terms \emph{eventually successful}, \emph{strongly eventually successful}, \emph{transfinitely eventually successful}, \emph{transfinitely strongly eventually successful}, \emph{eventually self-refuting}, \emph{strongly eventually self-refuting}, \emph{transfinitely eventually self-refuting}, \emph{transfinitely strongly eventually self-refuting}, \emph{eventual true lie}, \emph{strong eventual true lie}, \emph{transfinite eventual true lie}, \emph{transfinite strong eventual true lie}, \emph{eventual impossible lie}, \emph{strong eventual impossible lie}, \emph{transfinite eventual impossible lie}, \emph{transfinite strong eventual impossible lie}, \emph{always informative when true}, and \emph{always informative when false} in the obvious manner.

In K45, a formula $\varphi$ cannot be both always informative when true and always informative when false. In fact, according to Theorem~\ref{thm:finite-classification}, 
\begin{equation*}
AI_1(\varphi)\Longleftrightarrow\models C_G\varphi\to\lnot\varphi \qquad\text{and} \qquad AI_0(\varphi)\Longleftrightarrow\models C_G\varphi\to\varphi,
\end{equation*}
 so they together imply $\models\lnot C_G\varphi$. However, the single-state model with empty accessibility relations is indeed K45 and the state vacuously satisfies $C_G\varphi$. Note that the situation differs in KD45 and S5 since empty accessibility relations are not allowed: in such frames, we could possibly say something like ``$\varphi$ is commonly unbelievable/unknowable.''

Also, although it is possible to define ``$\varphi$ is transfinitely always informative when true/false'' as
\begin{equation*}
    \forall M,w,\alpha\quad(M^\alpha,w\models\varphi^i\Rightarrow (M^{\alpha+1})_w\neq (M^\alpha)_w),
\end{equation*}
this immediately reduces to the same notion as ``$\varphi$ is always informative when true/false'' since K45 is closed under updates.

Finally, for a formula $\varphi$, a pointed model $M,w$, and $n<\omega$, put
\[
\sigma_n^{M,w}(\varphi):=
\begin{cases}
1 & \text{if } M|^n\varphi,w\models\varphi,\\
0 & \text{if } M|^n\varphi,w\models\lnot\varphi.
\end{cases}
\]
For readability, we write $\sigma_n^{M,w}$ for
$\sigma_n^{M,w}(\varphi)$.
\end{definition}

\section{Lemmas for the main results}\label{sec:lemmas-for-main-results}
\begin{lemma}\label{lem:finite-eventual-facts}
    Let $\varphi$ be a basic multi-agent epistemic formula, $M$ be a model, and $w\in M$.
    
    \begin{enumerate}[label=(\arabic*)]
        \item $(M|\varphi)_w=M_w$ iff $M,w\models C_G\varphi$.
        \item For all $i,j\in\{0,1\}$, $E_{ij}(\varphi)$ implies $S_{ij}^{<\omega}(\varphi)$. Moreover, $AI_i(\varphi)$, $S_{i,1-i}^{<\omega}(\varphi)$, and $\models C_G\varphi\to\varphi^{1-i}$ are equivalent.
        \item For all $i,j\in\{0,1\}$,
$E_{ij}(\varphi)$ is equivalent to the existence of
$N\geq1$ such that
\[
    \models
    \varphi^i\to
    \bigvee_{n=1}^{N}
    [\uparrow\varphi]^n\varphi^j.
\]
        \item $E_{10}(\varphi)$, $SE_{10}(\varphi)$, and the following condition are equivalent: there is an $n\geq 1$ such that
              \begin{equation*}
                  \models[\uparrow\varphi]^n\left(\neg\varphi\land\bigwedge_{i\in G}\Box_i\bot\right).
              \end{equation*}
    \end{enumerate}
\end{lemma}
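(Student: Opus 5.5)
The plan is to prove the four items in order, each building on the previous ones. Throughout I use three facts: K45 is closed under believed updates (so every $M^n$ is again a K45 model); truth of $\mathcal{L}$-formulas is invariant under generated submodels; and $M^{n+m}=(M^n)^m$.

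\emph{Item (1).} Passing from $M$ to $M|\varphi$ deletes exactly the arrows $(u,v)\in R_i$ with $M,v\not\models\varphi$. First, if $M,w\models C_G\varphi$, then every state reachable from $w$ by a path of length $\ge1$ satisfies $\varphi$. Hence no arrow inside $M_w$ is deleted, and the reachable set from $w$ does not change. So $(M|\varphi)_w=M_w$. Conversely, suppose $M,w\not\models C_G\varphi$. Then some state $u$ in $M_w$ has an arrow $uR_iv$ with $v\not\models\varphi$. That arrow is missing in $M|\varphi$, so either the domain of the generated submodel shrinks or its relation does; either way $(M|\varphi)_w\neq M_w$.

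\emph{Item (2).} By (1), once $M^k,w\models C_G\varphi$ we get $(M^{k+1})_w=(M^k)_w$. Inductively, $(M^n)_w=(M^k)_w$ for all $n\ge k$, so $\sigma^{M,w}_n=\sigma^{M,w}_k$ for $n\ge k$. Now assume $E_{ij}(\varphi)$, $M,w\models\varphi^i$ and $M^k,w\models C_G\varphi$.
\begin{itemize}
\item Let $n\le k$ be the largest index with $\sigma_n=i$; it exists since $\sigma_0=i$.
\item Apply $E_{ij}$ to the K45 model $M^n,w$. This gives $m\ge1$ with $\sigma_{n+m}=j$.
\item If $n+m\ge k$, stability gives $\sigma_k=j$, and we are done.
\item If $n+m<k$ and $j=i$, this contradicts the maximality of $n$.
\item If $j\neq i$, then either $\sigma_k=1-i=j$, which is what we want, or $\sigma_k=i$. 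In the latter case $n=k$, and applying $E_{ij}$ at $M^k$ produces a later value $j\neq i$. This contradicts stability.
\end{itemize}
For the equivalences, (1) shows that $AI_i(\varphi)$ says exactly that $\varphi^i\to\lnot C_G\varphi$ is valid, i.e.\ $\models C_G\varphi\to\varphi^{1-i}$. This is $S_{i,1-i}(0,\varphi)$. The validity also gives $S_{i,1-i}(k,\varphi)$ for every $k$, because each $M^k$ is K45. Conversely, $S^{<\omega}_{i,1-i}$ contains the case $k=0$.

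\emph{Item (3).} The right-to-left direction is immediate. For left-to-right I argue by compactness, which I expect to be the main obstacle. Using Lemma~\ref{lem:bpal-reduction-axioms}, I translate each $[\uparrow\varphi]^n\varphi^j$ into an equivalent basic formula $\theta_n$. Suppose no $N$ works. Then $\{\varphi^i\}\cup\{\lnot\theta_n:n\ge1\}$ is finitely satisfiable in K45. By compactness of K45 (strong completeness via the canonical model, which is K45), it is satisfiable at some $M,w$, and this contradicts $E_{ij}$. The care needed is to justify that compactness holds for the basic K45 language in the multi-agent setting, and that the translation preserves truth on K45 models.

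\emph{Item (4).} $SE_{10}\Rightarrow E_{10}$ is trivial. For condition $\Rightarrow SE_{10}$: at stage $n$ all $R_i(w)$ are empty. Hence $M^n,w\models C_G\varphi$ vacuously. By stability from (1), $\lnot\varphi$ holds at every stage $\ge n$.

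For $E_{10}\Rightarrow$ condition, take the bound $N$ from (3) and set $n=N+1$. Suppose $v\in R_i|^{N+1}\varphi(w)$ in some model. Since an arrow survives to stage $N+1$ only if its target satisfied $\varphi$ at each stage $0,\dots,N$, the state $v$ satisfies $\varphi$ at all those stages. This contradicts the uniform bound applied at $M,v$. So all $\Box_i\bot$ hold at stage $N+1$, and the truth value at $w$ is constant from then on. If $\varphi$ held at $M^{N+1},w$, then applying $E_{10}$ to that K45 pointed model would force a later $\lnot\varphi$, contradicting constancy. Hence $\lnot\varphi$ holds at stage $N+1$.
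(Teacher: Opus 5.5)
Your proposal is correct and follows essentially the same route as the paper: the arrow-deletion argument for (1), stabilization via (1) together with the ``last stage with value $i$'' argument for (2), compactness through the BPAL reduction axioms for (3), and, for (4), the choice $N+1$ with the observation that any arrow surviving to that stage has a target where $\varphi$ held at stages $0,\dots,N$. The differences are only cosmetic: in (2) you use a direct case split where the paper argues by contradiction, and in the converse of (4) you use stability at $w$ where the paper uses global edgelessness.
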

\begin{proof}
    (1) Suppose \(M,w\models C_G\varphi\). Every arrow in \(M_w\) has a target reachable from \(w\) by a nonempty \(G\)-path, and that target therefore satisfies \(\varphi\). Hence no arrow in \(M_w\) is deleted, so \((M|\varphi)_w=M_w\). Conversely, suppose \((M|\varphi)_w=M_w\). Every nonempty \(G\)-path from \(w\) in \(M\) then remains after the update. The target of its final arrow must therefore satisfy \(\varphi\) in \(M\). Thus \(M,w\models C_G\varphi\).

    (2) For $E_{ij}(\varphi)\Rightarrow S_{ij}^{<\omega}(\varphi)$, suppose $E_{ij}(\varphi)$ and suppose toward a contradiction that $S_{ij}(k,\varphi)$ fails. Then, there are $M,w$ such that $M,w\models\varphi^i$ and $M^k,w\models C_G\varphi\land\varphi^{1-j}$ for some $k$. By (1), the value $1-j$ is permanent from that stage on. If $i\neq j$, then $1-j=i$, so applying $E_{ij}$ to $M^k$ yields a contradiction. If $i=j$, let $m<k$ be the last stage before $k$ such that $M|^m,w\models \varphi^{i}$. Then, applying $E_{ii}$ yields a contradiction since $1-j\neq i$. Hence $E_{ij}$ implies every $S_{ij}(k)$. 
    
     Next, we show equivalence among $AI_i(\varphi)$, $S_{i,1-i}^{<\omega}(\varphi)$, and $\models C_G\varphi\to\varphi^{1-i}$. Recall that $AI_i(\varphi)\Longleftrightarrow
 \forall M,w\,\bigl(M,w\models\varphi^i\Rightarrow
 (M|\varphi)_w\ne M_w\bigr)$. Thus, by (1), $AI_i(\varphi)$ is equivalent to $\models\varphi^i\to\neg C_G\varphi$, which is propositionally equivalent to $\models C_G\varphi\to\varphi^{1-i}$. For $S_{i,1-i}^{<\omega}(\varphi)$ to $\models C_G\varphi\to\varphi^{1-i}$, note that $S_{i,1-i}^{<\omega}(\varphi)\Longleftrightarrow
 \forall k<\omega,\, \models
 \varphi^i\to[\uparrow\varphi]^k(C_G\varphi\to\varphi^{1-i})$. So, taking $k=0$ yields $\models\varphi^i\to ( C_G\varphi\to\varphi^{1-i})$, which is propositionally equivalent to $\models C_G\varphi\to\varphi^{1-i}$. Conversely, $\models C_G\varphi\to\varphi^{1-i}$ clearly gives $S_{i,1-i}^{<\omega}(\varphi)$ since K45 is closed under updates.

(3) 
The right-to-left implication is immediate from the definition of
$E_{ij}(\varphi)$.

For the converse, suppose $E_{ij}(\varphi)$ holds and that no such
uniform bound exists. Then, for every $N\geq1$, the formula
\[
    \varphi^i\land
    \bigwedge_{n=1}^{N}
    [\uparrow\varphi]^n\varphi^{1-j}
\]
is K45-satisfiable. Since BPAL is reducible to the basic epistemic
language, compactness of K45 yields a pointed model $M,w$ such that
$M,w\models\varphi^i$ and
$M^n,w\models\varphi^{1-j}$ for every $n\geq1$.
This contradicts $E_{ij}(\varphi)$.

    (4) We first show that $E_{10}(\varphi)$ implies $\models[\uparrow\varphi]^n\left(\neg\varphi\land\bigwedge_{i\in G}\Box_i\bot\right)$. Suppose $E_{10}$. By (3), there is an $N\geq 1$ such that
\begin{equation*}
\models
\varphi\to
\bigvee_{m=1}^{N}[\uparrow\varphi]^m\neg\varphi.
\end{equation*}

We now show $\models
[\uparrow\varphi]^{N+1}
\left(
\neg\varphi\land\bigwedge_{i\in G}\Box_i\bot
\right)$. Take any model $M=(W,\{R_i\}_{i\in G},V)$ and $w\in M$. We first show
$M,w\models[\uparrow\varphi]^{N+1}\bigwedge_{i\in G}\Box_i\bot$. That is, every agent's relation is empty after the $N+1$-th announcement. Suppose, towards a contradiction, that $x(R_i|^{N+1}\varphi)y$ for some $i\in G$ and $x,y\in M$. Then, since an arrow survives from stage $m$ to stage $m+1$ only if its target satisfies $\varphi$ at stage $m$, we get
$M|^m\varphi,y\models\varphi$
for every $0\leq m\leq N$. However, this contradicts the earlier claim, so that we have $\models[\uparrow\varphi]^{N+1}\bigwedge_{i\in G}\Box_i\bot$.

We finally show $M,w\models[\uparrow\varphi]^{N+1}\lnot\varphi$. Suppose toward a contradiction that
$M|^{N+1}\varphi,w\models\varphi$.
Since the model is edgeless, further announcements of $\varphi$ do not change it. Hence $M|^{N+1+m}\varphi,w\models\varphi$ for every $m<\omega$. However, applying $E_{10}(\varphi)$ to the pointed K45 model $M|^{N+1}\varphi,w$ gives a contradiction. Therefore, we have
\begin{equation*}
\models
[\uparrow\varphi]^{N+1}
\left(
\neg\varphi\land\bigwedge_{i\in G}\Box_i\bot
\right).
\end{equation*}

Conversely, suppose that 
$\models
[\uparrow\varphi]^n
\left(
\neg\varphi\land\bigwedge_{i\in G}\Box_i\bot
\right)$ for some $n\geq 1$.
Then after $n$ announcements the model is edgeless and $\varphi$ is false everywhere. Since an edgeless model is unchanged by every further believed public announcement, $\varphi$ remains false at all later stages. Thus $SE_{10}(\varphi)$ holds. Finally, $SE_{10}(\varphi)$ implies $E_{10}(\varphi)$ immediately from the definitions.
\end{proof}
\begin{lemma}\label{lem:ordinal-eventual-facts}
    Let $\varphi$ be a basic multi-agent epistemic formula. For every $i,j\in\{0,1\}$ we have $SE_{ij}\Rightarrow E_{ij}\Rightarrow E_{ij}^{Ord}\Rightarrow S_{ij}^{<\omega}$, $SE_{ij}^{Ord}\Rightarrow E_{ij}^{Ord}$, and $S_{ij}^{Ord}\Leftrightarrow SE_{ij}^{Ord}$. For $i\neq j$, $E_{ij}^{Ord}$, $SE_{ij}^{Ord}$, $S_{ij}^{Ord}$, and $S_{ij}^{<\omega}$ are equivalent.
\end{lemma}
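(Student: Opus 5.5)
The plan is to prove the trivial implications first, then derive the two substantive facts (the bridge $E^{Ord}_{ij}\Rightarrow S^{<\omega}_{ij}$ and the equivalence $S^{Ord}_{ij}\Leftrightarrow SE^{Ord}_{ij}$) from a stabilization lemma, and finally close a cycle for $i\neq j$ using Lemma~\ref{lem:finite-eventual-facts}(2).

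\textbf{Trivial implications.} $SE_{ij}\Rightarrow E_{ij}$ follows by taking $n=N$. $E_{ij}\Rightarrow E^{Ord}_{ij}$ holds because every $n\ge1$ is an ordinal $\alpha>0$. $SE^{Ord}_{ij}\Rightarrow E^{Ord}_{ij}$ follows by taking $\beta=\alpha$.

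\textbf{Two auxiliary facts.} First, $(M^\alpha)|^\beta\varphi=M^{\alpha+\beta}$. This is proved by induction on $\beta$; at limit stages, both sides are intersections over cofinal families of the same decreasing chain.

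Second, a \emph{stabilization} fact:
\[
M^\alpha,w\models C_G\varphi\ \Longrightarrow\ (M^\beta)_w=(M^\alpha)_w \text{ for all } \beta\ge\alpha,
\]
and hence the truth value of $\varphi$ at $w$ is constant for $\beta\ge\alpha$.
\begin{itemize}
\item Successor step: truth of $\varphi$ is invariant under generated submodels, so $(N|\varphi)_w=(N_w)|\varphi$. Lemma~\ref{lem:finite-eventual-facts}(1) then gives $(M^{\beta+1})_w=(M^\beta)_w$, and $C_G\varphi$ at $w$ is preserved because the generated submodel is unchanged.
\item Limit step: the arrows of $(M^\lambda)_w$ are intersections of a chain that is constant on the arrows reachable from $w$, so the generated submodel is again unchanged.
\end{itemize}
We also need that K45 is closed under these limit stages. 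This holds because a decreasing intersection of transitive (resp. Euclidean) relations is transitive (resp. Euclidean): if $xRy$ and $xRz$ hold at every stage, then $yRz$ holds at every stage. This stabilization fact is the main technical point; everything else is bookkeeping.

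\textbf{$E^{Ord}_{ij}\Rightarrow S^{<\omega}_{ij}$.} This mimics the proof of Lemma~\ref{lem:finite-eventual-facts}(2). Suppose $S_{ij}(k,\varphi)$ fails, so $M,w\models\varphi^i$ and $M^k,w\models C_G\varphi\land\varphi^{1-j}$. By stabilization, $\varphi^{1-j}$ holds at $w$ in $M^\beta$ for every $\beta\ge k$.
\begin{itemize}
\item If $i\neq j$, then $\varphi^{1-j}=\varphi^i$. Applying $E^{Ord}_{ij}$ to $M^k,w$ and using $(M^k)|^\alpha\varphi=M^{k+\alpha}$ gives a contradiction.
\item If $i=j$, let $m\le k$ be the last finite stage with $\varphi^i$ true at $w$; such an $m$ exists since $m=0$ qualifies. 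Every stage $m+\alpha$ with $\alpha>0$ then has $\varphi^{1-i}$: stages up to $k$ by the choice of $m$, and later stages by stabilization. Applying $E^{Ord}_{ii}$ to $M^m,w$ gives a contradiction.
\end{itemize}

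\textbf{$S^{Ord}_{ij}\Leftrightarrow SE^{Ord}_{ij}$.}
\begin{itemize}
\item ($\Rightarrow$) Let $M,w\models\varphi^i$. By Proposition~\ref{prop:ordinal-stabilization}, pick $\alpha$ with $M^\alpha=M^{\alpha+1}$. By induction, $M^\beta=M^\alpha$ for all $\beta\ge\alpha$, and $C_G\varphi$ holds at $w$ in each such $M^\beta$. So $S^{Ord}_{ij}$ yields $\varphi^j$ for all $\beta\ge\max(\alpha,1)$.
\item ($\Leftarrow$) Let $M,w\models\varphi^i$ and $M^\alpha,w\models C_G\varphi$. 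By stabilization, the value of $\varphi$ at $w$ is constant from $\alpha$ on. $SE^{Ord}_{ij}$ provides $\gamma$ with $\varphi^j$ at every stage $\beta\ge\gamma$. Taking $\beta=\max(\alpha,\gamma)$ shows $\varphi^j$ holds at stage $\alpha$.
\end{itemize}

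\textbf{The case $i\neq j$.} Lemma~\ref{lem:finite-eventual-facts}(2) turns $S^{<\omega}_{i,1-i}$ into $\models C_G\varphi\to\varphi^{1-i}$. Since each $M^\alpha$ is a K45 model by the closure remark above, this validity directly yields $S^{Ord}_{i,1-i}$. This gives the cycle
\[
S^{Ord}_{ij}\Leftrightarrow SE^{Ord}_{ij}\Rightarrow E^{Ord}_{ij}\Rightarrow S^{<\omega}_{ij}\Rightarrow S^{Ord}_{ij},
\]
so all four conditions are equivalent.
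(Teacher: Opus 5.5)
Your proof is correct and follows essentially the same route as the paper. It uses the trivial inclusions, the ``last finite stage with $\varphi^i$'' argument for $E^{Ord}_{ij}\Rightarrow S^{<\omega}_{ij}$, the fixed point from Proposition~\ref{prop:ordinal-stabilization} for $S^{Ord}_{ij}\Leftrightarrow SE^{Ord}_{ij}$, and Lemma~\ref{lem:finite-eventual-facts}(2) for $i\neq j$; you close that cycle through $S^{Ord}_{ij}$ rather than going to $SE^{Ord}_{ij}$ directly, which is immaterial. You also usefully make explicit three points the paper uses tacitly: the tail identity $(M^\alpha)|^\beta\varphi=M^{\alpha+\beta}$, the stabilization of the generated submodel through limit stages, and K45-closure at limits.

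One small slip: the aside $(N|\varphi)_w=(N_w)|\varphi$ is false in general. A state reachable from $w$ in $N$ may become unreachable after the update, yet it stays in the domain of $(N_w)|\varphi$. You do not need this identity, because Lemma~\ref{lem:finite-eventual-facts}(1) applies to $M^\beta$ directly, and under $C_G\varphi$ at $w$ the two models coincide anyway.
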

\begin{proof}
Most of the displayed implications follow immediately from the definitions:
$SE_{ij}\Rightarrow E_{ij}\Rightarrow E_{ij}^{Ord}$ and
$SE_{ij}^{Ord}\Rightarrow E_{ij}^{Ord}$. Thus it remains to prove three facts:
$E_{ij}^{Ord}\Rightarrow S_{ij}^{<\omega}$,
$S_{ij}^{Ord}\Leftrightarrow SE_{ij}^{Ord}$, and, when $i\neq j$,
$S_{ij}^{<\omega}\Rightarrow SE_{ij}^{Ord}$.

\medskip\noindent\textbf{Proof of $E_{ij}^{Ord}\Rightarrow S_{ij}^{<\omega}$.}
We first prove $E_{ij}^{Ord}\Rightarrow S_{ij}^{<\omega}$. Fix $k<\omega$ and suppose, towards a contradiction, that $S_{ij}(k,\varphi)$ fails. Then there are a K45 model $M$ and a state $w$ such that
    $M,w\models\varphi^i$ and
    $M^k,w\models C_G\varphi\land\varphi^{1-j}$.
By Lemma~\ref{lem:finite-eventual-facts}(1), the generated model at $w$ is fixed from stage $k$ onward. Hence
\begin{equation*}
    M^\alpha,w\models\varphi^{1-j}
    \qquad\text{for every }\alpha\geq k.
\end{equation*}

Suppose first that $i\neq j$. Since $i,j\in\{0,1\}$, we have $i=1-j$, so $M^k,w\models\varphi^i$. Applying $E_{ij}^{Ord}(\varphi)$ to the pointed tail model $M^k,w$ requires $\varphi^j$ to hold at some later ordinal stage. This is impossible because $\varphi^{1-j}$ is permanent from stage $k$ onward.

Now suppose that $i=j$. Since $\varphi^i$ holds at stage $0$ and $\varphi^{1-i}$ holds at stage $k$, we have $k>0$. Let $m<k$ be the last finite stage before $k$ at which
$M^m,w\models\varphi^i$. Then
\begin{equation*}
    M^n,w\models\varphi^{1-i}
    \qquad\text{for every }m<n\leq k.
\end{equation*}
Moreover, $\varphi^{1-i}$ is permanent from stage $k$ onward by the preceding argument. Thus the tail beginning at $M^m,w$ never returns to the value $i$ at any positive ordinal stage, contradicting $E_{ii}^{Ord}(\varphi)$. Therefore $S_{ij}(k,\varphi)$ holds for every $k<\omega$, and hence $E_{ij}^{Ord}(\varphi)\Rightarrow S_{ij}^{<\omega}(\varphi)$.

\medskip\noindent\textbf{Proof of $S_{ij}^{Ord}\Leftrightarrow SE_{ij}^{Ord}$.}
We next prove $S_{ij}^{Ord}\Leftrightarrow SE_{ij}^{Ord}$. Suppose first that $S_{ij}^{Ord}(\varphi)$ holds, and let $M,w\models\varphi^i$. By Proposition~\ref{prop:ordinal-stabilization}, there is an ordinal $\lambda$ such that $M^\lambda=M^{\lambda+1}$ and
$M^\lambda,w\models C_G\varphi$. Hence, by $S_{ij}^{Ord}(\varphi)$, we have $M^\lambda,w\models\varphi^j$.
Since $M^\lambda$ is a fixed point, this truth value is permanent at all later stages. Thus $SE_{ij}^{Ord}(\varphi)$ holds. If $\lambda=0$, we may use stage $1$ as the positive witnessing ordinal.

Conversely, suppose that $SE_{ij}^{Ord}(\varphi)$ holds. Let $M,w\models\varphi^i$, and suppose that for some ordinal $\alpha$,
$M^\alpha,w\models C_G\varphi$. By Lemma~\ref{lem:finite-eventual-facts}(1), the generated model at $w$ is fixed from stage $\alpha$ onward. Hence the truth value of $\varphi$ at $w$ is also permanent from stage $\alpha$ onward. If
$M^\alpha,w\models\varphi^{1-j}$, then
    $M^\beta,w\models\varphi^{1-j}$
    for every $\beta\geq\alpha$,
which contradicts $SE_{ij}^{Ord}(\varphi)$. Therefore
$M^\alpha,w\models\varphi^j$, and so $S_{ij}^{Ord}(\varphi)$ holds.

Thus, we have proved
    $S_{ij}^{Ord}(\varphi)\Longleftrightarrow SE_{ij}^{Ord}(\varphi)$.

\medskip\noindent\textbf{Proof of the equivalence when $i\neq j$.}
Finally, suppose that $i\neq j$. Then $j=1-i$, and Lemma~\ref{lem:finite-eventual-facts}(2) gives
\begin{equation*}
    S_{ij}^{<\omega}(\varphi)
    \Longleftrightarrow
    \models C_G\varphi\to\varphi^j.
\end{equation*}
Assume $S_{ij}^{<\omega}(\varphi)$ and let $M,w\models\varphi^i$. Choose a fixed stage $\lambda$ by Proposition~\ref{prop:ordinal-stabilization}. Since
$M^\lambda,w\models C_G\varphi$, the displayed validity gives
$M^\lambda,w\models\varphi^j$. As the model is fixed from stage $\lambda$ onward, $\varphi^j$ remains true forever. Hence $SE_{ij}^{Ord}(\varphi)$ holds.

Combining this implication with
$SE_{ij}^{Ord}\Rightarrow E_{ij}^{Ord}\Rightarrow S_{ij}^{<\omega}$ and with
$S_{ij}^{Ord}\Leftrightarrow SE_{ij}^{Ord}$, we obtain, for $i\neq j$, $E_{ij}^{Ord}
    \Longleftrightarrow
    SE_{ij}^{Ord}
    \Longleftrightarrow
    S_{ij}^{Ord}
    \Longleftrightarrow
    S_{ij}^{<\omega}$.
\end{proof}
\begin{lemma}\label{lem:single-agent_collapse}
    In single-agent K45, $M_w|^n\varphi=M_w|^{n+1}\varphi$ for some $n\geq 1$. Consequently, $SE_{ij}$, $E_{ij}$, $SE_{ij}^{Ord}$, $E_{ij}^{Ord}$, $S_{ij}^{Ord}$, and $S_{ij}^{<\omega}$ are equivalent for every $i,j\in\{0,1\}$.
\end{lemma}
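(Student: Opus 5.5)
The plan is to analyse the shape of a generated submodel in single-agent K45 and show that iterated believed announcements can only cut it finitely often. After that, the collapse of the notions follows from Lemma~\ref{lem:ordinal-eventual-facts} together with Lemma~\ref{lem:finite-eventual-facts}(1).

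\textbf{Shape of the generated submodel.} Write $R$ for the single relation and $C:=R(w)$. By transitivity and Euclideanness, every $v\in C$ satisfies $R(v)=C$. Hence $M_w$ has domain $\{w\}\cup C$, and every state in it has successor set $C$ (with $R(w)=C$ by definition). One believed announcement keeps only arrows into $\llbracket\varphi\rrbracket$, so the shape is preserved. By induction on $n\ge1$, every state $u$ of $M_w|^n\varphi$ has the same successor set $C_n$, where
\[
C_1=C\cap\llbracket\varphi\rrbracket_{M_w},\qquad C_{n+1}=C_n\cap\llbracket\varphi\rrbracket_{M_w|^n\varphi}.
\]
The states not in $C$ (possibly $w$) are never deleted as targets, so they do not matter. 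Since truth at $w$ depends only on the generated submodel, it suffices to work with $M_w$.

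\textbf{Finitely many distinct cuts.} Write $\varphi$ as a Boolean combination of proposition letters and finitely many modal atoms $\Box\psi_1,\dots,\Box\psi_k$ (its top-level boxed subformulas). At any stage $n\ge1$, all states have the same successor set $C_n$. Hence all states agree on every $\Box\psi_l$; this is a strengthening of Lemma~\ref{lem:modal agreement lemma} that holds here because the successor sets are literally equal. Let $b_n\in\{0,1\}^k$ be this common profile of modal atoms. Then
\[
\llbracket\varphi\rrbracket_{M_w|^n\varphi}=T_{b_n}:=\{u:\ u\text{ satisfies the propositional reduct of }\varphi\text{ under profile }b_n\}.
\]
So $C_{n+1}=C_1\cap T_{b_1}\cap\dots\cap T_{b_n}$, and each $T_b$ is one of at most $2^k$ fixed sets. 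The sequence $(C_n)_{n\ge1}$ is therefore decreasing and takes at most $2^{2^k}$ values, so it is eventually constant. At the first repetition $C_n=C_{n+1}$ we obtain $M_w|^n\varphi=M_w|^{n+1}\varphi$ with $n\ge1$. (This includes the case $C_n=\varnothing$.)

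I expect this step to be the main obstacle. The point is that $C$ may be infinite, so a naive decreasing-sets argument does not terminate. What rescues it is that at every stage $\ge1$ the cut is one of finitely many sets $T_b$ determined by the modal profile. The careful point is checking that stage $0$, where $w$'s modal profile may differ, is harmless. It is, because $C_1$ is simply the starting set.

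\textbf{Collapse of the notions.} By Lemma~\ref{lem:ordinal-eventual-facts}, $SE_{ij}\Rightarrow E_{ij}\Rightarrow E^{Ord}_{ij}\Rightarrow S^{<\omega}_{ij}$, $SE^{Ord}_{ij}\Rightarrow E^{Ord}_{ij}$, and $S^{Ord}_{ij}\Leftrightarrow SE^{Ord}_{ij}$. It thus suffices to prove $S^{<\omega}_{ij}\Rightarrow SE_{ij}$ and $SE_{ij}\Rightarrow SE^{Ord}_{ij}$.

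For the first implication, let $M,w\models\varphi^i$ and take the finite $n\ge1$ above. Then $(M^n|\varphi)_w=(M^n)_w$, so Lemma~\ref{lem:finite-eventual-facts}(1) gives $M^n,w\models C_G\varphi$. By $S_{ij}(n,\varphi)$ we get $M^n,w\models\varphi^j$. The generated model at $w$ is fixed from stage $n$ on, at all finite stages and, through the limit intersections, at all ordinal stages. Hence $\varphi^j$ holds forever and $SE_{ij}$ holds with witness $N=n$.

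The same permanence argument gives $SE_{ij}\Rightarrow SE^{Ord}_{ij}$, taking the finite witness $N$ as the ordinal $\alpha$. This closes the cycle of implications, so all six notions are equivalent.
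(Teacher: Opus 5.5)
Your proposal is correct and follows the paper's proof. In the single-agent K45 generated submodel all states share one successor set at every stage, so only finitely many cuts are possible and the model stabilizes at a finite $n\geq 1$; after that, $S_{ij}(n,\varphi)$ together with Lemma~\ref{lem:finite-eventual-facts}(1) and Lemma~\ref{lem:ordinal-eventual-facts} gives the collapse. The only difference is the counting device: the paper shows that states with the same valuation on the letters of $\varphi$ agree on $\varphi$ at every stage, so each strict cut removes a whole valuation type (at most $2^{|P|}$ cuts), whereas you index each stage's cut by the common profile of the top-level boxed subformulas (at most $2^k$ sets $T_b$). Incidentally, your worry about stage $0$ is unnecessary, since $R(w)=C$ already gives $w$ the same profile as every other state of $M_w$.
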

\begin{proof}
Let $M=(W,R,V),w$ be a pointed K45 model, and let $P$ be the finite set of proposition letters occurring in $\varphi$. Put  $S_n:=(R|^n\varphi)(w)$.

We first show that the generated model at $w$ reaches a fixed point after finitely many announcements. In a transitive and Euclidean frame, $xRy$ implies $R(x)=R(y)$ and $yRy$. Hence, for every $y\in S_0$, we have $R(y)=S_0$. Since believed public announcement only restricts the targets of arrows, induction on $n$ gives $R^{n}(y)=S_n$ for every $y\in S_0$.

Now let $x,y\in S_0$ have the same $P$-valuation. We claim that, for every $n<\omega$, $M^n,x\models\varphi$ iff $M^n,y\models\varphi$. More generally, the same holds for every subformula of $\varphi$. This follows by induction on the construction of formulas. The propositional and Boolean cases are immediate. For the modal case, $x$ and $y$ have the same successor set $S_n$, so, for example, $M^n,x\models\Box\psi$ iff every state in $S_n$ satisfies $\psi$, iff $M^n,y\models\Box\psi$.

Therefore, whenever $S_{n+1}\subsetneq S_n$, at least one entire $P$-valuation type disappears from the successor set. Indeed, if $y\in S_n\setminus S_{n+1}$, then $M^n,y\models\neg\varphi$, and every state in $S_n$ with the same $P$-valuation as $y$ also satisfies $\neg\varphi$ and is deleted as a target at the next update. Since there are at most $2^{|P|}$ $P$-valuation types, the successor set can strictly decrease only finitely many times. Hence there is some $k<\omega$ such that $S_k=S_{k+1}$. It follows that the generated model at $w$ is fixed from stage $k$ onward. By Lemma~\ref{lem:finite-eventual-facts}(1), $M^k,w\models C_G\varphi$.

We now prove the collapse of the six notions. Suppose first that $S_{ij}^{<\omega}(\varphi)$ holds and that $M,w\models\varphi^i$. Choose a finite fixed stage $k$ as above. Since $S_{ij}(k,\varphi)$ holds, we have $M^k,w\models C_G\varphi\to\varphi^j$. As $M^k,w\models C_G\varphi$, it follows that $M^k,w\models\varphi^j$. The generated model is fixed from stage $k$ onward, so $M^n,w\models\varphi^j$ for every $n\geq k$, and indeed $M^\alpha,w\models\varphi^j$ for every ordinal $\alpha\geq k$. Thus $S_{ij}^{<\omega}(\varphi)$ implies both $SE_{ij}(\varphi)$ and $SE_{ij}^{Ord}(\varphi)$.

Combining this with the general implications $SE_{ij}\Rightarrow E_{ij}\Rightarrow E_{ij}^{Ord}\Rightarrow S_{ij}^{<\omega}$, with $SE_{ij}^{Ord}\Rightarrow E_{ij}^{Ord}$, and with $S_{ij}^{Ord}\Leftrightarrow SE_{ij}^{Ord}$ (see Lemma~\ref{lem:ordinal-eventual-facts}), we obtain
\begin{equation*}
SE_{ij}\Longleftrightarrow E_{ij}
\Longleftrightarrow SE_{ij}^{Ord}
\Longleftrightarrow E_{ij}^{Ord}
\Longleftrightarrow S_{ij}^{Ord}
\Longleftrightarrow S_{ij}^{<\omega}.
\end{equation*}
\end{proof}
\begin{lemma}\label{lem:two-agent-reversal}
    If $|G|\geq2$, then $S_{10}^{<\omega}\not\Rightarrow E_{10}$ and $S_{01}^{<\omega}\not\Rightarrow E_{01}$.
\end{lemma}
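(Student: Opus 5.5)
The plan is to treat both non-implications with one two-agent formula and its dual, using earlier results to reduce each side to something checkable. Fix two distinct agents $a,b\in G$.

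\textbf{The formulas and the $S$-sides.} For the first claim I take
\[
\varphi:=p\land\neg\Box_a\Box_b p,
\]
a two-agent Moore sentence. By Lemma~\ref{lem:finite-eventual-facts}(2) (with $i=1$), $S_{10}^{<\omega}(\varphi)$ is equivalent to $\models C_G\varphi\to\neg\varphi$. This holds: if $M,w\models C_G\varphi$, then every state reachable by an $a$-step followed by a $b$-step is reachable by a nonempty $G$-path. Every such state satisfies $\varphi$, hence $p$, so $M,w\models\Box_a\Box_b p$ and therefore $\neg\varphi$. This is vacuously true when there are no successors.

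For the second claim I take the self-fulfilling analogue
\[
\psi:=p\lor\Box_a\Box_b p.
\]
Here $S_{01}^{<\omega}(\psi)$ is equivalent to $\models C_G\psi\to\psi$ (Lemma~\ref{lem:finite-eventual-facts}(2) with $i=0$). The same argument works: under $C_G\psi$, every $a$-then-$b$ successor satisfies $\psi$. I need to check that these successors satisfy $p$ and not merely $\Box_a\Box_b p$. If this fails, I would instead use $p\lor\Box_a\Box_b p\lor\Box_b\Box_a p$, or derive $\psi$ from $C_G\psi$ directly by propagating along paths.

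\textbf{The failure of $E$.} To refute $E_{10}(\varphi)$ I use Lemma~\ref{lem:finite-eventual-facts}(3): $E_{10}(\varphi)$ holds iff there is a uniform bound $N$. So it suffices to produce, for each $N$, a pointed K45 model $M_N,w$ with $M_N^n,w\models\varphi$ for all $0\le n\le N$. I do not need a single model where $\varphi$ stays true forever.

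The construction is a chain $x_0=w,x_1,\dots,x_{2L}$. Agent $a$ links $x_{2k}$ to the singleton cluster $\{x_{2k+1}\}$, and agent $b$ links $x_{2k+1}$ to the cluster $\{x_{2k+2}\}$. Both relations are arranged to be transitive and Euclidean: each successor set is a reflexive cluster whose members point to that same cluster. The letter $p$ is true at all $x_k$ except the last, $x_{2L}$.

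\textbf{Tracking the announcements (main obstacle).} The key step is to verify the dynamics. Each announcement of $\varphi$ deletes arrows into states where $\varphi$ currently fails, and this failure travels backward along the chain by a bounded number of positions per round. Consequently $w$ keeps $p$ and keeps $\neg\Box_a\Box_b p$ for roughly $L$ rounds. Taking $L$ large relative to $N$ then gives the required model.

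I expect this to be the main obstacle: I must check, stage by stage, how the K45 clusters and the reflexive loops inside them interact with deletions. Deleting the arrow into a $\neg\varphi$ target can make $\Box_a\Box_b p$ vacuously true at an earlier state, which is precisely the mechanism that propagates falsity backward. I will prove by induction on $n$ that after $n$ announcements the states $x_k$ with $k\le 2L-cn$ (for a small constant $c$) still satisfy $\varphi$ and retain their original arrows.

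For $\psi$, I use the same chain with $p$ false everywhere except the last state. The dual induction shows that $w$ stays $\neg\psi$ for about $L$ rounds, so no uniform bound exists, and Lemma~\ref{lem:finite-eventual-facts}(3) gives $\neg E_{01}(\psi)$.
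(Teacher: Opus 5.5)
Your framework is sound: the $S^{<\omega}$ side can be discharged through Lemma~\ref{lem:finite-eventual-facts}(2), and by Lemma~\ref{lem:finite-eventual-facts}(3) a family of finite models defeating every bound $N$ would refute $E$. Your check of $\models C_G\varphi\to\neg\varphi$ is also correct. The gap is that neither witness formula works.

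\textbf{The $10$ witness.} $\varphi=p\land\neg\Box_a\Box_b p$ is self-refuting, hence in $E_{10}$ (indeed in $SE_{10}$). Every arrow of $M|\varphi$ has a target that satisfied $\varphi$, hence $p$, in $M$. Valuations never change, so $\Box_b p$ holds at every state of $M|\varphi$, hence so does $\Box_a\Box_b p$, and $\models[\uparrow\varphi]\neg\varphi$. The backward propagation you describe cannot occur. The only witnesses for $\neg\Box_a\Box_b p$ are $\neg p$-states; these are $\neg\varphi$-states, so they lose all incoming arrows in the very first round. The same holds for every $p\land\neg\Box_{i_1}\cdots\Box_{i_k}p$ with $k\ge1$. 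Your chain also does not start correctly: for $L\ge 2$ the only $a$-then-$b$ successor of $w$ is $x_2\models p$, so $\varphi$ is already false at $w$.

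\textbf{The $01$ witness.} $\psi=p\lor\Box_a\Box_b p$ fails $S_{01}^{<\omega}$, exactly at the point you flagged. Take $R_a=\{(w,y),(y,y)\}$, $R_b=\{(y,z),(z,z)\}$, and $p$ true only at $y$. Then $z$ has no $a$-successor, so $z\models\Box_a\Box_b p$ vacuously, and $M,w\models C_G\psi\land\neg\psi$. Your fallback with the extra disjunct $\Box_b\Box_a p$ fails the same way: add $(w,u),(u,u)$ to $R_b$ and $(u,v),(v,v)$ to $R_a$, with $p$ at $u$ but not at $v$. The composite box destroys the K45 fact such arguments need: a target $y$ of an $i$-arrow satisfies $yR_iy$, so $\neg\Box_i\chi$ transfers to $y$. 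If you switch to $p\lor(\Box_a p\land\Box_b p)$, the $S$ side holds, but the formula becomes a true lie, again because $p$ is static.

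\textbf{The missing idea.} The decisive subformula must itself be modal and must be eroded only gradually by the updates. The paper uses $D=L\top$ with $L\chi=(s\land\Diamond_a(\neg s\land\chi))\lor(\neg s\land\Diamond_b(s\land\chi))$; along alternating chains, $D$ dies one position per announcement. It then takes $\theta_{10}=D\land L\neg D$ and $\theta_{01}=(\neg r\land D)\lor(\Box_a(\neg r\land D)\land\Box_b(\neg r\land D))$, evaluated at a root with children of every finite rank. With those formulas your finite-model route would also work, by truncating the rank-peeling model at root rank $N+1$. With your formulas, neither separation is obtained.
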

\begin{proof}
We construct two basic formulas $\theta_{10}$ and $\theta_{01}$ such that
$\theta_{10}\in S_{10}^{<\omega}\setminus E_{10}$ and
$\theta_{01}\in S_{01}^{<\omega}\setminus E_{01}$.

Choose distinct agents $a,b\in G$ and atoms $r,s$. For any formula $\chi$, define
\begin{equation*}
L\chi:=
(s\land\Diamond_a(\neg s\land\chi))
\lor
(\neg s\land\Diamond_b(s\land\chi)),
\qquad
D:=L\top.
\end{equation*}
Put $\theta_{10}:=D\land L\neg D$. Also put $\psi:=\neg r\land D$ and
$\theta_{01}:=\psi\lor(\Box_a\psi\land\Box_b\psi)$.

\medskip\noindent
\textbf{Proof of $S_{10}^{<\omega}(\theta_{10})$ and $S_{01}^{<\omega}(\theta_{01})$}
We first prove $S_{10}^{<\omega}(\theta_{10})$ and $S_{01}^{<\omega}(\theta_{01})$. Take any pointed model $M,x$. Suppose that $M,x\models\theta_{10}$. Then $L\neg D$ holds at $x$, so there is an $a$- or $b$-successor $y$ such that $M,y\models\lnot D$. Hence $M,y\models\lnot\theta_{10}$. Thus
$\models\theta_{10}\to\neg C_G\theta_{10}$, or equivalently,
$\models C_G\theta_{10}\to\neg\theta_{10}$. By Lemma~\ref{lem:finite-eventual-facts}(2),
$S_{10}^{<\omega}(\theta_{10})$.

For $\theta_{01}$, suppose that $M,x\models\lnot\theta_{01}$. Then $M,x\models\lnot\psi$ and $M,x\models\lnot\Box_i\psi$ for some $i\in\{a,b\}$. Choose $y$ such that $xR_i y$ and $M,y\models\lnot\psi$. Since $R_i$ is transitive and Euclidean, $xR_i y$ implies $yR_i y$. Therefore $M,y\models\lnot\Box_i\psi$, and hence $M,y\models\lnot\theta_{01}$. Thus $\models\neg\theta_{01}\to\neg C_G\theta_{01}$, or equivalently,
$\models C_G\theta_{01}\to\theta_{01}$. Again by Lemma~\ref{lem:finite-eventual-facts}(2),
$S_{01}^{<\omega}(\theta_{01})$.

\medskip\noindent
\textbf{Proof of $\theta_{10}\notin E_{10}$ and $\theta_{01}\notin E_{01}$}

It remains to show that $\theta_{10}\notin E_{10}$ and $\theta_{01}\notin E_{01}$. We use one common model. Let $W$ consist of the empty sequence $\epsilon$ together with all nonempty finite strictly decreasing sequences of natural numbers. Define the rank by $\rho(\epsilon)=\omega$ and, for a nonempty sequence $\sigma$, let $\rho(\sigma)$ be its last entry. For each $\sigma$, let
\begin{equation*}
C_\sigma:=\{\sigma^\frown q:q<\rho(\sigma)\}
\end{equation*}
where $\sigma^\frown q$ denotes the sequence obtained by appending $q$ to $\sigma$. Thus $C_\sigma$ is the set of children of $\sigma$. For example, $C_\epsilon=\{\langle 0\rangle, \langle 1\rangle,\ldots\}$ and $C_{\langle 2\rangle}=\{\langle 2,0\rangle,\langle 2,1\rangle\}$. Let $s$ hold exactly at sequences of even length and let $r$ hold only at $\epsilon$. Define
\begin{equation*}
R_a:=
\bigcup_{|\sigma|\text{ even}}
(\{\sigma\}\cup C_\sigma)\times C_\sigma,
\qquad
R_b:=
\bigcup_{|\sigma|\text{ odd}}
(\{\sigma\}\cup C_\sigma)\times C_\sigma,
\end{equation*}
and interpret every other agent relation as empty. For each agent, these relations are disjoint unions of blocks of the form $X\times C$ with $C\subseteq X$. Each such block is transitive and Euclidean, so the resulting model is K45 (see Figure~\ref{fig:rank_peeling_model}).

In this model, $L\chi$ simply says that some child one level below satisfies $\chi$. In particular, $D=L\top$ holds exactly when at least one child remains accessible. Thus $D$ detects whether the current block still has a surviving target. 

For a sequence $\sigma$, write $\iota(\sigma)=a$ if $|\sigma|$ is even and $\iota(\sigma)=b$ if $|\sigma|$ is odd. Also put
$C_\sigma^n:=\{\sigma^\frown q:n\leq q<\rho(\sigma)\}$.

We first analyze iteration by $\theta_{10}$, writing $M_{10}^n:=M|^n\theta_{10}$. We claim that, for every $n<\omega$, every $\sigma$, and every $x\in\{\sigma\}\cup C_\sigma$,
\begin{equation*}
R_{\iota(\sigma)}^{M_{10}^n}(x)=C_\sigma^n.
\end{equation*}
At the same time, for every finite-rank $\sigma$,
\begin{equation*}
M_{10}^n,\sigma\models D
\quad\Longleftrightarrow\quad
M_{10}^n,\sigma\models\theta_{10}
\quad\Longleftrightarrow\quad
n<\rho(\sigma).
\end{equation*}
The statement $R_{\iota(\sigma)}^{M_{10}^n}(x)=C_\sigma^n$ means that, at each finite stage, every block is peeled from left to right according to rank: after $n$ announcements, precisely the targets of rank at least $n$ remain accessible. These statements follow simultaneously by induction on $n$. The relation statement is immediate for $n=0$. Suppose it holds at stage $n$. Since the $\iota(\sigma)$-successors of $\sigma$ are precisely the members of $C_\sigma^n$, $D$ holds at $\sigma$ exactly when $n<\rho(\sigma)$. If $n<\rho(\sigma)$, the child $\sigma^\frown n$ is still accessible. Its rank is $n$, so it has no child of rank at least $n$ and hence falsifies $D$ at stage $n$. Thus it witnesses $L\neg D$, and $\theta_{10}$ holds at $\sigma$. Conversely, if $n\geq\rho(\sigma)$, then $D$, and therefore $\theta_{10}$, is false at $\sigma$. Consequently a child $\sigma^\frown q$ survives as a target from stage $n$ to stage $n+1$ exactly when $n<q$, giving $C_\sigma^{n+1}$. This completes the induction.

Now consider the root $\epsilon$. Since $\rho(\epsilon)=\omega$, at every finite stage $n$ the child $\langle n\rangle$ is still accessible and witnesses $L\neg D$. Hence
\begin{equation*}
M_{10}^n,\epsilon\models\theta_{10}
\qquad\text{for every }n<\omega.
\end{equation*}
In particular, $\theta_{10}$ is initially true at $\epsilon$ but never becomes false at any positive finite stage. Therefore $\theta_{10}\notin E_{10}$. Notice also that every arrow target has finite rank, so all arrows disappear at stage $\omega$; the root trajectory is
\begin{equation*}
1,1,1,\ldots,0_\omega,0_{\omega+1},\ldots.
\end{equation*}

We next analyze iteration by $\theta_{01}$, writing $M_{01}^n:=M|^n\theta_{01}$. The key observation is that $\theta_{01}$ and $\psi$ agree at every current arrow target. Indeed, if $y$ is a target of an $i$-arrow and $\psi$ is false at $y$, then K45 gives $yR_i y$, so $\Box_i\psi$ is false at $y$ and hence $\theta_{01}$ is false there. The converse is immediate because $\psi$ is a disjunct of $\theta_{01}$. Thus, at every current arrow target,
$\theta_{01}$ holds iff $\psi$ holds.

No arrow in our model targets the root, so $r$ is false at every arrow target. Hence, at every arrow target, $\psi$ holds iff $D$ holds. It follows by induction, exactly as above, that
\begin{equation*}
R_{\iota(\sigma)}^{M_{01}^n}(x)=C_\sigma^n
\end{equation*}
for every $n<\omega$, every $\sigma$, and every $x\in\{\sigma\}\cup C_\sigma$.

At the root $\epsilon$, $\psi$ is false at every stage because $r$ is true there. At every finite stage $n$, the root still has the $a$-successor $\langle n\rangle$. This state has rank $n$, so $D$, and therefore $\psi$, is false there at stage $n$. Hence $\Box_a\psi$ is false at the root and
\begin{equation*}
M_{01}^n,\epsilon\models\neg\theta_{01}
\qquad\text{for every }n<\omega.
\end{equation*}
Thus $\theta_{01}$ is initially false at the root but never becomes true at any positive finite stage. Therefore $\theta_{01}\notin E_{01}$.

At stage $\omega$, all arrows have disappeared. Since $\psi$ is still false at the root while both $\Box_a\psi$ and $\Box_b\psi$ are vacuously true, $\theta_{01}$ becomes true there. Its root trajectory is therefore
\begin{equation*}
0,0,0,\ldots,1_\omega,1_{\omega+1},\ldots.
\end{equation*}

We have thus constructed
$\theta_{10}\in S_{10}^{<\omega}\setminus E_{10}$ and
$\theta_{01}\in S_{01}^{<\omega}\setminus E_{01}$. Hence
$S_{10}^{<\omega}\not\Rightarrow E_{10}$ and
$S_{01}^{<\omega}\not\Rightarrow E_{01}$ for every $|G|\geq2$.
\end{proof}
\begin{figure}[htbp]
    \centering
    \resizebox{0.98\textwidth}{!}{%
    \begin{tikzpicture}[
        x=1.15cm, y=1.15cm,
        >=Stealth,
        line cap=round,
        line join=round,
        state/.style={circle, draw, minimum size=8mm, inner sep=1pt, font=\small},
        aedge/.style={->, blue, thick},
        bedge/.style={->, red, thick},
        asib/.style={blue, thick},
        bsib/.style={red, thick},
        every node/.style={font=\small}
    ]

    \node[state] (eps) at (0,0) {$\epsilon$};
    \node at (0.9,0.45) {$\rho(\epsilon)=\omega$};
    \node at (-0.35,0.35) {$r$};

    \node[state] (n0) at (-5.5,-2.2) {$\langle 0\rangle$};
    \node[state] (n1) at (-2.8,-2.2) {$\langle 1\rangle$};
    \node[state] (n2) at (0,-2.2) {$\langle 2\rangle$};
    \node[state] (n3) at (3,-2.2) {$\langle 3\rangle$};
    \node[state] (n4) at (6,-2.2) {$\langle 4\rangle$};
    \node at (8.2,-2.2) {$\cdots$};

    \node[state] (n10) at (-2.8,-4.6) {$\langle 1,0\rangle$};

    \node[state] (n20) at (-0.7,-4.6) {$\langle 2,0\rangle$};
    \node[state] (n21) at (0.9,-4.6) {$\langle 2,1\rangle$};

    \node[state] (n30) at (2.0,-4.6) {$\langle 3,0\rangle$};
    \node[state] (n31) at (3.4,-4.6) {$\langle 3,1\rangle$};
    \node[state] (n32) at (4.8,-4.6) {$\langle 3,2\rangle$};

    \node[state] (n40) at (5.8,-4.6) {$\langle 4,0\rangle$};
    \node[state] (n41) at (7.0,-4.6) {$\langle 4,1\rangle$};
    \node[state] (n42) at (8.2,-4.6) {$\langle 4,2\rangle$};
    \node[state] (n43) at (9.6,-4.6) {$\langle 4,3\rangle$};

    \node[state] (n210) at (0.9,-7.0) {$\langle 2,1,0\rangle$};

    \node[state] (n310) at (3.2,-7.0) {$\langle 3,1,0\rangle$};

    \node[state] (n320) at (4.8,-7.0) {$\langle 3,2,0\rangle$};
    \node[state] (n321) at (6.3,-7.0) {$\langle 3,2,1\rangle$};

    \node[state] (n3210) at (6.3,-9.4) {$\langle 3,2,1,0\rangle$};

    \draw[aedge] (eps) -- (n0);
    \draw[aedge] (eps) -- (n1);
    \draw[aedge] (eps) -- (n2);
    \draw[aedge] (eps) -- (n3);
    \draw[aedge] (eps) -- (n4);

    \draw[bedge] (n1) -- (n10);

    \draw[bedge] (n2) -- (n20);
    \draw[bedge] (n2) -- (n21);

    \draw[bedge] (n3) -- (n30);
    \draw[bedge] (n3) -- (n31);
    \draw[bedge] (n3) -- (n32);

    \draw[bedge] (n4) -- (n40);
    \draw[bedge] (n4) -- (n41);
    \draw[bedge] (n4) -- (n42);
    \draw[bedge] (n4) -- (n43);

    \draw[aedge] (n21) -- (n210);

    \draw[aedge] (n31) -- (n310);

    \draw[aedge] (n32) -- (n320);
    \draw[aedge] (n32) -- (n321);

    \draw[bedge] (n321) -- (n3210);

    \draw[asib] (n0) -- (n1) -- (n2) -- (n3) -- (n4);
    \draw[asib] (n4) -- +(1.4,0);

    \draw[bsib] (n20) -- (n21);

    \draw[bsib] (n30) -- (n31) -- (n32);

    \draw[bsib] (n40) -- (n41) -- (n42) -- (n43);

    \draw[asib] (n320) -- (n321);

    \draw[densely dotted] (7.8,-5.55) -- (8.45,-6.55);
    \draw[densely dotted] (8.45,-6.55) -- (9.0,-7.45);

    \node[right] at (10.9,0.0) {$s$};
    \node[right] at (10.9,-2.2) {$\lnot s$};
    \node[right] at (10.9,-4.6) {$s$};
    \node[right] at (10.9,-7.0) {$\lnot s$};

    \end{tikzpicture}%
    }
    \caption{The two-agent rank-peeling model. Blue arrows are $R_a$-arrows and red arrows are $R_b$-arrows. Horizontal colored segments schematically indicate the Euclidean accessibility among targets in the same block. Transitive arrows and reflexive arrows are omitted.}
    \label{fig:rank_peeling_model}
\end{figure}
\begin{definition}[Two-peeling model]\label{def:two-peeling-model}
    Choose distinct agents $a,b\in G$. Let
$H:=\{u_{m,0}\mid m\in\omega\}$,
and define the model $M=(W,\{R_i\}_{i\in G},V)$
by
\[
W:=\{w\}\cup
\{u_{m,j}\mid m\in\omega,\ 0\le j\le 2m+1\},
\]
\[
R_a:=
(\{w\}\cup H)\times H
\;\cup\;
\bigcup_{m\in\omega}
\bigcup_{\substack{0\le j\le 2m\\ j\ \mathrm{odd}}}
\{u_{m,j},u_{m,j+1}\}\times\{u_{m,j+1}\},
\]
\[
R_b:=
\bigcup_{m\in\omega}
\bigcup_{\substack{0\le j\le 2m\\ j\ \mathrm{even}}}
\{u_{m,j},u_{m,j+1}\}\times\{u_{m,j+1}\},
\]
and $R_c:=\varnothing$
for every $c\in G\setminus\{a,b\}$.
The valuation is given by
\[
V(r):=\{w\},
\]
\[
V(s):=
\{u_{m,j}\mid
m\in\omega,\ 0\le j\le 2m+1,\ j\ \mathrm{odd}\},
\]

\[
V(p):=
\{u_{m,0}\mid m\in\omega,\ m\ \mathrm{even}\},
\]
and
\[V(t):=H,
\qquad
V(u):=\{u_{0,0}\}.
\]
For every proposition letter $q\notin\{r,s,p,t,u\}$, let $V(q):=\varnothing$ (See Figure~\ref{fig:oscillation-model-three-updates}(a)). The components of each relation are disjoint and every non-empty component has a common target set, so the model is K45.
\end{definition}
\begin{lemma}\label{lem:two-agent-e11-e01-oscillation}
    If $|G|\geq2$, $E_{11}\setminus SE_{11}$
    is nonempty.
    Moreover, $E_{01}\backslash SE_{01}$ is also nonempty.
\end{lemma}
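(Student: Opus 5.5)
The plan is to exhibit two explicit formulas $\theta_{11}$ and $\theta'_{01}$ over the atoms $r,s,p,t,u$. Both will be evaluated on the two-peeling model of Definition~\ref{def:two-peeling-model} at the root $w$. The root should have truth-value trajectory $1,0,1,0,\ldots$ for $\theta_{11}$ and $0,1,0,1,\ldots$ for $\theta'_{01}$ at all finite stages. This kills $SE_{11}$ and $SE_{01}$ respectively. The membership in $E_{11}$ and $E_{01}$ must hold over \emph{all} K45 models, so it has to come from a purely syntactic "true-lie-like" mechanism built into the formula, in the spirit of Example~\ref{ex:p-or-box-p}.

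\textbf{Step 1: a chain detector.} In analogy with the rank-peeling analysis in Lemma~\ref{lem:two-agent-reversal}, I would define an $L$-style operator that moves one step along the alternating $b,a$ chains $u_{m,0}\to u_{m,1}\to\cdots\to u_{m,2m+1}$, with $s$ marking the parity of the position. From it I would form a detector $D'$ that is true at a node iff its chain successor is still accessible. Announcing a formula that is false exactly at the chain ends peels each chain from its end, one node per stage. By induction on $n$, exactly as in Lemma~\ref{lem:two-agent-reversal}, the head $u_{m,0}$ still has a surviving chain at stage $n$ iff $n\le 2m$ (or a similar linear bound). Hence at every finite stage infinitely many heads remain "alive", and which heads have just died depends on $n$.

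\textbf{Step 2: the oscillating clause at the root.} The atom $r$ isolates the root. I would let the clause at the root say roughly: the root $a$-sees, among the heads $H$ (marked by $t$), a head whose chain has just died, with $p$ selecting heads of even index $m$. The atom $u$ anchors the head $u_{0,0}$, which is the first head to die, and fixes the phase. With the chain of length $2m+2$ at head $m$, the head that dies at stage $n$ has $p$ exactly when $n$ has a fixed parity. So the root clause alternates with the parity of $n$ forever, never stabilizing. I expect fixing the offsets (lengths $2m+1$, which head dies when, and making $u$ pin stage $0$) to be routine but fiddly.

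\textbf{Step 3: uniform eventuality in every model (main obstacle).} For $E_{11}$, I would shape $\theta_{11}$ as a disjunction of the oscillating clause with a self-fulfilling component $\gamma\lor\bigwedge_{c}\Box_c\gamma$. Here $\gamma$ is chosen, as for $\theta_{01}$ in Lemma~\ref{lem:two-agent-reversal}, so that $\theta_{11}$ and $\gamma$ agree at every arrow target by K45 reflexivity at targets. Then whenever $\theta_{11}$ is false at stage $n$ at any pointed model, the update removes all arrows to $\neg\gamma$-targets. This makes the boxed part true at stage $n+1$, as in Example~\ref{ex:p-or-box-p}. So from any state where $\theta_{11}$ is true, it is true again at stage $1$ or stage $2$, giving $E_{11}$ with uniform bound $2$ (consistent with Lemma~\ref{lem:finite-eventual-facts}(3)). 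The same design gives $E_{01}$ for $\theta'_{01}$: when false, the formula is true at the next stage.

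The delicate point is that adding this mechanism must not destroy the oscillation on the two-peeling model. Concretely, $\gamma$ must be arranged (using $r$, false at all arrow targets since no arrow targets $w$) so that the boxed disjunct is true at the root exactly on the stages where the oscillating clause predicts truth. I would first try taking $\gamma$ to be the conjunction of $\neg r$ with the chain detector, as in $\psi=\neg r\land D$. I would then verify directly, by a simultaneous induction on $n$, both the relation description and the root trajectory on the model.
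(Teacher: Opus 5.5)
Your overall architecture matches the paper's: the two-peeling model, a root clause that oscillates via $p$ on even heads, and a true-lie property yielding $E_{01}$, and $E_{11}$ with bound $2$. There is, however, a genuine gap in Step 3, which is the heart of the lemma.

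\textbf{Why the self-fulfilling mechanism fails.} The argument of Example~\ref{ex:p-or-box-p} uses that $p$ is propositional. After the arrows to $\neg p$-targets are deleted, every surviving target still satisfies $p$, so $\Box p$ holds at the next stage. For a modal $\gamma$ such as your $\neg r\land D$, deleting the currently $\neg\gamma$ targets destroys chain witnesses elsewhere, and new $\neg\gamma$ targets appear at the next stage. So the claim that the boxed part is true at stage $n+1$ fails.

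In fact, with $\gamma=\neg r\land D$, your component $\gamma\lor(\Box_a\gamma\land\Box_b\gamma)$ is essentially the formula $\theta_{01}$ of Lemma~\ref{lem:two-agent-reversal}. The paper uses that formula as a witness \emph{against} $E_{01}$: on the rank-peeling model it is false at the root at every finite stage, because at stage $n$ the child $\langle n\rangle$ is still $a$-accessible but has no surviving child. Adding a root disjunct that is false there (e.g.\ one demanding an accessible $p$-head, with $p$ empty) changes nothing, since no arrow targets the root. So neither $E_{01}$ nor the $E_{11}$ you derive from the true-lie property is established.

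If instead you take $\gamma$ Boolean, surviving targets do keep $\gamma$. But then $\bigwedge_c\Box_c\gamma$ is persistent once true, because arrows are only deleted and valuations never change. So after the first $0$ at the root the value is $1$ forever, and the trajectory $1,0,1,0,\ldots$ is impossible. Either way, this design cannot combine uniform eventuality over all K45 models with perpetual oscillation.

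There is also a separate problem: the agreement of $\theta_{11}$ with $\gamma$ at arrow targets is justified only by ``no arrow targets $w$''. That fact is special to the countermodel. In an arbitrary K45 model an arrow may target an $r$-state where your root clause holds and $\gamma$ fails, and such an arrow is not deleted.

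\textbf{The missing idea.} Make the root clause a true lie through the reduction axioms rather than through vacuous boxes. The paper takes $\theta^E_{11}=A\lor\bigl(r\land(Z\lor\neg Z^+)\bigr)$ with $A=\neg r\land\Box_b(r\lor D)$ and $Z=\Diamond_a(\neg r\land\neg s\land p\land\neg A)$. Here $Z^+$ is computed so that $[\uparrow\theta^E_{11}]Z\leftrightarrow Z^+$ is valid. This is not circular, because $Z$ only inspects $\neg r$-states, where $\theta^E_{11}$ coincides with $A$.

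If the formula is false at an $r$-state, then $\neg Z\land Z^+$ holds there, so $Z$, and hence the formula, holds after one update, in every model. If it is false at a $\neg r$-state $x$, then $A$ fails at $x$. Every non-$r$ $b$-successor of $x$ has the same $b$-successor set, so it also fails $A$. The arrows from $x$ to all of them are therefore deleted at once, leaving only $r$-targets, where $r\lor D$ holds propositionally. Thus $\theta^E_{11}$ is a true lie.

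On the two-peeling model, $Z$ holds at stage $n$ iff $n$ is even, so $Z\lor\neg Z^+$ holds at the root iff $n$ is even. One formula then serves both claims: $w$ gives $1,0,1,0,\ldots$ and $M^1,w$ gives $0,1,0,1,\ldots$.

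This computation relies on pair-peeling: the $\Box_b$ in $A$ makes both states of $P_{m,\ell}$ fail together, so $u_{m,0}$ fails $A$ exactly at stage $m$. With your one-node-per-stage peeling on chains of length $2m+2$, heads die only at every other stage, so the parity bookkeeping in your Step 2 would also need to be redone.
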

\begin{proof}
    Let
    \begin{equation*}
        \begin{aligned}
            D&:=(s\land\Diamond_a(\neg r\land\neg s))\lor(\neg s\land\Diamond_b(\neg r\land s)),\\
            A&:=\neg r\land\Box_b(r\lor D),\qquad Z:=\Diamond_a(\neg r\land\neg s\land p\land\neg A),\\
            D^A&:=(s\land\Diamond_a(A\land\neg r\land\neg s))\lor(\neg s\land\Diamond_b(A\land\neg r\land s)),\\
            A^A&:=\neg r\land\Box_b(A\to(r\lor D^A)),\qquad
            Z^+:=\Diamond_a(A\land\neg r\land\neg s\land p\land\neg A^A),\\
            \theta_{11}^E&:=A\lor\bigl(r\land(Z\lor\neg Z^+)\bigr).
        \end{aligned}
    \end{equation*}
    
    \medskip\noindent\textbf{Proof of $\theta_{11}^E\in E_{11}\cap E_{01}$.}

    We first show $\theta_{11}^E\in E_{11}\cap E_{01}$.
         Take any model $M$. We claim that, $\theta_{11}^E$ agrees with $A$ at non-$r$ states in $M$, and that $[\uparrow\theta_{11}^E]Z\leftrightarrow Z^+$ is valid:
    At non-$r$ states, $\theta_{11}^E\leftrightarrow A$ is trivial. We prove $[\uparrow\theta_{11}^E]D\leftrightarrow D^A$, $[\uparrow\theta_{11}^E]A\leftrightarrow A^A$, and finally, $[\uparrow\theta_{11}^E]Z\leftrightarrow Z^+$ using the reduction axioms for BPAL (Lemma~\ref{lem:bpal-reduction-axioms}):
    \begin{align*}
        [\uparrow\theta_{11}^E]D
        &\leftrightarrow [\uparrow A]D\\
        &\leftrightarrow (s\land[\uparrow A]\Diamond_a(\lnot r\land\lnot s))\lor (\lnot s\land[\uparrow A]\Diamond_b(\lnot r\land s))\\
        &\leftrightarrow (s\land\Diamond_a(A\land[\uparrow A](\lnot r\land\lnot s))\lor (\lnot s\land\Diamond_b(A\land[\uparrow A](\lnot r\land s))\\
        &\leftrightarrow (s\land\Diamond_a(A\land\lnot r\land\lnot s))\lor (\lnot s\land\Diamond_b(A\land\lnot r\land s))\\
        &\leftrightarrow D^A,\\
        [\uparrow\theta_{11}^E]A
        &\leftrightarrow[\uparrow\theta_{11}^E](\lnot r\land\Box_b(r\lor D))\\
        &\leftrightarrow \lnot r\land\Box_b(\theta_{11}^E\to [\uparrow \theta_{11}^E](r\lor D))\\
        &\leftrightarrow\lnot r\land \Box_b(A\to (r\lor D^A))\\
        &\leftrightarrow A^A,\\
        [\uparrow\theta_{11}^E]Z&\leftrightarrow[\uparrow\theta_{11}^E]\Diamond_a(\lnot r\land\lnot s\land p\land\lnot A)\\
        &\leftrightarrow \Diamond_a(\theta_{11}^E\land[\uparrow\theta_{11}^E](\lnot r\land\lnot s\land p\land \lnot A))\\
        &\leftrightarrow\Diamond_a(A\land \lnot r\land \lnot s\land p\land \lnot A^A)\\
        &\leftrightarrow Z^+.
    \end{align*}
This proves the claim.

    We now check $\theta_{11}^E\in E_{01}$. If $\theta_{11}^E$ is false at an $r$-state in $M$, then $Z$ is false and $Z^+$ is true, so $\theta_{11}^E$ is true after one update by $[\uparrow\theta_{11}^E]Z\leftrightarrow Z^+$. If $\theta_{11}^E$ is false at a non-$r$ state $x$, then $A=\neg r\land\Box_b(r\lor D)$ is false at $x$, so choose a $b$-successor satisfying $\neg r\land\neg D$. The K45 identity $R_b(x)=R_b(y)$ for $xR_by$ shows that every non-$r$ $b$-successor of $x$ also falsifies $A$. Thus, after an announcement of $\theta_{11}^E$, all arrows from $x$ to such states are removed, and $A$ becomes true. Thus, $\theta_{11}^E$ is a true lie and hence $\theta_{11}^E\in E_{01}$.

    We next check $\theta_{11}^E\in E_{11}$. Take any $w\in M$ and suppose $M,w\models\theta_{11}^E$. If $M^1,w\models \theta_{11}^E$, $\theta_{11}^E$ already satisfies $E_{11}$. If $M^1,w\models\lnot\theta_{11}^E$, we have $M^2,w\models\theta_{11}^E$ since $\theta_{11}^E$ is a true lie as we have just seen. Therefore, $\theta_{11}^E\in E_{11}$.

    \medskip\noindent\textbf{Proof of $\theta_{11}^E\notin SE_{11}$ and $\theta_{11}^E\notin SE_{01}$.}
    
    We next show $\theta_{11}^E\notin SE_{11}$ and $\theta_{11}^E\notin SE_{01}$ using the two-peeling model $M$ in Definition~\ref{def:two-peeling-model}.  

We first check $M,w\models\theta_{11}^E$. Since $u_{0,1}$ has no $a$-successor and satisfies $s\land\neg r$,
we have $M,u_{0,1}\models\neg D$. The unique $b$-successor of
$u_{0,0}$ is $u_{0,1}$, and hence
$M,u_{0,0}\models\neg A$. Moreover,
$M,u_{0,0}\models\neg r\land\neg s\land p$. Since $wR_a u_{0,0}$, it follows that
$M,w\models Z$. As $M,w\models r$, we therefore have $M,w\models r\land Z$ and hence $M,w\models\theta_{11}^{E}$.

Write $M^n:=M|^n\theta_{11}^E$
and, for $0\leq \ell\leq m$, put
    $P_{m,\ell}:=\{u_{m,2\ell},u_{m,2\ell+1}\}$.
We first determine where $A$ is false on each branch.

At stage $0$, the terminal state $u_{m,2m+1}$ satisfies $s$ and has
no $a$-successor. Hence $M,u_{m,2m+1}\models\neg D$.
The two states in $P_{m,m}$ have the common $b$-successor
$u_{m,2m+1}$. Since $r$ is false at every branch state, it follows that
    $M,x\models\neg A$
    for every $x\in P_{m,m}$.

On the other hand, if $\ell<m$, then $u_{m,2\ell+1}$ has the
$a$-successor $u_{m,2\ell+2}$, which satisfies
$\neg r\land\neg s$. Hence $M,u_{m,2\ell+1}\models D$, and therefore every state in $P_{m,\ell}$ satisfies $A$.
\begin{claim*}
 For every $n<\omega$ and every $m\geq n$,
\[
    M^n,x\models\neg A
    \quad\Longleftrightarrow\quad
    x\in P_{m,m-n}
\]
for every state $x$ on branch $m$ (see Figure~\ref{fig:oscillation-model-three-updates}).
\end{claim*}
\begin{claimproof}
The case $n=0$ was proved above. 

$(\Leftarrow)$ Suppose the claim holds at stage $n$. Note that $P_{m,m-n}
    =
    \{u_{m,2(m-n)},u_{m,2(m-n)+1}\}$.
Since $r$ is false at every branch state,
$\theta_{11}^E$ agrees there with $A$. Hence the update from $M^n$ to
$M^{n+1}$ deletes every arrow whose target belongs to
$P_{m,m-n}$.

In particular, if $n<m$, the $a$-arrow from
$u_{m,2(m-n)-1}$ to $u_{m,2(m-n)}$ is deleted. Thus
\[
    M^{n+1},u_{m,2(m-n)-1}\models\neg D.
\]
The two states in
    $P_{m,m-n-1}
    =
    \{u_{m,2(m-n)-2},u_{m,2(m-n)-1}\}$
have the common $b$-successor $u_{m,2(m-n)-1}$, and therefore both
falsify $A$ in $M^{n+1}$.

$(\Rightarrow)$ 
For the converse, the common $b$-successor
$u_{m,2(m-n)+1}$ of the states in $P_{m,m-n}$ is itself an
$A$-false target at stage $n$, so the corresponding $b$-arrows are
deleted. Hence the $\Box_b$-conjunct of $A$ is vacuously true there at
stage $n+1$. All other pairs retain the witnesses that made their
corresponding $D$-formulas true. Thus
$P_{m,m-n-1}$ is the unique $A$-false pair on branch $m$ at stage
$n+1$. This proves the claim by induction.
\end{claimproof}
In particular, the head $u_{m,0}$ is $A$-false exactly at stage $m$.
Since all branch heads are $a$-successors of $w$ initially, and an
$a$-arrow to $u_{m,0}$ is deleted precisely in the update following
stage $m$, we obtain $R_a^{M^n}(w)
    =
    \{u_{m,0}\mid m\geq n\}$.
Among these heads, $u_{n,0}$ is the unique one satisfying $\neg A$ at
stage $n$.

Every branch head satisfies $\neg r\land\neg s$, and $M,u_{m,0}\models p$ iff $m$ is even.
Therefore
\[
    M^n,w\models Z
    \quad\Longleftrightarrow\quad
    M^n,u_{n,0}\models p
    \quad\Longleftrightarrow\quad
    n\text{ is even}.
\]
Moreover, by
$[\uparrow\theta_{11}^E]Z\leftrightarrow Z^+$,
\[
    M^n,w\models Z^+
    \quad\Longleftrightarrow\quad
    M^{n+1},w\models Z
    \quad\Longleftrightarrow\quad
    n\text{ is odd}.
\]
Since $r$ is true at $w$ and $A$ is false there, we have
\[
    M^n,w\models\theta_{11}^E
    \quad\Longleftrightarrow\quad
    M^n,w\models Z\lor\neg Z^+
    \quad\Longleftrightarrow\quad
    n\text{ is even}.
\]
Hence the truth-value sequence of $\theta_{11}^E$ at $w$ is
$1,0,1,0,\ldots$.
Therefore
$\theta_{11}^E\notin SE_{11}$.
Since the pointed model $M^1,w$ starts with value $0$ and has the
subsequent sequence $0,1,0,1,\ldots$, we also have
$\theta_{11}^E\notin SE_{01}$.
\end{proof}
\begin{figure}[p]
    \centering
    \captionsetup[subfigure]{justification=centering}
    \begin{subfigure}{\textwidth}
        \centering
        \resizebox{!}{0.28\textheight}{%
        \begin{tikzpicture}[
            x=0.95cm, y=0.95cm,
            >=Stealth,
            line cap=round,
            line join=round,
            state/.style={circle, draw, minimum size=7.5mm, inner sep=1pt, font=\small},
            aedge/.style={->, blue, thick},
            bedge/.style={->, red, thick},
            asib/.style={blue, thick},
            aloop/.style={->, blue, thick},
            bloop/.style={->, red, thick},
            val/.style={font=\scriptsize},
            every node/.style={font=\small}
        ]

        \node[state] (w) at (0,-2.7) {$w$};
        \node[val] at (-0.55,-2.05) {$r$};

        \node[state] (u00) at (3.0,-0.0) {$u_{0,0}$};
        \node[state] (u10) at (3.0,-2.4) {$u_{1,0}$};
        \node[state] (u20) at (3.0,-4.8) {$u_{2,0}$};
        \node[state] (u30) at (3.0,-7.2) {$u_{3,0}$};
        \node[val] at (3.0, 0.7) {$p$, $t$, $u$};
        \node[val,above right=1pt] at (u10.north east) {$t$};
        \node[val,above right=1pt] at (u20.north east) {$p$, $t$};
        \node[val,above right=1pt] at (u30.north east) {$t$};
        \node at (3.0,-8.8) {$\vdots$};

        \draw[aedge] (w) -- (u00);
        \draw[aedge] (w) -- (u10);
        \draw[aedge] (w) -- (u20);
        \draw[aedge] (w) -- (u30);

        \draw[asib] (u00) -- (u10) -- (u20) -- (u30);
        \draw[asib] (u30) -- +(0,-1.0);

        \node[state] (u01) at (4.8,-0.0) {$u_{0,1}$};
        \node[val] at (4.8,-0.7) {$s$};
        \draw[bedge] (u00) -- (u01);
        \draw[bloop] (u01.135)
            .. controls +(-2mm,5mm) and +(2mm,5mm)
            .. (u01.45);

        \node[state] (u11) at (4.8,-2.4) {$u_{1,1}$};
        \node[state] (u12) at (6.6,-2.4) {$u_{1,2}$};
        \node[state] (u13) at (8.4,-2.4) {$u_{1,3}$};
        \node[val] at (4.8,-3.1) {$s$};
        \node[val] at (8.4,-3.1) {$s$};
        \draw[bedge] (u10) -- (u11);
        \draw[bloop] (u11.135)
            .. controls +(-2mm,5mm) and +(2mm,5mm)
            .. (u11.45);
        \draw[aedge] (u11) -- (u12);
        \draw[aloop] (u12.135)
            .. controls +(-2mm,5mm) and +(2mm,5mm)
            .. (u12.45);
        \draw[bedge] (u12) -- (u13);
        \draw[bloop] (u13.135)
            .. controls +(-2mm,5mm) and +(2mm,5mm)
            .. (u13.45);

        \node[state] (u21) at (4.8,-4.8) {$u_{2,1}$};
        \node[state] (u22) at (6.6,-4.8) {$u_{2,2}$};
        \node[state] (u23) at (8.4,-4.8) {$u_{2,3}$};
        \node[state] (u24) at (10.2,-4.8) {$u_{2,4}$};
        \node[state] (u25) at (12.0,-4.8) {$u_{2,5}$};
        \node[val] at (4.8,-5.5) {$s$};
        \node[val] at (8.4,-5.5) {$s$};
        \node[val] at (12.0,-5.5) {$s$};
        \draw[bedge] (u20) -- (u21);
        \draw[bloop] (u21.135)
            .. controls +(-2mm,5mm) and +(2mm,5mm)
            .. (u21.45);
        \draw[aedge] (u21) -- (u22);
        \draw[aloop] (u22.135)
            .. controls +(-2mm,5mm) and +(2mm,5mm)
            .. (u22.45);
        \draw[bedge] (u22) -- (u23);
        \draw[bloop] (u23.135)
            .. controls +(-2mm,5mm) and +(2mm,5mm)
            .. (u23.45);
        \draw[aedge] (u23) -- (u24);
        \draw[aloop] (u24.135)
            .. controls +(-2mm,5mm) and +(2mm,5mm)
            .. (u24.45);
        \draw[bedge] (u24) -- (u25);
        \draw[bloop] (u25.135)
            .. controls +(-2mm,5mm) and +(2mm,5mm)
            .. (u25.45);

        \node[state] (u31) at (4.8,-7.2) {$u_{3,1}$};
        \node[state] (u32) at (6.6,-7.2) {$u_{3,2}$};
        \node[state] (u33) at (8.4,-7.2) {$u_{3,3}$};
        \node[state] (u34) at (10.2,-7.2) {$u_{3,4}$};
        \node[state] (u35) at (12.0,-7.2) {$u_{3,5}$};
        \node[state] (u36) at (13.8,-7.2) {$u_{3,6}$};
        \node[state] (u37) at (15.6,-7.2) {$u_{3,7}$};
        \node[val] at (4.8,-7.9) {$s$};
        \node[val] at (8.4,-7.9) {$s$};
        \node[val] at (12.0,-7.9) {$s$};
        \node[val] at (15.6,-7.9) {$s$};
        \draw[bedge] (u30) -- (u31);
        \draw[bloop] (u31.135)
            .. controls +(-2mm,5mm) and +(2mm,5mm)
            .. (u31.45);
        \draw[aedge] (u31) -- (u32);
        \draw[aloop] (u32.135)
            .. controls +(-2mm,5mm) and +(2mm,5mm)
            .. (u32.45);
        \draw[bedge] (u32) -- (u33);
        \draw[bloop] (u33.135)
            .. controls +(-2mm,5mm) and +(2mm,5mm)
            .. (u33.45);
        \draw[aedge] (u33) -- (u34);
        \draw[aloop] (u34.135)
            .. controls +(-2mm,5mm) and +(2mm,5mm)
            .. (u34.45);
        \draw[bedge] (u34) -- (u35);
        \draw[bloop] (u35.135)
            .. controls +(-2mm,5mm) and +(2mm,5mm)
            .. (u35.45);
        \draw[aedge] (u35) -- (u36);
        \draw[aloop] (u36.135)
            .. controls +(-2mm,5mm) and +(2mm,5mm)
            .. (u36.45);
        \draw[bedge] (u36) -- (u37);
        \draw[bloop] (u37.135)
            .. controls +(-2mm,5mm) and +(2mm,5mm)
            .. (u37.45);

        \end{tikzpicture}%
        }
        \caption{$M$}
    \end{subfigure}

    \vspace{0.2em}
    \begin{subfigure}{\textwidth}
        \centering
        \resizebox{!}{0.28\textheight}{%
        \begin{tikzpicture}[
            x=0.95cm, y=0.95cm,
            >=Stealth,
            line cap=round,
            line join=round,
            state/.style={circle, draw, minimum size=7.5mm, inner sep=1pt, font=\small},
            aedge/.style={->, blue, thick},
            bedge/.style={->, red, thick},
            asib/.style={blue, thick},
            aloop/.style={->, blue, thick},
            bloop/.style={->, red, thick},
            val/.style={font=\scriptsize},
            every node/.style={font=\small}
        ]

        \node[state] (w) at (0,-2.7) {$w$};
        \node[val] at (-0.55,-2.05) {$r$};

        \node[state] (u00) at (3.0,-0.0) {$u_{0,0}$};
        \node[state] (u10) at (3.0,-2.4) {$u_{1,0}$};
        \node[state] (u20) at (3.0,-4.8) {$u_{2,0}$};
        \node[state] (u30) at (3.0,-7.2) {$u_{3,0}$};
        \node[val] at (3.0, 0.7) {$p$, $t$, $u$};
        \node[val,above right=1pt] at (u10.north east) {$t$};
        \node[val,above right=1pt] at (u20.north east) {$p$, $t$};
        \node[val,above right=1pt] at (u30.north east) {$t$};
        \node at (3.0,-8.8) {$\vdots$};

        \draw[aedge] (w) -- (u10);
        \draw[aedge] (w) -- (u20);
        \draw[aedge] (w) -- (u30);

        \draw[aedge] (u00) -- (u10);
        \draw[asib] (u10) -- (u20) -- (u30);
        \draw[asib] (u30) -- +(0,-1.0);

        \node[state] (u01) at (4.8,-0.0) {$u_{0,1}$};
        \node[val] at (4.8,-0.7) {$s$};

        \node[state] (u11) at (4.8,-2.4) {$u_{1,1}$};
        \node[state] (u12) at (6.6,-2.4) {$u_{1,2}$};
        \node[state] (u13) at (8.4,-2.4) {$u_{1,3}$};
        \node[val] at (4.8,-3.1) {$s$};
        \node[val] at (8.4,-3.1) {$s$};
        \draw[bedge] (u10) -- (u11);
        \draw[bloop] (u11.135)
            .. controls +(-2mm,5mm) and +(2mm,5mm)
            .. (u11.45);

        \node[state] (u21) at (4.8,-4.8) {$u_{2,1}$};
        \node[state] (u22) at (6.6,-4.8) {$u_{2,2}$};
        \node[state] (u23) at (8.4,-4.8) {$u_{2,3}$};
        \node[state] (u24) at (10.2,-4.8) {$u_{2,4}$};
        \node[state] (u25) at (12.0,-4.8) {$u_{2,5}$};
        \node[val] at (4.8,-5.5) {$s$};
        \node[val] at (8.4,-5.5) {$s$};
        \node[val] at (12.0,-5.5) {$s$};
        \draw[bedge] (u20) -- (u21);
        \draw[bloop] (u21.135)
            .. controls +(-2mm,5mm) and +(2mm,5mm)
            .. (u21.45);
        \draw[aedge] (u21) -- (u22);
        \draw[aloop] (u22.135)
            .. controls +(-2mm,5mm) and +(2mm,5mm)
            .. (u22.45);
        \draw[bedge] (u22) -- (u23);
        \draw[bloop] (u23.135)
            .. controls +(-2mm,5mm) and +(2mm,5mm)
            .. (u23.45);

        \node[state] (u31) at (4.8,-7.2) {$u_{3,1}$};
        \node[state] (u32) at (6.6,-7.2) {$u_{3,2}$};
        \node[state] (u33) at (8.4,-7.2) {$u_{3,3}$};
        \node[state] (u34) at (10.2,-7.2) {$u_{3,4}$};
        \node[state] (u35) at (12.0,-7.2) {$u_{3,5}$};
        \node[state] (u36) at (13.8,-7.2) {$u_{3,6}$};
        \node[state] (u37) at (15.6,-7.2) {$u_{3,7}$};
        \node[val] at (4.8,-7.9) {$s$};
        \node[val] at (8.4,-7.9) {$s$};
        \node[val] at (12.0,-7.9) {$s$};
        \node[val] at (15.6,-7.9) {$s$};
        \draw[bedge] (u30) -- (u31);
        \draw[bloop] (u31.135)
            .. controls +(-2mm,5mm) and +(2mm,5mm)
            .. (u31.45);
        \draw[aedge] (u31) -- (u32);
        \draw[aloop] (u32.135)
            .. controls +(-2mm,5mm) and +(2mm,5mm)
            .. (u32.45);
        \draw[bedge] (u32) -- (u33);
        \draw[bloop] (u33.135)
            .. controls +(-2mm,5mm) and +(2mm,5mm)
            .. (u33.45);
        \draw[aedge] (u33) -- (u34);
        \draw[aloop] (u34.135)
            .. controls +(-2mm,5mm) and +(2mm,5mm)
            .. (u34.45);
        \draw[bedge] (u34) -- (u35);
        \draw[bloop] (u35.135)
            .. controls +(-2mm,5mm) and +(2mm,5mm)
            .. (u35.45);

        \end{tikzpicture}%
        }
        \caption{$M^1=M|\theta_{11}^{E}=M|\theta_{11}^{SE}=M|\theta_{11}^{<\omega}$}
    \end{subfigure}

    \vspace{0.2em}
    \begin{subfigure}{\textwidth}
        \centering
        \resizebox{!}{0.28\textheight}{%
        \begin{tikzpicture}[
            x=0.95cm, y=0.95cm,
            >=Stealth,
            line cap=round,
            line join=round,
            state/.style={circle, draw, minimum size=7.5mm, inner sep=1pt, font=\small},
            aedge/.style={->, blue, thick},
            bedge/.style={->, red, thick},
            asib/.style={blue, thick},
            aloop/.style={->, blue, thick},
            bloop/.style={->, red, thick},
            val/.style={font=\scriptsize},
            every node/.style={font=\small}
        ]

        \node[state] (w) at (0,-2.7) {$w$};
        \node[val] at (-0.55,-2.05) {$r$};

        \node[state] (u00) at (3.0,-0.0) {$u_{0,0}$};
        \node[state] (u10) at (3.0,-2.4) {$u_{1,0}$};
        \node[state] (u20) at (3.0,-4.8) {$u_{2,0}$};
        \node[state] (u30) at (3.0,-7.2) {$u_{3,0}$};
        \node[val] at (3.0, 0.7) {$p$, $t$, $u$};
        \node[val,above right=1pt] at (u10.north east) {$t$};
        \node[val,above right=1pt] at (u20.north east) {$p$, $t$};
        \node[val,above right=1pt] at (u30.north east) {$t$};
        \node at (3.0,-8.8) {$\vdots$};

        \draw[aedge] (w) -- (u20);
        \draw[aedge] (w) -- (u30);

        \draw[aedge] (u00.west)
            .. controls +(-1.2,-1.3) and +(-1.2,1.3)
            .. (u20.west);
        \draw[aedge] (u10) -- (u20);
        \draw[asib] (u20) -- (u30);
        \draw[asib] (u30) -- +(0,-1.0);

        \node[state] (u01) at (4.8,-0.0) {$u_{0,1}$};
        \node[val] at (4.8,-0.7) {$s$};

        \node[state] (u11) at (4.8,-2.4) {$u_{1,1}$};
        \node[state] (u12) at (6.6,-2.4) {$u_{1,2}$};
        \node[state] (u13) at (8.4,-2.4) {$u_{1,3}$};
        \node[val] at (4.8,-3.1) {$s$};
        \node[val] at (8.4,-3.1) {$s$};

        \node[state] (u21) at (4.8,-4.8) {$u_{2,1}$};
        \node[state] (u22) at (6.6,-4.8) {$u_{2,2}$};
        \node[state] (u23) at (8.4,-4.8) {$u_{2,3}$};
        \node[state] (u24) at (10.2,-4.8) {$u_{2,4}$};
        \node[state] (u25) at (12.0,-4.8) {$u_{2,5}$};
        \node[val] at (4.8,-5.5) {$s$};
        \node[val] at (8.4,-5.5) {$s$};
        \node[val] at (12.0,-5.5) {$s$};
        \draw[bedge] (u20) -- (u21);
        \draw[bloop] (u21.135)
            .. controls +(-2mm,5mm) and +(2mm,5mm)
            .. (u21.45);

        \node[state] (u31) at (4.8,-7.2) {$u_{3,1}$};
        \node[state] (u32) at (6.6,-7.2) {$u_{3,2}$};
        \node[state] (u33) at (8.4,-7.2) {$u_{3,3}$};
        \node[state] (u34) at (10.2,-7.2) {$u_{3,4}$};
        \node[state] (u35) at (12.0,-7.2) {$u_{3,5}$};
        \node[state] (u36) at (13.8,-7.2) {$u_{3,6}$};
        \node[state] (u37) at (15.6,-7.2) {$u_{3,7}$};
        \node[val] at (4.8,-7.9) {$s$};
        \node[val] at (8.4,-7.9) {$s$};
        \node[val] at (12.0,-7.9) {$s$};
        \node[val] at (15.6,-7.9) {$s$};
        \draw[bedge] (u30) -- (u31);
        \draw[bloop] (u31.135)
            .. controls +(-2mm,5mm) and +(2mm,5mm)
            .. (u31.45);
        \draw[aedge] (u31) -- (u32);
        \draw[aloop] (u32.135)
            .. controls +(-2mm,5mm) and +(2mm,5mm)
            .. (u32.45);
        \draw[bedge] (u32) -- (u33);
        \draw[bloop] (u33.135)
            .. controls +(-2mm,5mm) and +(2mm,5mm)
            .. (u33.45);

        \end{tikzpicture}%
        }
        \caption{$M^2=M|^2\theta_{11}^{E}=M|^2\theta_{11}^{SE}=M|^2\theta_{11}^{<\omega}$}
    \end{subfigure}

    \caption{The two-peeling model in Definition~\ref{def:two-peeling-model}. Blue arrows and red arrows indicate the accessibility relations of $a$ and $b$, respectively. Transitive arrows are omitted.}
    \label{fig:oscillation-model-three-updates}
\end{figure}
\begin{definition}[One-peeling model]\label{def:one-peeling-model}
    Choose distinct agents
$a,b\in G$ and consider the following model
\(
    M=(W,\{R_i\}_{i\in G},V).
\)\footnote{The model is almost the same as the two-peeling model in Definition~\ref{def:two-peeling-model} except that $j\leq m$ now, not $j\leq 2m+1$.}
Let
\(
    H:=\{u_{m,0}\mid m\geq1\}
\)
and put
\(
    W:=\{w\}\cup
    \{u_{m,j}\mid m\geq1,\ 0\leq j\leq m\}.
\)
Define
\(
    R_a:=
    (\{w\}\cup H)\times H
    \;\cup\;
    \bigcup_{m\geq1}
    \bigcup_{\substack{0\leq j<m\\ j\text{ odd}}}
    \{u_{m,j},u_{m,j+1}\}\times\{u_{m,j+1}\},
\)
and
\(
    R_b:=
    \bigcup_{m\geq1}
    \bigcup_{\substack{0\leq j<m\\ j\text{ even}}}
    \{u_{m,j},u_{m,j+1}\}\times\{u_{m,j+1}\}.
\)
For every $c\in G\setminus\{a,b\}$, let $R_c:=\varnothing$.
The valuation is given by
\(
    V(r):=\{w\},
    \qquad
    V(s):=
    \{u_{m,j}\mid m\geq1,\ 0\leq j\leq m,\ j\text{ odd}\},
\)
and
\(
    V(p):=
    \{u_{m,0}\mid m\geq1,\ m\text{ even}\}.
\)
All other proposition letters are false everywhere (see Figure~\ref{fig:e00-oscillation-stage0}).
\end{definition}
\begin{lemma}\label{lem:two-agent-e00-oscillation}
    If $|G|\geq 2$, $E_{00}\backslash SE_{00}$ is nonempty.
\end{lemma}
\begin{proof}
    Put 
    \[L\chi:=(s\land\Diamond_a(\neg r\land\neg s\land\chi))\lor(\neg s\land\Diamond_b(\neg r\land s\land\chi)),\] let $L^k$ denote $k$-fold iteration of $L$, and set $D_k:=L^k\top$, $H_k:=D_k\land\neg D_{k+1}$, and 
    \[X_k:=\Diamond_a(\neg r\land\neg s\land p\land H_k).\]
     Let 
        \[\theta_{00}^E:=(\neg r\land D_1)\lor(r\land X_1\land\neg X_2).\]

        \medskip\noindent\textbf{Proof of $\theta_{00}^E\in E_{00}$} 
    
      For $k\geq 1$, we first show
$\models[\uparrow\theta_{00}^E]D_k\leftrightarrow D_{k+1}$.
At every non-$r$ state, $\theta_{00}^E$ agrees with $D_1$.
Hence, for any formula $\chi$, the reduction axioms for BPAL
(Lemma~\ref{lem:bpal-reduction-axioms}) give
\begin{align*}
    [\uparrow\theta_{00}^E]L\chi
    &\leftrightarrow
    \bigl(
        s\land
        \Diamond_a(
            \theta_{00}^E
            \land
            [\uparrow\theta_{00}^E]
            (\neg r\land\neg s\land\chi)
        )
    \bigr)\\
    &\qquad\lor
    \bigl(
        \neg s\land
        \Diamond_b(
            \theta_{00}^E
            \land
            [\uparrow\theta_{00}^E]
            (\neg r\land s\land\chi)
        )
    \bigr)\\
    &\leftrightarrow
    \bigl(
        s\land
        \Diamond_a(
            D_1\land\neg r\land\neg s
            \land[\uparrow\theta_{00}^E]\chi
        )
    \bigr)\\
    &\qquad\lor
    \bigl(
        \neg s\land
        \Diamond_b(
            D_1\land\neg r\land s
            \land[\uparrow\theta_{00}^E]\chi
        )
    \bigr)\\
    &\leftrightarrow
    L\bigl(D_1\land[\uparrow\theta_{00}^E]\chi\bigr).
\end{align*}
Here the second equivalence uses the fact that every target occurring in
the definition of $L$ satisfies $\neg r$, and therefore
$\theta_{00}^E\leftrightarrow D_1$ at such a target.

We now prove
$\models[\uparrow\theta_{00}^E]D_k\leftrightarrow D_{k+1}$
by induction on $k\geq1$. For $k=1$, since $D_1=L\top$, the preceding
equivalence gives
\[
    [\uparrow\theta_{00}^E]D_1
    \leftrightarrow
    L(D_1\land\top)
    \leftrightarrow
    LD_1
    \leftrightarrow
    D_2.
\]
Suppose
$[\uparrow\theta_{00}^E]D_k\leftrightarrow D_{k+1}$.
Since $D_{k+1}=LD_k$, we obtain
\begin{align*}
    [\uparrow\theta_{00}^E]D_{k+1}
    &\leftrightarrow
    [\uparrow\theta_{00}^E]LD_k\\
    &\leftrightarrow
    L\bigl(
        D_1\land[\uparrow\theta_{00}^E]D_k
    \bigr)\\
    &\leftrightarrow
    L(D_1\land D_{k+1})\\
    &\leftrightarrow
    LD_{k+1}\\
    &\leftrightarrow
    D_{k+2}.
\end{align*}
The penultimate equivalence follows from
$\models D_{k+1}\to D_1$: since \(\models D_k\to\top\), the definition of \(L\) gives \(\models LD_k\to L\top\). Hence \(\models D_{k+1}\to D_1\).

Consequently, the same reduction shifts the other indexed formulas by
one: for every $k\geq1$,
$[\uparrow\theta_{00}^E]H_k\leftrightarrow H_{k+1}$, since
$H_k=D_k\land\neg D_{k+1}$, and $[\uparrow \theta_{00}^E] X_k\leftrightarrow X_{k+1}$ since
\begin{align*}
    [\uparrow\theta_{00}^E]X_k
    &\leftrightarrow
    [\uparrow\theta_{00}^E]
    \Diamond_a(\neg r\land\neg s\land p\land H_k)\\
    &\leftrightarrow
    \Diamond_a\bigl(
        \theta_{00}^E
        \land
        \neg r\land\neg s\land p
        \land
        [\uparrow\theta_{00}^E]H_k
    \bigr)\\
    &\leftrightarrow
    \Diamond_a\bigl(
        D_1\land\neg r\land\neg s\land p\land H_{k+1}
    \bigr)\\
    &\leftrightarrow
    \Diamond_a(\neg r\land\neg s\land p\land H_{k+1})\\
    &\leftrightarrow
    X_{k+1}.
\end{align*}
The third equivalence uses
$\theta_{00}^E\leftrightarrow D_1$ at every $\neg r$-state and
$[\uparrow\theta_{00}^E]H_k\leftrightarrow H_{k+1}$.
The fourth equivalence follows from
$\models H_{k+1}\to D_1$.

Take any $M,w$ and suppose that
$M,w\models\neg\theta_{00}^E$.

Suppose first that $M,w\models\neg r$. Since
$\theta_{00}^E$ agrees with $D_1$ at every non-$r$ state, we have
$M,w\models\neg D_1$. Now $D_1=L\top$ is a disjunction of diamond
formulas, and believed public announcements only delete arrows.
Therefore, once $D_1$ is false at a state, it remains false after every
further update. Since the valuation of $r$ does not change, it follows
that
$M^n,w\models\neg\theta_{00}^E$ for every $n\geq1$.

Suppose next that $M,w\models r$. If
$M^1,w\models\neg\theta_{00}^E$, there is nothing to prove. Otherwise,
suppose
$M^1,w\models\theta_{00}^E$. Since $r$ remains true at $w$, the
definition of $\theta_{00}^E$ gives $M^1,w\models\neg X_2$. The equivalence
$[\uparrow\theta_{00}^E]X_1\leftrightarrow X_2$ is valid in every K45
model, so applying it to $M^1$ yields
\[
    M^2,w\models X_1
    \quad\Longleftrightarrow\quad
    M^1,w\models X_2.
\]
Hence $M^2,w\models\neg X_1$. Since $r$ is still true at $w$, we obtain
$M^2,w\models\neg\theta_{00}^E$.

Thus, whenever $\theta_{00}^E$ is initially false, it is false again
after either one or two updates. Therefore
$\theta_{00}^E\in E_{00}$.

\medskip\noindent\textbf{Proof of $\theta_{00}^E\notin SE_{00}$} 
    
We show $\theta_{00}^E\notin SE_{00}$ using the one-peeling model $M$ in Definition~\ref{def:one-peeling-model}.
We first determine the truth of $D_k$ in the initial model. We claim that
for every $m\geq1$, $0\leq j\leq m$, and $k\geq1$,
\[
    M,u_{m,j}\models D_k
    \quad\Longleftrightarrow\quad
    k\leq m-j.
\]
For $k=1$, this follows directly from the definition of $L$: the
relevant $a$- or $b$-successor of $u_{m,j}$ is $u_{m,j+1}$ when
$j<m$, while no such successor exists when $j=m$. Suppose the claim
holds for $k$. Since $D_{k+1}=LD_k$, we have
\begin{align*}
    M,u_{m,j}\models D_{k+1}
    &\Longleftrightarrow
    j<m\text{ and }M,u_{m,j+1}\models D_k\\
    &\Longleftrightarrow
    k\leq m-j-1\\
    &\Longleftrightarrow
    k+1\leq m-j.
\end{align*}
Thus the claim follows by induction.

In particular, at the head $u_{m,0}$,
\[
    M,u_{m,0}\models H_k
    \quad\Longleftrightarrow\quad
    M,u_{m,0}\models D_k\land\neg D_{k+1}
    \quad\Longleftrightarrow\quad
    m=k.
\]
Since every head is an $a$-successor of $w$, every head satisfies
$\neg r\land\neg s$, and $p$ holds at $u_{m,0}$ exactly when $m$ is
even, it follows that
\[
    M,w\models X_k
    \quad\Longleftrightarrow\quad
    k\text{ is even}.
\]

Write $M^n:=M|^n\theta_{00}^E$. From the validity
$[\uparrow\theta_{00}^E]X_k\leftrightarrow X_{k+1}$, induction on $n$
gives
\[
    M^n,w\models X_k
    \quad\Longleftrightarrow\quad
    M,w\models X_{k+n}.
\]
Hence
\[
    M^n,w\models X_1
    \quad\Longleftrightarrow\quad
    n\text{ is odd},
    \qquad
    M^n,w\models X_2
    \quad\Longleftrightarrow\quad
    n\text{ is even}.
\]
The valuation of $r$ is unchanged by the updates, so
$M^n,w\models r$ for every $n<\omega$. Therefore, by the definition of
$\theta_{00}^E$,
\begin{align*}
    M^n,w\models\theta_{00}^E
    &\Longleftrightarrow
    M^n,w\models X_1\land\neg X_2\\
    &\Longleftrightarrow
    n\text{ is odd}.
\end{align*}
Thus the truth-value sequence of $\theta_{00}^E$ at $w$ is
$0,1,0,1,\ldots$. In particular,
$M,w\models\neg\theta_{00}^E$, but $\theta_{00}^E$ is true at
arbitrarily large finite stages. Hence
$\theta_{00}^E\notin SE_{00}$.
\end{proof}

\begin{figure}[p]
    \centering
    \captionsetup{font=small,skip=4pt}
    \captionsetup[subfigure]{font=small,skip=2pt,justification=centering}

    \newcommand{\EZeroZeroDiagram}[1]{%
        \begin{tikzpicture}[
            x=1cm, y=1cm,
            >=Stealth,
            line cap=round,
            line join=round,
            state/.style={circle,draw,minimum size=7.5mm,
                          inner sep=1pt,font=\small},
            aedge/.style={->,blue,thick},
            bedge/.style={->,red,thick},
            ahead/.style={<->,blue,thick},
            val/.style={font=\scriptsize,inner sep=1pt}
        ]
        \path[use as bounding box] (-0.6,-5.45) rectangle (12.1,0.8);

        \node[state] (w) at (0,-2.175) {$w$};
        \node[val,above=2pt] at (w.north) {$r$};

        \foreach \m in {1,...,4} {
            \foreach \j in {0,...,\m} {
                \node[state] (u\m-\j)
                    at ({2.8+2.15*\j},{-1.45*(\m-1)})
                    {$u_{\m,\j}$};
                \ifodd\j
                    \node[val,below=2pt] at (u\m-\j.south) {$s$};
                \fi
            }
            \ifodd\m\else
                \node[val,above right=1pt]
                    at (u\m-0.north east) {$p$};
            \fi
        }

        \foreach \m in {1,...,4} {
            \ifnum\m<#1\relax\else
                \draw[aedge] (w) -- (u\m-0);
            \fi
        }

        \foreach \m in {1,...,3} {
            \pgfmathtruncatemacro{\nextm}{\m+1}
            \ifnum\m<#1\relax
                \draw[aedge] (u\m-0) -- (u\nextm-0);
            \else
                \draw[ahead] (u\m-0) -- (u\nextm-0);
            \fi
        }
        \draw[blue,thick,densely dotted]
            (u4-0.south) -- (2.8,-4.9);
        \node at (2.8,-5.2) {$\vdots$};

        \foreach \m in {1,...,4} {
            \pgfmathtruncatemacro{\lasttarget}{\m-#1}
            \foreach \j in {1,...,\m} {
                \ifnum\j>\lasttarget\relax\else
                    \pgfmathtruncatemacro{\previousj}{\j-1}
                    \ifodd\j
                        \draw[bedge] (u\m-\previousj) -- (u\m-\j);
                        \draw[bedge] (u\m-\j.135)
                            .. controls +(-2mm,5mm) and +(2mm,5mm)
                            .. (u\m-\j.45);
                    \else
                        \draw[aedge] (u\m-\previousj) -- (u\m-\j);
                        \draw[aedge] (u\m-\j.135)
                            .. controls +(-2mm,5mm) and +(2mm,5mm)
                            .. (u\m-\j.45);
                    \fi
                \fi
            }
        }
        \end{tikzpicture}%
    }

    \begin{subfigure}{\textwidth}
        \centering
        \resizebox{!}{0.28\textheight}{\EZeroZeroDiagram{0}}
        \caption{$M$}
        \label{fig:e00-oscillation-stage0}
    \end{subfigure}\par\vspace{3pt}

    \begin{subfigure}{\textwidth}
        \centering
        \resizebox{!}{0.28\textheight}{\EZeroZeroDiagram{1}}
        \caption{$M^1=M|\theta_{00}^{E}=M|\theta_{11}^{SEO}=M|\theta_{00}^{SEO}$}
        \label{fig:e00-oscillation-stage1}
    \end{subfigure}\par\vspace{3pt}

    \begin{subfigure}{\textwidth}
        \centering
        \resizebox{!}{0.28\textheight}{\EZeroZeroDiagram{2}}
        \caption{$M^2=M|^2\theta_{00}^{E}=M|^2\theta_{11}^{SEO}=M|^2\theta_{00}^{SEO}$}
        \label{fig:e00-oscillation-stage2}
    \end{subfigure}

    \caption{The one-peeling model in Definition~\ref{def:one-peeling-model}. Blue arrows and red arrows indicate the accessibility relations of $a$ and $b$, respectively.}
    \label{fig:e00-oscillation-three-updates}
\end{figure}

\begin{lemma}\label{lem:two-agent-ordinal-return-separations}
    If $|G|\geq 2$, $SE_{11}^{Ord}\backslash E_{11}$ and $SE_{00}^{Ord}\backslash E_{00}$ are nonempty.
\end{lemma}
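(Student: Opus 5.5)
The plan is to build both separating formulas on the one-peeling model of Definition~\ref{def:one-peeling-model}. By Lemma~\ref{lem:ordinal-eventual-facts}, $SE_{ii}^{Ord}\Leftrightarrow S_{ii}^{Ord}$, so membership in $SE^{Ord}_{ii}$ reduces to a fixed-point check: whenever $\varphi^i$ holds initially and $C_G\varphi$ holds at some stage, $\varphi^i$ holds there too. Non-membership in $E_{ii}$ will come from a single pointed model whose root trajectory is $i,1-i,1-i,\ldots$ at every positive finite stage and returns to $i$ only at stage $\omega$. This is the finite/transfinite mirror of the rank-peeling behaviour in Lemma~\ref{lem:two-agent-reversal}.

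\textbf{The formulas.} I reuse $L$ and $D_1=L\top$ from the proof of Lemma~\ref{lem:two-agent-e00-oscillation}. The non-$r$ part of each formula is $D_1$, which makes the branches peel exactly as in that proof. The root part tests whether every non-$r$ $a$-successor still has a surviving witness:
\[
\theta_{11}^{SEO}:=(\neg r\land D_1)\lor\bigl(r\land\Box_a(\neg r\to D_1)\bigr),\qquad
\theta_{00}^{SEO}:=(\neg r\land D_1)\lor\bigl(r\land\Diamond_a(\neg r\land\neg D_1)\bigr).
\]
The two formulas agree off $r$, so they generate the same updates, which matches the figure caption. As in Lemma~\ref{lem:two-agent-e00-oscillation}, $[\uparrow\theta]D_k\leftrightarrow D_{k+1}$ holds for both.

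\textbf{Trajectory in the one-peeling model.} We know $M,u_{m,0}\models D_k$ iff $k\le m$. Hence at finite stage $n$, the head $u_{m,0}$ satisfies $D_1$ iff $m\ge n+1$, and the $a$-successors of $w$ are the heads $u_{m,0}$ with $m\ge\max(n,1)$. Every head has $m\ge1$, so at stage $0$ all of them satisfy $D_1$, giving $\theta_{11}^{SEO}$ true and $\theta_{00}^{SEO}$ false at $w$. At each finite stage $n\ge1$, the head $u_{n,0}$ is still accessible and falsifies $D_1$, so the values flip. At stage $\omega$ the intersection removes every arrow out of $w$, the box becomes vacuous and the diamond false, and the values flip back; the model is then fixed at $w$. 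The root trajectories are therefore $1,0,0,\ldots,1_\omega,\ldots$ and $0,1,1,\ldots,0_\omega,\ldots$, which witness $\theta_{11}^{SEO}\notin E_{11}$ and $\theta_{00}^{SEO}\notin E_{00}$.

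\textbf{The $S^{Ord}_{ii}$ check (main obstacle).} Suppose $M^\alpha,x\models C_G\theta$ and $\theta^i$ held at $x$ initially. If $x$ is an $r$-state, every non-$r$ $a$-successor satisfies $\theta$, hence $D_1$. This gives $\Box_a(\neg r\to D_1)$, so $\theta_{11}^{SEO}$ holds and $\theta_{00}^{SEO}$ fails at $x$, as required.

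The delicate case is a non-$r$ state $x$: common belief of $\theta$ must force $D_1$ at $x$ for $i=1$, and $\neg D_1$ for $i=0$. For $i=0$ this is easy, since $D_1$ is a disjunction of diamonds that is false initially and arrows are only ever deleted, so it stays false. For $i=1$ this may fail, since a $b$-successor of $x$ might carry the same $s$-value. My first attempt is to strengthen the non-$r$ disjunct so that it is automatically true whenever $C_G\theta$ holds: for example replace $D_1$ by $D_1\lor(\Box_a s\land\Box_b\neg s)$ restricted appropriately, or use the K45 fact that targets are reflexive. I would then re-verify that the peeling in the one-peeling model is unaffected, since the branch states there all have opposite-$s$ successors or loops of the wrong type. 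If this cannot be arranged, the fallback is to add an $r$-guard so that the non-$r$ part is relevant only to targets reached from $r$-states. That would make $C_G\theta$ at a non-$r$ state collapse to the trivially stable case.
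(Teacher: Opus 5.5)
Your $00$ half is sound and coincides with the paper's. Since $\neg\Box_a(r\lor A)\leftrightarrow\Diamond_a(\neg r\land\neg D)$ for $A=\neg r\land D$, your $\theta_{00}^{SEO}$ is equivalent to the paper's $A\lor(r\land\neg Y)$. Your $S_{00}^{Ord}$ check and the trajectory $0,1,1,\ldots,0_\omega$ are exactly the paper's argument.

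The $11$ half, however, has a genuine gap: your $\theta_{11}^{SEO}=(\neg r\land D_1)\lor(r\land\Box_a(\neg r\to D_1))$ is not in $SE_{11}^{Ord}$, and not even in $E_{11}^{Ord}$. Take $W=\{x,v\}$ with $R_b=\{(x,v),(v,v)\}$, all other relations empty, $s$ true only at $v$, and $r$ false everywhere; this is K45. Then $D_1$ holds at $x$ with witness $v$, but fails at $v$, which has no $a$-successor. So the first update deletes both $b$-arrows, the model becomes edgeless and fixed, and the trajectory at $x$ is $1,0,0,\ldots,0_\omega,\ldots$. The obstruction is not a successor with the same $s$-value but vacuity. $D_1$ is existential, so once a non-$r$ state has lost its witnesses, $C_G\theta$ holds there vacuously while $\theta$ fails. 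Neither of your repairs is actually carried out. The $r$-guard fallback cannot work as stated, because a modal formula cannot detect whether a state is the target of an arrow from an $r$-state. Moreover, the counterexample contains no $r$-state at all, so the non-$r$ part of the formula must by itself guarantee the $S_{11}^{Ord}$ condition at non-$r$ roots.

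The missing idea is to make $\models C_G\theta\to\theta$ hold outright, by adding a box disjunct and using the K45 fact that $xR_iy$ implies $yR_iy$. The paper drops $r$ entirely and takes $\chi:=(s\land\Diamond_a\neg s)\lor(\neg s\land\Diamond_b s)$ and $\theta_{11}^{SEO}:=\chi\lor(\Box_a\chi\land\Box_b\chi)$. If $\theta$ fails at $x$, some $i$-successor $v$ with $i\in\{a,b\}$ falsifies $\chi$. Then $vR_iv$ makes $\Box_i\chi$ false at $v$, so $v$ is a $\theta$-false arrow target. Hence at a fixed stage given by Proposition~\ref{prop:ordinal-stabilization}, $\theta$ is true everywhere.

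The same reflexivity shows that $\theta$ and $\chi$ agree at every arrow target, so the peeling in the one-peeling model is exactly as you describe. At the root, which satisfies $\neg s$ and has no $b$-successor, $\theta$ reduces to $\Box_a\chi$, giving $1,0,0,\ldots,1_\omega$.

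Be careful if you instead try to keep $r$ and the $\neg r$-guards inside $D_1$. The naive variant $(\neg r\land(D_1\lor(\Box_aD_1\land\Box_bD_1)))\lor(r\land\Box_a(\neg r\to D_1))$ still fails. An $r$-state with no $a$-successors can end up as the only surviving $b$-target of a non-$r$ state. It satisfies the formula vacuously, yet it is neither a witness for $D_1$ nor itself a $D_1$-state. So the formula is false at the non-$r$ state while its common belief holds there.
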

\begin{proof}
Choose distinct agents $a,b\in G$ and put
\[
    \chi:=(s\land\Diamond_a\neg s)
        \lor(\neg s\land\Diamond_b s),
    \qquad
    \theta_{11}^{SEO}:=\chi\lor(\Box_a\chi\land\Box_b\chi).
\]

\smallskip\noindent
\textbf{Proof of $\theta_{11}^{SEO}\in SE_{11}^{Ord}$.}
Suppose $M,w\models\theta_{11}^{SEO}$, where $M$ is any K45 model.
Write $M^\alpha:=M|^\alpha\theta_{11}^{SEO}$ and let $R_i^\alpha$
be the accessibility relation for agent $i$ in $M^\alpha$.
By Proposition~\ref{prop:ordinal-stabilization}, choose an ordinal
$\lambda\geq1$ such that $M^\lambda=M^{\lambda+1}$.
Every arrow in $M^\lambda$ has a target satisfying
$\theta_{11}^{SEO}$; otherwise it would be deleted in the next update.

Suppose, towards a contradiction, that
$M^\lambda,w\models\neg\theta_{11}^{SEO}$.
Then $M^\lambda,w\models\neg\chi$ and at least one of
$\Box_a\chi$ and $\Box_b\chi$ is false at $w$.
Thus there are $i\in\{a,b\}$ and a state $v$ such that
$wR_i^\lambda v$ and $M^\lambda,v\models\neg\chi$.
By Euclideanness of $R_i^\lambda$, we have $vR_i^\lambda v$, so $M^\lambda,v\models\neg\Box_i\chi$.
Together with $M^\lambda,v\models\neg\chi$, this gives
$M^\lambda,v\models\neg\theta_{11}^{SEO}$, contradicting the fact
that $v$ is the target of an arrow in $M^\lambda$.
Therefore $M^\lambda,w\models\theta_{11}^{SEO}$.
Since the model is unchanged at every later stage,
$M^\beta,w\models\theta_{11}^{SEO}$ for every $\beta\geq\lambda$.
This proves $\theta_{11}^{SEO}\in SE_{11}^{Ord}$.

\smallskip\noindent
\textbf{Proof of $\theta_{11}^{SEO}\notin E_{11}$.}
Let $m,j$ range over natural numbers and let all the following states
be distinct. We use the one-peeling model in Definition~\ref{def:one-peeling-model}. See Figure~\ref{fig:e00-oscillation-three-updates}. The updated models will also be the same. 

Initially, every head $u_{m,0}$ satisfies $\chi$, since it satisfies
$\neg s$ and has the $b$-successor $u_{m,1}$ satisfying $s$.
Moreover, $R_a(w)=H$ and $R_b(w)=\varnothing$.
Hence $M,w\models\Box_a\chi\land\Box_b\chi$, and therefore
$M,w\models\theta_{11}^{SEO}$.

Write $M^n:=M|^n\theta_{11}^{SEO}$ and let $R_i^n$ be its relations.
We first explain which arrows are deleted at each update.
If $xR_i^n y$ for $i\in\{a,b\}$, Euclideanness gives $yR_i^n y$.
If $\chi$ is false at $y$, then $\Box_i\chi$ is false there as well,
and hence $\theta_{11}^{SEO}$ is false there.
Conversely, $\chi$ implies $\theta_{11}^{SEO}$ by definition.
Thus, at every target of an $a$- or $b$-arrow in $M^n$,
$\chi$ and $\theta_{11}^{SEO}$ have the same truth value.
An existing arrow therefore survives the next update exactly when
its target satisfies $\chi$ in $M^n$.

We claim that, for every $n\geq0$, $m\geq1$, $0\leq j\leq m$,
$x\in W$, and $i\in\{a,b\}$,
\[
    xR_i^n u_{m,j}
    \quad\Longleftrightarrow\quad
    xR_i u_{m,j}\ \text{and}\ n\leq m-j.
\]
The case $n=0$ follows from the definition of $M^0$.
Suppose the claim holds at stage $n$.
At $u_{m,j}$, the definition of $\chi$ selects the $b$-relation when
$j$ is even and the $a$-relation when $j$ is odd.
Initially, this relation has the single successor $u_{m,j+1}$ if
$j<m$, and has no successor if $j=m$.
For $j<m$, the states $u_{m,j}$ and $u_{m,j+1}$ have opposite truth
values of $s$, and the relevant arrow remains at stage $n$ exactly
when $n\leq m-j-1$. Thus
\[
    M^n,u_{m,j}\models\chi
    \quad\Longleftrightarrow\quad
    n<m-j.
\]
Using the condition for an existing arrow to survive, we obtain
\[
\begin{aligned}
    xR_i^{n+1}u_{m,j}
    &\Longleftrightarrow
    xR_i^n u_{m,j}\ \text{and}\ M^n,u_{m,j}\models\chi\\
    &\Longleftrightarrow
    xR_i u_{m,j}\ \text{and}\ n<m-j\\
    &\Longleftrightarrow
    xR_i u_{m,j}\ \text{and}\ n+1\leq m-j.
\end{aligned}
\]
This proves the claim by induction and establishes the displayed
condition for $\chi$ at every finite stage.

In particular, $R_a^n(w)=\{u_{m,0}\mid m\geq1,\ n\leq m\}$ and
$R_b^n(w)=\varnothing$.
Since $s$ is false at $w$, we have $M^n,w\models\neg\chi$ at every
finite stage. For each $n\geq1$, the head $u_{n,0}$ is still an
$a$-successor of $w$, but $M^n,u_{n,0}\models\neg\chi$ because
$n<n$ is false. Hence $M^n,w\models\neg\Box_a\chi$, and therefore
$M^n,w\models\neg\theta_{11}^{SEO}$ for every $n\geq1$.
Since $\theta_{11}^{SEO}$ is true at $M,w$ but false at every positive
finite stage, $\theta_{11}^{SEO}\notin E_{11}$.

Finally, every arrow targets some $u_{m,j}$ and is deleted by stage
$m-j+1$. Thus all relations are empty at stage $\omega$.
Both $\Box_a\chi$ and $\Box_b\chi$ are then true at $w$, and further
updates leave the model unchanged. The truth values at $w$ are
$1,0,0,\ldots,1_\omega,1_{\omega+1},\ldots$.

Put
\[
    D:=(s\land\Diamond_a(\neg r\land\neg s))
      \lor(\neg s\land\Diamond_b(\neg r\land s)),
    \qquad
    A:=\neg r\land D,
\]
and let
\[
    Y:=\Box_a(r\lor A),
    \qquad
    \theta_{00}^{SEO}:=A\lor(r\land\neg Y).
\]

\smallskip\noindent
\textbf{Proof of $\theta_{00}^{SEO}\in SE_{00}^{Ord}$.}
Suppose $M,w\models\neg\theta_{00}^{SEO}$, where $M$ is any K45
model, and write $M^\alpha:=M|^\alpha\theta_{00}^{SEO}$.

Suppose first that $M,w\models\neg r$. At every non-$r$ state,
$\theta_{00}^{SEO}$ agrees with $A$, and $A$ agrees with $D$.
Hence $M,w\models\neg D$. Since believed public announcements only
delete arrows and do not change valuations, neither of the diamond
disjuncts of $D$ can become true at a later stage once it is false.
Therefore
$M^\alpha,w\models\neg D$, and hence
$M^\alpha,w\models\neg\theta_{00}^{SEO}$, for every ordinal
$\alpha$.

Suppose next that $M,w\models r$. By
Proposition~\ref{prop:ordinal-stabilization}, choose an ordinal
$\lambda$ such that
$M^\lambda=M^{\lambda+1}$. Replacing $\lambda$ by $\lambda+1$ if
necessary, we may assume $\lambda>0$.

We claim that
$M^\lambda,w\models\neg\theta_{00}^{SEO}$.
Since $M^\lambda$ is fixed, every arrow remaining in $M^\lambda$ has
a $\theta_{00}^{SEO}$-true target. Thus, if
$wR_a^\lambda v$, then
$M^\lambda,v\models\theta_{00}^{SEO}$.
If $M^\lambda,v\models r$, then
$M^\lambda,v\models r\lor A$ immediately. If instead
$M^\lambda,v\models\neg r$, then
$\theta_{00}^{SEO}$ agrees with $A$ at $v$, so
$M^\lambda,v\models A$, and again
$M^\lambda,v\models r\lor A$.
Consequently every $a$-successor of $w$ satisfies $r\lor A$, and hence
\[
    M^\lambda,w\models Y.
\]
Since $r$ is true at $w$, $A=\neg r\land D$ is false there.
Therefore
$M^\lambda,w\models\neg\theta_{00}^{SEO}$.
As $M^\lambda$ is fixed, this remains true at every later stage.
Thus $\theta_{00}^{SEO}\in SE_{00}^{Ord}$.

\smallskip\noindent
\textbf{Proof of $\theta_{00}^{SEO}\notin E_{00}$.}
For the counterexample, we again use the one-peeling model $M,w$ in Definition~\ref{def:one-peeling-model}. See again Figure~\ref{fig:e00-oscillation-three-updates}. The updated models will also be the same.

No arrow in this model has $w$ as its target. Hence every arrow target
satisfies $\neg r$. At such a state,
\[
    \theta_{00}^{SEO}
    \leftrightarrow A
    \leftrightarrow D.
\]
In the countermodel for
Lemma~\ref{lem:two-agent-e00-oscillation}, the formula
$\theta_{00}^{E}$ also agrees with $D$ at every arrow target.
Therefore iteration by $\theta_{00}^{SEO}$ deletes exactly the same
arrows at every stage as iteration by $\theta_{00}^{E}$.

Initially, every branch head $u_{m,0}$ satisfies $D$, since
$m\geq1$. Hence every $a$-successor of $w$ satisfies $A$, so
$M,w\models Y$. Since $M,w\models r$, we have
\[
    M,w\models\neg\theta_{00}^{SEO}.
\]

Now let $n\geq1$ be finite. As in the countermodel of
Lemma~\ref{lem:two-agent-e00-oscillation}, the head $u_{n,0}$ is still
an $a$-successor of $w$ at stage $n$, but
\[
    M^n,u_{n,0}\models\neg D.
\]
Since $r$ is false at $u_{n,0}$, we also have
$M^n,u_{n,0}\models\neg A$. Thus
$M^n,w\models\neg Y$, and, since $r$ remains true at $w$,
\[
    M^n,w\models\theta_{00}^{SEO}
    \qquad\text{for every finite }n\geq1.
\]

At stage $\omega$, all arrows have been deleted. Hence
$M^\omega,w\models Y$ vacuously, so
$M^\omega,w\models\neg\theta_{00}^{SEO}$. The model is fixed from
that stage onward. Thus the truth-value sequence of
$\theta_{00}^{SEO}$ at $w$ is
$0,1,1,\ldots,0_\omega,0_{\omega+1},\ldots$.

In particular, $\theta_{00}^{SEO}$ is initially false but is true at
every positive finite stage. Therefore
$\theta_{00}^{SEO}\notin E_{00}$.
    \end{proof}

\begin{lemma}\label{lem:two-agent-11-limit-loss-separations}
    If $|G|\geq 2$, $SE_{11}\backslash SE_{11}^{Ord}$ and $S_{11}^{<\omega}\backslash E_{11}^{Ord}$ are nonempty.
\end{lemma}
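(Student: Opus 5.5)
The plan is to build both witnesses on the same peeling models already used: the rank-peeling model of Lemma~\ref{lem:two-agent-reversal} or the one-peeling model of Definition~\ref{def:one-peeling-model}, with a fresh atom $r$ true only at the root $w$. As in Lemmas~\ref{lem:two-agent-e00-oscillation} and~\ref{lem:two-agent-ordinal-return-separations}, every formula will have the shape $\theta=(\neg r\land\text{peeling part})\lor(r\land\text{root part})$. Below the root the formula then agrees with a diamond-type detector such as $D$, so the finite stages delete arrows exactly as in those lemmas. At stage $\omega$ all arrows are gone and the model is fixed. So the whole finite and transfinite root trajectory can be read off from facts already computed: the root keeps at least one $a$-successor at every finite stage, the head $u_{n,0}$ falsifies the detector at stage $n$, and the model is edgeless at $\omega$.

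For $SE_{11}\setminus SE_{11}^{Ord}$ I would aim for the root trajectory $1,1,1,\ldots,0_\omega,0_{\omega+1},\ldots$. The trajectory clause itself is easy: take the root part to be $\Diamond_a\top$, or $\Diamond_a\neg D$ as in $\theta_{10}$. Then the root is true at all finite stages and false from $\omega$ on, since the model is edgeless and fixed there.

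The real obstacle is the universal half, namely that $\theta\in SE_{11}$ in every K45 model. Diamond-type conjuncts are monotonically decreasing under updates, so a true state may become false at a finite stage and stay false, which kills $SE_{11}$. My first attempt is to make the $\neg r$ part self-stabilizing rather than decreasing. I would use the fixed-point trick of $\theta_{11}^{SEO}$ and $\theta_{01}$, with a disjunct $\Box_a\psi\land\Box_b\psi$, so that by Euclideanness the formula agrees with $\psi$ at every arrow target, and arrange that every state reaching falsity does so at a stage after which it is forced true. Combined with the uniform-bound characterisation of Lemma~\ref{lem:finite-eventual-facts}(3) applied to $E_{11}$, plus an argument that the value $1$ persists once regained, this should yield $SE_{11}$. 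For the root part I would similarly prefer a formula whose finite behaviour is ``eventually constantly $1$'' in every model. Such behaviour can still fail at $\omega$, where the limit intersection of relations removes all witnesses at once. If this fails, I would fall back on a compactness argument: show that a counterexample to $SE_{11}$ would give a finite-stage pattern that the formula forbids syntactically.

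For $S_{11}^{<\omega}\setminus E_{11}^{Ord}$ I would aim for the root trajectory $1,0,0,\ldots,0_\omega,0_{\omega+1},\ldots$ on a model where $C_G\theta$ never holds at the root at a finite stage. This trajectory refutes $E_{11}^{Ord}$ directly. For $S_{11}^{<\omega}$ I would verify the condition $\models\theta\to[\uparrow\theta]^k(C_G\theta\to\theta)$ by reasoning at the stage $k$ fixed point. By Lemma~\ref{lem:finite-eventual-facts}(1), at such a stage the generated model is stable and every arrow target satisfies $\theta$. The Euclideanness argument from the proof of $\theta_{11}^{SEO}\in SE_{11}^{Ord}$ then shows $\theta$ holds at the point. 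The difference from that lemma is that the $r$-guarded root part must fail at the edgeless limit but hold at any finite common-belief stage. I would achieve this by making the root part require a surviving $a$-arrow, for example $r\land\Diamond_a\top\land\Box_a(\cdot)$: at a finite stage where $C_G\theta$ holds with arrows present this is true, while at $\omega$ the edgeless root falsifies it. The remaining point to check, which I expect to be delicate, is the edgeless finite stage, since it also satisfies $C_G\theta$ vacuously. I would handle it by noting that an edgeless stage reached at a finite $k$ from an initially true $\theta$ is excluded by a uniform-bound computation on the depth of the relevant chains.
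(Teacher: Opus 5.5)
Your target trajectories ($1,1,1,\ldots,0_\omega$ and $1,0,0,\ldots,0_\omega$ at the root) are the right ones. However, there is a genuine gap: you never secure the two universal claims $\theta\in SE_{11}$ and $\theta\in S_{11}^{<\omega}$, and the concrete choices you suggest fail them.

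With root part $\Diamond_a\top$ (or $\Diamond_a\neg D$ over a $D$-type lower part), let $w\models r$ have as its only $a$-successor a reflexive $\neg r$-state $v$ that falsifies the lower part. Then $\theta$ is true at $w$ at stage $0$, the arrows into $v$ are deleted, and $w$ is edgeless and false at every later stage. So $\theta\notin E_{11}$, let alone $SE_{11}$.

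The $\Box_a\psi\land\Box_b\psi$ trick does not give finite-stage self-repair below the root either. For example, take the lower part $\chi\lor(\Box_a\chi\land\Box_b\chi)$ with $\chi$ as in $\theta_{11}^{SEO}$. Then the one-peeling model with $r$ made false at $w$ is just the $\theta_{11}^{SEO}$ countermodel, with finite trajectory $1,0,0,\ldots$. Lemma~\ref{lem:finite-eventual-facts}(3) presupposes $E_{11}$ and says nothing about eventual constancy. Failure of $SE_{11}$ (falsity at infinitely many stages) is also not something compactness reduces to a finite forbidden pattern.

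In the second half, the finite edgeless stage cannot be excluded by any depth bound. Restrict your own model to $w$, one head $u_{n,0}$ and its branch; this is still K45. For natural fillers such as $\Box_a D$ or $\Box_a\theta$, the root part $r\land\Diamond_a\top\land\Box_a(\cdot)$ is true at stage $0$, and the head is lost after stage $n$. At stage $n+1$ the root is edgeless, so $C_G\theta$ holds vacuously while $\theta$ is false, violating $S_{11}(n+1,\theta)$. Chain depth is unbounded across models, as the separating model itself requires.

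The missing idea is a root-level witness whose survival through every finite update is guaranteed in every K45 model, together with a lower part that repairs itself in one step.
\begin{itemize}
\item \emph{Persistent root witness.} The paper takes the root part $Y=\Diamond_a P$ with $P=\neg r\land t$, and at $P$-states the formula agrees with $B=A\lor\Box_a(P\to\neg A)$. Suppose all $P$-valued $a$-successors of $x$ are $A$-false. K45 gives each of them $x$'s $a$-successor set, so each satisfies $\Box_a(P\to\neg A)$, hence $B$, and survives. Thus $Y$ persists at every finite stage.
\item \emph{Self-repairing lower part.} Below the root the paper uses the box-type detector $A=\neg r\land\Box_b(r\lor t\lor D)$. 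If $A$ fails at $x$, every $(\neg r\land\neg t)$-valued $b$-successor of $x$ is also $A$-false and is deleted. From the next stage on, all $b$-successors satisfy $r\lor t$, and $A$ stays true.
\end{itemize}
These two facts give $\theta_{11}^{SE}\in SE_{11}$ directly.

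For $\theta_{11}^{<\omega}\in S_{11}^{<\omega}$ no fixed-point argument is needed. Falsity at stage $k$ yields an immediate successor that also falsifies the formula, so $C_G\theta$ fails there. That successor is either a $b$-successor refuting $A$, or, using the persistence of $Y$, a $P\land\neg B$ $a$-successor witnessing the failure of $\Box_a(P\to B)$. The extra disjunct $\Diamond_a(P\land u)$ makes the root true at stage $0$ only.

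Because $A$ fails pairwise, the countermodel is the two-peeling model. There the unique $A$-false pair moves one pair toward the head per stage, so exactly $u_{n,0}$ is lost after stage $n$ and $R_a^\omega(w)=\varnothing$.
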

\begin{proof}
Choose distinct agents $a,b\in G$. Define
\[
\begin{aligned}
    C&:=\neg r\land\neg t,
    \qquad
    P:=\neg r\land t,\\
    D&:=(s\land\Diamond_a(C\land\neg s))
       \lor(\neg s\land\Diamond_b(C\land s)),\\
    A&:=\neg r\land\Box_b(r\lor t\lor D),
    \qquad
    Q:=\Box_a(P\to\neg A),
    \qquad
    B:=A\lor Q,\\
    Y&:=\Diamond_aP,
    \qquad
    Z:=\Diamond_a(P\land u),
    \qquad
    H:=\Box_a(P\to B),\\
    \theta_{11}^{SE}
    &:=(C\land A)\lor(P\land B)\lor(r\land Y),\\
    \theta_{11}^{<\omega}
    &:=(C\land A)\lor(P\land B)
      \lor\bigl(r\land(Z\lor(Y\land H))\bigr).
\end{aligned}
\]

We first show the following two facts that will be used for both $\theta_{11}^{SE}$ and $\theta_{11}^{<\omega}$ ($M^n$ here denotes the updated model either by $\theta_{11}^{SE}$ or $\theta_{11}^{<\omega}$):
\begin{enumerate}
    \item For all $n$, if $M^n,x\models \lnot r\land\lnot A$, then $M^{n+1},x\models A$.
    \item For all $n$, if $M^n,x\models Y$, then $M^{n+1},x\models Y$.
\end{enumerate}

For the first statement, suppose that $M^n, x\models \lnot r\land \lnot A$.
Then there is a $b$-successor $y$ of $x$ such that
$y\models C\land\neg D$. If $z$ is any $C$-valued $b$-successor of
$x$, then transitivity and Euclideanness give
$R_b(x)=R_b(z)$. Since $x$ and $z$ both satisfy $\neg r$, it follows
that $A$ is false at $z$. At a $C$-state, both
$\theta_{11}^{SE}$ and $\theta_{11}^{<\omega}$ agree with $A$.
Hence every $C$-valued $b$-successor of $x$ is a false announcement
target and is deleted by the next update. After that update, every
remaining $b$-successor of $x$ satisfies $r\lor t$, so $M^{n+1},x\models A$. Since later updates only delete arrows, $A$ remains true at $x$
at all later finite stages.

For the second statement, suppose that $M^n,x\models Y$.
Thus $x$ has a $P$-valued $a$-successor. We show that at least one
such successor survives the next update. If some $P$-valued
$a$-successor $y$ satisfies $A$, then $y\models B$, and hence both
displayed announcement formulas are true at $y$. Thus the arrow to
$y$ survives.

Otherwise every $P$-valued $a$-successor of $x$ satisfies $\neg A$.
Choose one such successor $y$. Since $xR_a y$, K45 gives
$R_a(x)=R_a(y)$. Hence every $P$-valued $a$-successor of $y$ also
satisfies $\neg A$, so $y\models Q$. Therefore $y\models B$, and again
both announcement formulas are true at $y$. Thus the arrow to $y$
survives. Consequently, $M^{n+1},x\models Y$.

\medskip\noindent
\textbf{Proof of
$\theta_{11}^{SE}\in SE_{11}$.}

We first show $\theta_{11}^{SE}\in SE_{11}$.
Take any pointed K45 model $M,w$ such that
$M,w\models\theta_{11}^{SE}$, and write
$M^n:=M|^n\theta_{11}^{SE}$.

Suppose first that $M,w\models\neg r$. Since the valuation does not
change, $w$ satisfies either $C$ or $P$ at every stage. Whenever
$A$ is true at $w$, $\theta_{11}^{SE}$ is also true there: if $w$
satisfies $C$, this follows from the disjunct $C\land A$, and if $w$
satisfies $P$, then $A$ implies $B$ and hence $P\land B$.

If $A$ never becomes false at any finite stage, then
$\theta_{11}^{SE}$ remains true at every finite stage. Otherwise,
let $n$ be a finite stage at which $A$ is false. By the first fact
above, $A$ is true from stage $n+1$ onward, and therefore
$\theta_{11}^{SE}$ is also true from stage $n+1$ onward.

Suppose next that $M,w\models r$. Since
$M,w\models\theta_{11}^{SE}$, its third disjunct gives $M,w\models Y$.
By the second fact above, $Y$ remains true at every finite stage.
Since $r$ also remains true, $M^n,w\models\theta_{11}^{SE}$ for every
$n<\omega$. Thus $\theta_{11}^{SE}\in SE_{11}$.

\medskip\noindent\textbf{Proof of $\theta_{11}^{SE}\notin SE_{11}^{Ord}$}
We use the two-peeling model in Definition~\ref{def:two-peeling-model}. See Figure~\ref{fig:oscillation-model-three-updates}.
For $0\leq\ell\leq m$, put
$P_{m,\ell}:=\{u_{m,2\ell},u_{m,2\ell+1}\}$.
Write $M^n:=M|^n\theta_{11}^{SE}$.
We claim that, at every finite stage $n$,

\[
\begin{array}{ll}
\text{(i)}&
\text{for every }m\geq n,\ 
P_{m,m-n}\text{ is the unique $A$-false pair on branch }m,\\[2mm]
\text{(ii)}&
R_a^n(w)=\{u_{m,0}\mid m\geq n\},\\[2mm]
\text{(iii)}&
u_{n,0}\text{ is the unique $B$-false state among these heads.}
\end{array}
\]

At stage $0$, the terminal state $u_{m,2m+1}$ satisfies $C\land s$
and has no $a$-successor satisfying $C\land\neg s$. Hence it falsifies
$D$. Since the two states in $P_{m,m}$ have
$u_{m,2m+1}$ as their common $b$-successor, both falsify $A$.
Every earlier pair is $A$-true, because its common odd
$b$-successor has the next even state as an $a$-successor satisfying
$C\land\neg s$. Thus (i) holds for $n=0$.

All heads are initially $a$-successors of $w$, so (ii) also holds.
The head $u_{0,0}$ is $A$-false, whereas every head $u_{m,0}$ with
$m>0$ is $A$-true. Moreover, $u_{1,0}$ is a $P$-valued
$a$-successor of $u_{0,0}$ satisfying $A$, so
$u_{0,0}\models\neg Q$. Hence $u_{0,0}\models\neg B$, while every
later head satisfies $B$ because it satisfies $A$. This proves (iii)
at stage $0$.

Now suppose (i)--(iii) hold at stage $n$. For every $m>n$, deleting
the arrows whose targets lie in the $A$-false pair
$P_{m,m-n}$ removes the $a$-arrow into its even member. Consequently,
the odd state immediately to its left loses the witness that made
$D$ true, and the preceding pair
$P_{m,m-n-1}$ becomes $A$-false. At the same time, the old
$A$-false pair loses its common $b$-target and becomes $A$-true.
All other pairs remain $A$-true. Hence (i) holds at stage $n+1$.

By (iii), $u_{n,0}$ is the only false announcement target among the
currently accessible heads, so exactly this head is deleted from the
common $a$-successor set. Thus
$R_a^{n+1}(w)=\{u_{m,0}\mid m\geq n+1\}$, proving (ii).
By (i), $u_{n+1,0}$ is now the unique $A$-false surviving head.
The later head $u_{n+2,0}$ is still accessible and satisfies $A$,
so $u_{n+1,0}\models\neg Q$ and therefore
$u_{n+1,0}\models\neg B$. Every later head satisfies $A$, hence $B$.
This proves (iii) and completes the induction.

Since every surviving head satisfies $P$, (ii) gives
$M^n,w\models Y$ for every $n<\omega$. Since $r$ is true at $w$,
\[
    M^n,w\models\theta_{11}^{SE}
    \qquad\text{for every }n<\omega.
\]
On the other hand, every head $u_{m,0}$ is deleted as an
$a$-target after finitely many updates. Hence
$R_a^\omega(w)=\varnothing$, so
$M^\omega,w\models\neg Y$ and therefore
$M^\omega,w\models\neg\theta_{11}^{SE}$.
All remaining arrows are also deleted by stage $\omega$, so the model
is fixed from that stage onward. Thus the truth-value sequence at $w$
is
$1,1,1,\ldots,0_\omega,0_{\omega+1},\ldots$.
Therefore
$\theta_{11}^{SE}\notin SE_{11}^{Ord}$.

\medskip\noindent
\textbf{Proof of
$\theta_{11}^{<\omega}\in
S_{11}^{<\omega}$.}

We first prove
$\theta_{11}^{<\omega}\in S_{11}^{<\omega}$.
Fix $k<\omega$, take any pointed K45 model $M,w$ such that
$M,w\models\theta_{11}^{<\omega}$, and write
$M^n:=M|^n\theta_{11}^{<\omega}$.
Suppose that
$M^k,w\models\neg\theta_{11}^{<\omega}$.
We show that
$M^k,w\models\neg C_G\theta_{11}^{<\omega}$.

Suppose first that $M^k,w\models\neg r$.
Whether $w$ satisfies $C$ or $P$, the falsity of
$\theta_{11}^{<\omega}$ implies $M^k,w\models\neg A$.
Hence there is a $b$-successor $v$ satisfying $C\land\neg D$.
As observed above, K45 gives $R_b(w)=R_b(v)$, so $A$ is also false
at $v$. Since $v$ satisfies $C$,
$M^k,v\models\neg\theta_{11}^{<\omega}$.
Thus a directly accessible state already falsifies the announcement
formula, and hence
$M^k,w\models\neg C_G\theta_{11}^{<\omega}$.

Suppose next that $M^k,w\models r$. Since
$M,w\models\theta_{11}^{<\omega}$ and $r$ is true at $w$ initially,
we have $M,w\models Z\lor(Y\land H)$. Since $Z\to Y$, this implies
$M,w\models Y$. By the persistence of $Y$ proved above,
$M^k,w\models Y$. The assumption
$M^k,w\models\neg\theta_{11}^{<\omega}$ therefore gives
$M^k,w\models\neg Z\land\neg H$.
From $\neg H$ there is an $a$-successor $v$ satisfying
$P\land\neg B$. At a $P$-state,
$\theta_{11}^{<\omega}$ agrees with $B$, so
$M^k,v\models\neg\theta_{11}^{<\omega}$.
Again
$M^k,w\models\neg C_G\theta_{11}^{<\omega}$.

Thus, for every $k<\omega$,
whenever $\theta_{11}^{<\omega}$ is initially true and false at stage
$k$, $C_G\theta_{11}^{<\omega}$ is also false there. Hence
$\theta_{11}^{<\omega}\in S_{11}^{<\omega}$.

\medskip\noindent\textbf{Proof of $\theta_{11}^{<\omega}\notin E_{11}^{Ord}$}

Finally, use the same two-peeling model $M,w$ in Definition~\ref{def:two-peeling-model} and Figure~\ref{fig:oscillation-model-three-updates}.
At every arrow target, $r$ is false, and therefore
$\theta_{11}^{SE}$ and $\theta_{11}^{<\omega}$ both agree with
$(C\land A)\lor(P\land B)$. Hence the two formulas induce exactly the
same sequence of accessibility relations on this model.

Initially, $u_{0,0}$ is an $a$-successor of $w$ satisfying
$P\land u$, so $M,w\models Z$ and therefore
$M,w\models\theta_{11}^{<\omega}$.
At every positive finite stage $n$, the surviving heads are exactly
$u_{m,0}$ with $m\geq n$. Hence $Y$ is true, while $Z$ is false
because $u_{0,0}$ is no longer accessible. Moreover, the accessible
head $u_{n,0}$ is $B$-false, so $H$ is false. Therefore
\[
    M^n,w\models\neg\theta_{11}^{<\omega}
    \qquad\text{for every }1\leq n<\omega.
\]
At stage $\omega$, $Y$ and $Z$ are both false, so
$\theta_{11}^{<\omega}$ is still false; the model is fixed thereafter.
Thus its truth-value sequence at $w$ is
$1,0,0,\ldots,0_\omega,0_{\omega+1},\ldots$.
There is no positive ordinal stage at which the initial truth value
returns. Hence
$\theta_{11}^{<\omega}\notin E_{11}^{Ord}$.
\end{proof}
\begin{lemma}\label{lem:two-agent-00-limit-loss-separations}
    If $|G|\geq 2$, $SE_{00}\backslash SE_{00}^{Ord}$ and $S_{00}^{<\omega}\backslash E_{00}^{Ord}$ are nonempty.
\end{lemma}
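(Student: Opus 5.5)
The plan is to mirror Lemma~\ref{lem:two-agent-11-limit-loss-separations} with the roles of truth and falsity exchanged. I will build two formulas $\theta_{00}^{SE}$ and $\theta_{00}^{<\omega}$ that induce the same deletions on the two-peeling model of Definition~\ref{def:two-peeling-model}, or on the one-peeling model of Definition~\ref{def:one-peeling-model} if the bookkeeping is easier there. The target root trajectories are $0,0,0,\ldots,1_\omega,1_{\omega+1},\ldots$ for $\theta_{00}^{SE}$ and $0,1,1,\ldots,1_\omega,1_{\omega+1},\ldots$ for $\theta_{00}^{<\omega}$. The first trajectory refutes $SE_{00}^{Ord}$, since the formula is true at every stage $\beta\geq\omega$. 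The second refutes $E_{00}^{Ord}$, since the value $0$ never returns at any positive ordinal stage.

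\textbf{Shape of the formulas.} I keep the atoms $r,t,s,u$ and the auxiliary formulas $C,P,D,A,Q,B,Y,Z,H$ of Lemma~\ref{lem:two-agent-11-limit-loss-separations} unchanged, so that the non-$r$ part still peels the branches pair by pair. Only the root clause is dualised. The first attempt is
\[
\theta_{00}^{SE}:=(C\land A)\lor(P\land B)\lor(r\land\neg Y),
\qquad
\theta_{00}^{<\omega}:=(C\land A)\lor(P\land B)\lor\bigl(r\land\neg(Z\lor(Y\land H))\bigr).
\]
Every arrow target is a non-$r$ state, and the non-$r$ part is literally the one from Lemma~\ref{lem:two-agent-11-limit-loss-separations}. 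Hence the accessibility relations on the countermodel evolve exactly as computed there, including claims (i)--(iii), and the two facts proved there still hold because their proofs only used the non-$r$ part.

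\textbf{Positive properties.} For $\theta_{00}^{SE}\in SE_{00}$, take a pointed model where the formula is false.
\begin{itemize}
\item At an $r$-state, falsity means $Y$ holds. Fact~2 makes $Y$ persist at all finite stages, so the formula stays false forever at finite stages.
\item At a non-$r$ state the truth value is governed by $A$ (via $C\land A$ or $P\land B$). Fact~1 says $A$ becomes true after one step if it was false, and then stays true.
\end{itemize}
This second point is the danger: it gives eventual truth, not eventual falsity. So the non-$r$ clause must be modified. My first try is to make non-$r$ falsity monotone, using a diamond formula as in $\theta_{00}^{SEO}$, whose falsity persists because updates only delete arrows. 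At the same time the clause must still let some $P$-successor survive, so that Fact~2 continues to hold. Concretely, I would replace $A$ at non-$r$ states by a formula $A'$ that is false exactly on the currently ``peeled'' pair and true elsewhere, and check the two facts again in the forms
\[
\neg A'\text{ persists},\qquad Y\text{ persists}.
\]
For $\theta_{00}^{<\omega}\in S_{00}^{<\omega}$ I follow the proof for $\theta_{11}^{<\omega}$. If the formula is initially false and true at stage $k$, I exhibit a directly accessible state falsifying it, so that $M^k,w\models\neg C_G\theta_{00}^{<\omega}$. At the root, the $a$-successor witnessing failure of $H$, or a $C\land\neg D$ $b$-successor at non-$r$ states, should serve, using K45 Euclideanness exactly as before.

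\textbf{Countermodel computation.} On the peeling model, $Y$ holds at every finite stage, since the heads $u_{m,0}$ with $m\geq n$ survive, and $Y$ fails at stage $\omega$, where all arrows are gone and the model is fixed. This gives the trajectory $0,0,\ldots,1_\omega,\ldots$ for $\theta_{00}^{SE}$. For $\theta_{00}^{<\omega}$, $Z$ holds at stage $0$, while $Z$ and $H$ fail at every positive finite stage and at $\omega$. This gives $0,1,1,\ldots,1_\omega,\ldots$.

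\textbf{Main obstacle.} The hard step is the non-$r$ clause. It must simultaneously make falsity at non-$r$ states permanent, which $SE_{00}$ needs, keep a $P$-successor of an $r$-state alive so that $Y$ persists finitely, and still peel the heads one per stage so that $Y$ dies exactly at $\omega$. If the self-fulfilling $B=A\lor Q$ device cannot be reconciled with monotone falsity, the fallback is to route the non-$r$ falsity through an auxiliary atom distinguishing arrow-target states from evaluation states. Then $\neg\theta$ at a non-$r$ point of interest is a $\Box$-free diamond condition, while targets are handled by the old clause.
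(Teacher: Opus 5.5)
There is a genuine gap: the proposal never produces formulas with the required positive properties. Your first attempt fails both targets, not only $SE_{00}$. Both candidates agree with $A$ at $C$-states, so Fact~1 turns a false $C$-state true one update later. Concretely, let $x\models C\land\neg s$ and $y\models C\land s$, with $R_b=\{(x,y),(y,y)\}$ and $R_a=\varnothing$. Both formulas are false at $x$ and $y$. One announcement deletes every arrow, so at stage $1$ the formula is true at $x$ and its common belief holds vacuously. This refutes $E_{00}$, hence $SE_{00}$, for $\theta_{00}^{SE}$, and it refutes $S_{00}(1,\cdot)$ for $\theta_{00}^{<\omega}$.

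The same example shows that you transferred the $S_{11}^{<\omega}$ argument in the wrong direction. For $S_{00}^{<\omega}$ you need a falsifying successor at a stage where the formula has \emph{become true}. The $C\land\neg D$ $b$-successor that witnessed the earlier falsity is exactly what was deleted.

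The repair you sketch is not a construction, and its specification is inconsistent. A formula that is ``false exactly on the currently peeled pair'' cannot have permanent falsity, because $P_{m,m-n}$ is true again at stage $n+1$. Worse, the root clause $r\land\neg Y$ commits you to $\models r\land Y\to\Diamond_a(P\land\theta)$. The reason is that $\neg Y$ persists, so $SE_{00}$ at $r$-states forces $Y$ to survive every finite update in every K45 model. The inherited clause $P\land(A\lor Q)$ provides this, but its falsity is transient. In the two-peeling model the head $u_{n,0}$ is false at stage $n$ and true at stage $n+1$, so $(M^n,u_{n,0})$ already violates $E_{00}$. You give no clause that achieves both properties at once.

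Your countermodel computation also relies on claims (i)--(iii), which no longer apply once the non-$r$ clause changes. In addition, at stage $\omega$ the conjunct $H$ is vacuously true rather than false; your value $1_\omega$ survives only because $Y$ fails there.

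The missing idea, which is what the paper does, is a different root clause that needs no self-fulfilling device. With $W:=\neg r\land\neg t$ and $T:=\neg r\land t$, the formula is $D$ at $W$-states. This is a pure diamond condition, so its falsity persists. At $T$-states the formula is $K:=\Diamond_a(W\land D)\lor(\Box_a\neg W\land\Diamond_aT)$. Falsity of $K$ also persists: if $K$ is false while $\Diamond_aT$ holds, every $T$-successor shares the $a$-successor set, so it is $K$-false too and is deleted, and $\Diamond_aT$ dies. The root clause is $r\land\Box_a\neg W\land\Diamond_aT$, and one proves that $\Box_a\neg W\land\Diamond_aT$ cannot become true for the first time at a successor stage. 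This gives $SE_{00}$.

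The countermodel is the one-peeling model plus a persistent $t$-state $z$, seen by $a$ from $w$, from $z$ and from every head. At finite stages $w$ still sees a $W$-head, so the root is false; at $\omega$ only $z$ remains, so the root is true. For $\theta_{00}^{<\omega}$, the extra root disjunct $\Diamond_a(W\land\neg D)$ does two jobs. An initially false state that becomes true must be an $r$-state at which $\Box_a\neg W\land\Diamond_aT$ is still false, so this disjunct supplies the falsifying successor that $S_{00}^{<\omega}$ needs. It also makes the root true at every positive finite stage.
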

\begin{proof}
Choose distinct agents $a,b\in G$. Put
\[
\begin{aligned}
    W&:=\neg r\land\neg t,
    \qquad
    T:=\neg r\land t,\\
    D&:=(s\land\Diamond_a(W\land\neg s))
       \lor(\neg s\land\Diamond_b(W\land s)),\\
    F&:=\Diamond_a(W\land D),
    \qquad
    B:=\Diamond_a(W\land\neg D),\\
    N&:=\Box_a\neg W,
    \qquad
    U:=\Diamond_aT,
    \qquad
    K:=F\lor(N\land U),\\
    \theta_{00}^{SE}
    &:=(W\land D)\lor(T\land K)\lor(r\land N\land U),\\
    \theta_{00}^{<\omega}
    &:=(W\land D)\lor(T\land K)
      \lor\bigl(r\land(B\lor(N\land U))\bigr).
\end{aligned}
\]

At every state satisfying $\neg r$, the two formulas agree: if
$x\models W$, they both agree with $D$ at $x$, while if $x\models T$,
they both agree with $K$ at $x$.

We first establish the following three facts. Fix either
$\varphi=\theta_{00}^{SE}$ or
$\varphi=\theta_{00}^{<\omega}$, and write
$M^n:=M|^n\varphi$.

\begin{enumerate}
    \item If $M^n,x\models W\land\neg D$, then
    $M^m,x\models W\land\neg D$ for every $m\geq n$.

    \item If $M^n,x\models T\land\neg K$, then
    $M^m,x\models T\land\neg K$ for every $m\geq n$.

    \item If $M^{n+1},x\models N\land U$, then
    $M^n,x\models N\land U$.
\end{enumerate}

First, suppose $M^n,x\models W\land\neg D$. Since believed public
announcements only delete arrows and do not change valuations, the
diamond formula $D$ cannot become true at any later stage. Hence
$M^m,x\models W\land\neg D$ for every $m\geq n$. 

Second, suppose $M^n,x\models T\land\neg K$. Then
$M^n,x\models\neg F$. The falsity of $F$ persists under further
updates. If $M^n,x\models\neg U$, then $U$ also remains false, so
$K$ remains false.

Suppose instead that $M^n,x\models U$. Since $K$ is false,
$M^n,x\models\neg N$. Let $y$ be any $T$-valued $a$-successor of
$x$. Since $xR_a^n y$ and $R_a^n$ is transitive and Euclidean,
$R_a^n(x)=R_a^n(y)$. Therefore $F$, $N$, and $U$ have the same truth
values at $x$ and $y$, and hence $M^n,y\models\neg K$. Since
$M^n,y\models T$, we have $M^n,y\models\neg\varphi$. Thus every
$a$-arrow from $x$ to a $T$-valued state is deleted at the next
update. Consequently $U$ is false from stage $n+1$ onward, and since
$F$ remains false, so does $K$.

Third, $N\land U$ cannot become true for the first time at a finite
successor stage. Suppose
$M^{n+1},x\models N\land U$. Choose $y$ such that
$xR_a^{n+1}y$ and $M^{n+1},y\models T$. Since the arrow to $y$
survives the update, $M^n,y\models\varphi$, and therefore
$M^n,y\models K$.

Suppose toward a contradiction that $M^n,x\models\neg N$. Since $xR_a^n y$, we again have
$R_a^n(x)=R_a^n(y)$, so $M^n,y\models\neg N$. As
$M^n,y\models K$, it follows that $M^n,y\models F$. Hence there is
$z$ such that $yR_a^n z$ and $M^n,z\models W\land D$. The equality
$R_a^n(x)=R_a^n(y)$ gives $xR_a^n z$. Since $\varphi$ is true at
every $W\land D$ state, the arrow from $x$ to $z$ survives to stage
$n+1$, contradicting $M^{n+1},x\models N$. Thus
$M^n,x\models N$. Moreover, the surviving $T$-valued successor $y$
already gives $M^n,x\models U$. Hence
$M^{n+1},x\models N\land U$ implies
$M^n,x\models N\land U$.

\medskip\noindent
\textbf{Proof of
$\theta_{00}^{SE}\in SE_{00}$.}

We first show $\theta_{00}^{SE}\in SE_{00}$. Take any pointed K45
model $M,x$ and suppose $M,x\models\neg\theta_{00}^{SE}$.

If $M,x\models W$, then $M,x\models\neg D$, so the first fact implies
$M^n,x\models\neg\theta_{00}^{SE}$ for every $n<\omega$.

If $M,x\models T$, then $M,x\models\neg K$, so the second fact gives
the same conclusion.

Finally, suppose $M,x\models r$. Then
$M,x\models\neg(N\land U)$. If $N\land U$ became true at some
positive finite stage, repeated use of the third fact would imply
that it was already true at stage $0$, a contradiction. Hence
$M^n,x\models\neg\theta_{00}^{SE}$ for every $n<\omega$.
Therefore $\theta_{00}^{SE}\in SE_{00}$.

\medskip\noindent
\textbf{Proof of
$\theta_{00}^{<\omega}\in S_{00}^{<\omega}$.}

Fix $k<\omega$, take any pointed K45 model $M,x$, and suppose
$M,x\models\neg\theta_{00}^{<\omega}$. We show that
$M^k,x\models C_G\theta_{00}^{<\omega}$ implies
$M^k,x\models\neg\theta_{00}^{<\omega}$.

If $M^k,x\models\neg\theta_{00}^{<\omega}$, there is nothing to prove.
Suppose instead that $M^k,x\models\theta_{00}^{<\omega}$. By the
first two facts, an initially false state satisfying $W$ or $T$
cannot become true at a finite stage. Hence $M,x\models r$.

At an $r$-state,
$\theta_{00}^{<\omega}$ agrees with $B\lor(N\land U)$. Since the
formula is initially false, $M,x\models\neg(N\land U)$. By the third
fact, $N\land U$ is still false at stage $k$. Therefore
$M^k,x\models B$. Choose $y$ such that $xR_a^k y$ and
$M^k,y\models W\land\neg D$. At $y$,
$\theta_{00}^{<\omega}$ agrees with $D$, so
$M^k,y\models\neg\theta_{00}^{<\omega}$. Hence
$M^k,x\models\neg C_G\theta_{00}^{<\omega}$.

Thus $S_{00}(k,\theta_{00}^{<\omega})$ holds for every $k<\omega$,
and therefore
$\theta_{00}^{<\omega}\in S_{00}^{<\omega}$.

\medskip\noindent
\textbf{Proof of $\theta_{00}^{SE}\notin SE_{00}^{Ord}$ and $\theta_{00}^{<\omega}\notin E_{00}^{Ord}$.}

We now give one model witnessing both remaining non-membership claims.
Let
\[
    H:=\{u_{m,0}\mid m\geq1\},
    \qquad
    \Omega:=\{w,z\}\cup
    \{u_{m,j}\mid m\geq1,\ 0\leq j\leq m\}.
\]
Define $M=(\Omega,\{R_i\}_{i\in G},V)$ by
\[
\begin{aligned}
    R_a
    &:=
    (\{w,z\}\cup H)\times(\{z\}\cup H)\\
    &\quad\cup
    \bigcup_{m\geq1}
    \bigcup_{\substack{0\leq j<m\\ j\text{ odd}}}
    \{u_{m,j},u_{m,j+1}\}\times\{u_{m,j+1}\},\\
    R_b
    &:=
    \bigcup_{m\geq1}
    \bigcup_{\substack{0\leq j<m\\ j\text{ even}}}
    \{u_{m,j},u_{m,j+1}\}\times\{u_{m,j+1}\}.
\end{aligned}
\]
For every $c\in G\setminus\{a,b\}$, put $R_c:=\varnothing$. Let
\[
    V(r):=\{w\},
    \qquad
    V(t):=\{z\},
    \qquad
    V(s):=
    \{u_{m,j}\mid
      m\geq1,\ 0\leq j\leq m,\ j\text{ odd}\}.
\]
All other proposition letters are false everywhere. Thus
$M,w\models r$, $M,z\models T$, and every branch state satisfies $W$.
The relations are transitive and Euclidean, so $M$ is a K45 model (see Figure~\ref{fig:limitloss00-three-updates}).

No arrow has $w$ as its target. Since $w$ is the only $r$-state,
$\theta_{00}^{SE}$ and $\theta_{00}^{<\omega}$ agree at every arrow
target. Hence they induce the same sequence of accessibility
relations on this model; write this common sequence as $M^n$.

On every branch state the two formulas agree with $D$, and the state
$z$ cannot serve as a $W$-valued witness for either diamond in $D$. Now, we have $M^n,u_{m,j}\models D$ iff $n<m-j$, and at stage $n$ the surviving
branch heads are precisely the $u_{m,0}$ with $m\geq n$.

The state $z$ remains an $a$-target at every finite stage. Indeed, at
stage $n$ the head $u_{n+1,0}$ is still accessible from $z$ and
satisfies $W\land D$. Hence $M^n,z\models F$, so
$M^n,z\models K$, and therefore both announcement formulas are true
at $z$.

It follows that, at every finite stage $n$, the $a$-successors of
$w$ are $z$ together with the surviving branch heads. Hence
$M^n,w\models\neg N$, because a $W$-valued branch head is still
accessible, while $M^n,w\models U$ because $z$ is still accessible.

Therefore
$M^n,w\models\neg\theta_{00}^{SE}$ for every $n<\omega$.
At stage $\omega$ all branch heads have disappeared, while $z$
remains. Thus the only $a$-successor of $w$ is $z$, and
$M^\omega,w\models N\land U$. Hence
$M^\omega,w\models\theta_{00}^{SE}$.

The same is true at $z$: its only remaining $a$-successor is itself,
so $M^\omega,z\models N\land U$, and therefore
$M^\omega,z\models K$. Hence all remaining arrows target a state at
which both announcement formulas are true, and the model is fixed
from stage $\omega$ onward. The truth-value sequence of
$\theta_{00}^{SE}$ at $w$ is
$0,0,0,\ldots,1_\omega,1_{\omega+1},\ldots$.
Thus $\theta_{00}^{SE}\notin SE_{00}^{Ord}$.

Finally consider $\theta_{00}^{<\omega}$ at the same root. At stage
$0$, every branch head satisfies $D$, so
$M,w\models\neg B$. Since also $M,w\models\neg N$, we have
$M,w\models\neg\theta_{00}^{<\omega}$.

For every finite $n\geq1$, the head $u_{n,0}$ is still an
$a$-successor of $w$ and satisfies $\neg D$. Thus
$M^n,w\models B$, and hence
$M^n,w\models\theta_{00}^{<\omega}$. At stage $\omega$,
$M^\omega,w\models N\land U$, so the formula remains true, and the
model is fixed thereafter. Its truth-value sequence at $w$ is
$0,1,1,\ldots,1_\omega,1_{\omega+1},\ldots$.

Thus the initial value $0$ never occurs again at any positive ordinal
stage. Therefore
$\theta_{00}^{<\omega}\notin E_{00}^{Ord}$.
\end{proof}
\begin{figure}[p]
    \centering
    \captionsetup{font=small,skip=4pt}
    \captionsetup[subfigure]{font=small,skip=2pt,justification=centering}

    \newcommand{\ZeroZeroLimitDiagram}[1]{%
        \begin{tikzpicture}[
            x=1cm, y=1cm,
            >=Stealth,
            line cap=round,
            line join=round,
            state/.style={circle,draw,minimum size=7.5mm,inner sep=1pt,font=\small},
            aedge/.style={->,blue,thick},
            bedge/.style={->,red,thick},
            ahead/.style={<->,blue,thick},
            aloop/.style={->,blue,thick},
            bloop/.style={->,red,thick},
            val/.style={font=\scriptsize,inner sep=1pt}
        ]
        \path[use as bounding box] (-0.6,-8.9) rectangle (12.5,0.8);

        \node[state] (w) at (0,-3.6) {$w$};
        \node[val] at (-0.55,-2.95) {$r$};
        \node[state] (z) at (3.0,0.0) {$z$};
        \node[val] at (3.0,0.7) {$t$};

        \foreach \m in {1,...,4} {
            \node[state] (u\m-0) at (3.0,{-1.8*\m}) {$u_{\m,0}$};
            \foreach \j in {1,...,4} {
                \ifnum\j>\m\relax\else
                    \node[state] (u\m-\j) at ({3.0+1.8*\j},{-1.8*\m}) {$u_{\m,\j}$};
                    \ifodd\j
                        \node[val] at ({3.0+1.8*\j},{-1.8*\m-0.7}) {$s$};
                    \fi
                \fi
            }
        }
        \node at (3.0,-8.4) {$\vdots$};

        \draw[aedge] (w) -- (z);
        \foreach \m in {1,...,4} {
            \ifnum\m<#1\relax\else
                \draw[aedge] (w) -- (u\m-0);
            \fi
        }

        \def\first{0}
        \foreach \m in {1,...,4} {
            \ifnum\m<#1\relax\else
                \ifnum\first=0\relax
                    \ifnum#1=2\relax
                        \draw[ahead] (z.east)
                            .. controls +(1.1,-1.0) and +(1.1,1.0)
                            .. (u\m-0.east);
                    \else
                        \draw[ahead] (z) -- (u\m-0);
                    \fi
                    \xdef\first{\m}
                \else
                    \draw[ahead] (u\the\numexpr\m-1\relax-0) -- (u\m-0);
                \fi
            \fi
        }
        \draw[blue,thick,densely dotted] (u4-0.south) -- (3.0,-8.0);

        \ifnum#1=2\relax
            \draw[aedge] (u1-0) -- (z);
            \draw[aedge] (u1-0) -- (u2-0);
        \fi

        \foreach \m in {1,...,4} {
            \foreach \j in {1,...,4} {
                \ifnum\j>\m\relax\else
                    \ifnum#1>\numexpr\m-\j\relax
                    \else
                        \pgfmathtruncatemacro{\prev}{\j-1}
                        \ifodd\j
                            \draw[bedge] (u\m-\prev) -- (u\m-\j);
                            \draw[bloop] (u\m-\j.135)
                                .. controls +(-2mm,5mm) and +(2mm,5mm)
                                .. (u\m-\j.45);
                        \else
                            \draw[aedge] (u\m-\prev) -- (u\m-\j);
                            \draw[aloop] (u\m-\j.135)
                                .. controls +(-2mm,5mm) and +(2mm,5mm)
                                .. (u\m-\j.45);
                        \fi
                    \fi
                \fi
            }
        }
        \end{tikzpicture}%
    }

    \begin{subfigure}{\textwidth}
        \centering
        \resizebox{!}{0.27\textheight}{\ZeroZeroLimitDiagram{0}}
        \caption{$M$}
        \label{fig:limitloss00-stage0}
    \end{subfigure}\par\vspace{2pt}

    \begin{subfigure}{\textwidth}
        \centering
        \resizebox{!}{0.27\textheight}{\ZeroZeroLimitDiagram{1}}
        \caption{$M^1=M|\theta_{00}^{SE}=M|\theta_{00}^{<\omega}$}
        \label{fig:limitloss00-stage1}
    \end{subfigure}\par\vspace{2pt}

    \begin{subfigure}{\textwidth}
        \centering
        \resizebox{!}{0.27\textheight}{\ZeroZeroLimitDiagram{2}}
        \caption{$M^2=M|^2\theta_{00}^{SE}=M|^2\theta_{00}^{<\omega}$}
        \label{fig:limitloss00-stage2}
    \end{subfigure}

    \caption{A finite prefix of the common countermodel for the two separations in Lemma~\ref{lem:two-agent-00-limit-loss-separations}. Blue arrows represent $R_a$ and red arrows represent $R_b$. At each finite stage, the accessibility relations induced by $\theta_{00}^{SE}$ and $\theta_{00}^{<\omega}$ coincide. The blue chain through $z$ and the surviving branch heads schematically indicates common $R_a$-accessibility among those states; nonadjacent arrows and self-loops at $z$ and the branch heads are omitted.}
    \label{fig:limitloss00-three-updates}
\end{figure}
\section{Main results}\label{sec:main-results}
\begin{theorem}\label{thm:finite-classification}
    Let $\varphi\in\mathcal{L}$. In multi-agent K45, we have the following. In single-agent K45, all the conditions in each item are equivalent. Every displayed right arrow is strict when $|G|\geq 2$.
    \begin{enumerate}[label=(\arabic*)]
        \item
              \begin{equation*}
                  \begin{aligned}
                                        & \varphi \text{ is strongly eventually successful }                                                          \\
                                        \Longleftrightarrow{}& \forall M,w\,[\sigma_0^{M,w}=1\Rightarrow\lim_{n\to\infty}\sigma_n^{M,w}=1]\\
                      \Longrightarrow{} & \varphi\text{ is eventually successful}\\                   \Longleftrightarrow{}& \forall M,w\,[\sigma_0^{M,w}=1\Rightarrow\limsup_{n\to\infty}\sigma_n^{M,w}=1]                                                         \\
                      \Longleftrightarrow{} & \exists N\geq 1,\quad\models\varphi\to\bigvee_{n=1}^N [\uparrow\varphi]^n\varphi\\ 
                      \Longrightarrow{} & \forall k\geq 0,\quad \models\varphi\to[\uparrow\varphi]^k(\varphi\leftrightarrow(\varphi\lor C_G\varphi)).
                  \end{aligned}
              \end{equation*}
        \item
              \begin{equation*}
                  \begin{aligned}
                                            & \varphi \text{ is strongly eventually self-refuting }                                                             \\\Longleftrightarrow{}& \forall M,w\,\lim_{n\to\infty}\sigma_n^{M,w}=0       \\
                      \Longleftrightarrow{} & \varphi\text{ is eventually self-refuting}\\               \Longleftrightarrow{}& \forall M,w\,\liminf_{n\to\infty}\sigma_n^{M,w}=0\\
                      \Longleftrightarrow{} & \exists N\geq 1,\quad\models[\uparrow\varphi]^N\left(\lnot\varphi\land\bigwedge_{i\in G}\Box_i\bot\right)         \\
                      \Longrightarrow{} & \varphi \text{ is always informative when true}                                                                   \\
                      \Longleftrightarrow{} & \forall k\geq 0,\quad \models\varphi\to[\uparrow\varphi]^k(\varphi\leftrightarrow(\varphi\land \lnot C_G\varphi)) \\
                      \Longleftrightarrow{} & \models C_G\varphi\to \lnot\varphi                                                                                \\
                      \Longleftrightarrow{} & \models \varphi\leftrightarrow (\varphi\land\lnot C_G\varphi).
                  \end{aligned}
              \end{equation*}
        \item
              \begin{equation*}
                  \begin{aligned}
                                            & \varphi \text{ is a strong eventual true lie }                                                                       \\   \Longleftrightarrow{}& \forall M,w\,\lim_{n\to\infty}\sigma_n^{M,w}=1\\
                      \Longrightarrow{}     & \varphi\text{ is an eventual true lie }                                                                         \\
                         \Longleftrightarrow{}& \forall M,w\,\limsup_{n\to\infty}\sigma_n^{M,w}=1\\
                         \Longleftrightarrow{} & \exists N\geq 1\,\quad\models\bigvee_{n=1}^N [\uparrow\varphi]^n\varphi\\
                      \Longrightarrow{} & \varphi \text{ is always informative when false}                                                                \\
                      \Longleftrightarrow{} & \forall k\geq 0,\quad \models\lnot\varphi\to[\uparrow\varphi]^k(\varphi\leftrightarrow(\varphi\lor C_G\varphi)) \\
                      \Longleftrightarrow{} & \models C_G\varphi\to \varphi                                                                                   \\
                      \Longleftrightarrow{} & \models \varphi\leftrightarrow (\varphi\lor C_G\varphi).
                  \end{aligned}
              \end{equation*}
        \item
              \begin{equation*}
                  \begin{aligned}
                                        & \varphi \text{ is a strong eventual impossible lie }                                                                  \\   \Longleftrightarrow{}& \forall M,w\,[\sigma_0^{M,w}=0\Rightarrow\lim_{n\to\infty}\sigma_n^{M,w}=0]\\
                      \Longrightarrow{} & \varphi\text{ is an eventual impossible lie}                                                                            \\   \Longleftrightarrow{}& \forall M,w\,[\sigma_0^{M,w}=0\Rightarrow\liminf_{n\to\infty}\sigma_n^{M,w}=0]\\
                      \Longleftrightarrow{} & \exists N\geq 1,\quad\models\neg\varphi\to\bigvee_{n=1}^N [\uparrow\varphi]^n\neg\varphi\\
                      \Longrightarrow{} & \forall k\geq 0,\quad \models\lnot\varphi\to[\uparrow\varphi]^k(\varphi\leftrightarrow(\varphi\land \lnot C_G\varphi)).
                  \end{aligned}
              \end{equation*}
    \end{enumerate}
\end{theorem}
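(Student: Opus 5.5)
The plan is to assemble the theorem almost entirely from the lemmas of Section~\ref{sec:lemmas-for-main-results}, treating the four items uniformly via the indices $(i,j)=(1,1),(1,0),(0,1),(0,0)$. Each item has the same skeleton: $SE_{ij}$ $\Leftrightarrow$ a limit condition, then $\Rightarrow E_{ij}$, then $E_{ij}$ $\Leftrightarrow$ a limsup/liminf condition $\Leftrightarrow$ a uniform bound condition, and finally $\Rightarrow S_{ij}^{<\omega}$ in its fixed-point form. The strictness and single-agent claims are read off from the separating lemmas and Lemma~\ref{lem:single-agent_collapse}.

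\textbf{Routine equivalences.}
\begin{itemize}
\item The limit characterizations are unfoldings of the definitions. Since $\sigma_n^{M,w}\in\{0,1\}$, $\lim_n\sigma_n=j$ means $\sigma_n=j$ for all sufficiently large $n$, which is exactly $SE_{ij}$ once we restrict to $\sigma_0=i$.
\item For the eventual notions, $E_{ij}$ only asks for one return to $j$, whereas $\limsup=1$ (resp.\ $\liminf=0$) asks for infinitely many. The direction from $\limsup$/$\liminf$ to $E_{ij}$ is trivial. For the converse I iterate $E_{ij}$ along tail models $M^n,w$, which are again K45.
  \begin{itemize}
  \item When $i=j$ this works directly: each return to $j$ is a new starting point with value $i$.
  \item When $i\ne j$, first note $E_{ij}$ at value $i$. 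If $\sigma$ ever takes value $j$ and then value $i$ again, apply $E_{ij}$ there. Otherwise $\sigma$ is eventually constant $j$.
  \item For items (2) and (3) the hypothesis $\sigma_0=i$ is dropped. For (2) this is justified by Lemma~\ref{lem:finite-eventual-facts}(4): after $N$ steps $\varphi$ is false forever from any start. For (3), starting at value $1$, either $\sigma$ stays $1$ or drops to $0$ and then $E_{01}$ restores it. So $\limsup=1$ in every case.
  \end{itemize}
\item The uniform bound forms are Lemma~\ref{lem:finite-eventual-facts}(3). For (3) I drop the antecedent $\neg\varphi$ by observing that if $\varphi$ holds at stage $0$, then either it holds at stage $1$, or it is false at stage $1$ and $E_{01}$ applied to $M^1$ gives truth within $N$ further steps. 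So the bound becomes $N+1$, and the form without antecedent follows.
\item In item (2), the collapse of $E_{10}$ into $SE_{10}$ and the $\Box_i\bot$ condition is exactly Lemma~\ref{lem:finite-eventual-facts}(4).
\end{itemize}

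\textbf{Fixed-point conditions.}
\begin{itemize}
\item The last implication in each item is $E_{ij}\Rightarrow S_{ij}^{<\omega}$ from Lemma~\ref{lem:finite-eventual-facts}(2). It remains to rewrite $S_{ij}(k,\varphi)$ propositionally:
  \begin{itemize}
  \item $C_G\varphi\to\varphi$ is equivalent to $\varphi\leftrightarrow(\varphi\lor C_G\varphi)$;
  \item $C_G\varphi\to\neg\varphi$ is equivalent to $\varphi\leftrightarrow(\varphi\land\neg C_G\varphi)$;
  \end{itemize}
  and these rewritings may be done inside $[\uparrow\varphi]^k$.
\item For (2) and (3), Lemma~\ref{lem:finite-eventual-facts}(2) also gives $AI_i\Leftrightarrow S^{<\omega}_{i,1-i}\Leftrightarrow\models C_G\varphi\to\varphi^{1-i}$. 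The unconditional validities are the $k=0$ instances.
\item One bookkeeping point: item (3) writes the antecedent $\neg\varphi$ with matrix $\varphi\leftrightarrow(\varphi\lor C_G\varphi)$, matching $S_{01}$, so no extra work is needed.
\end{itemize}

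\textbf{Single-agent collapse and strictness.}
\begin{itemize}
\item In the single-agent case, Lemma~\ref{lem:single-agent_collapse} gives $S_{ij}^{<\omega}\Rightarrow SE_{ij}$. This collapses every chain, because all other conditions sit between $SE_{ij}$ and $S_{ij}^{<\omega}$.
\item For strictness with $|G|\ge2$, I use one separating lemma per displayed arrow:
  \begin{itemize}
  \item $SE\Rightarrow E$ is strict by Lemma~\ref{lem:two-agent-e11-e01-oscillation} for $(1,1)$ and $(0,1)$, and by Lemma~\ref{lem:two-agent-e00-oscillation} for $(0,0)$.
  \item $E\Rightarrow S^{<\omega}$ is strict by Lemma~\ref{lem:two-agent-reversal} for $(1,0),(0,1)$, and by Lemmas~\ref{lem:two-agent-11-limit-loss-separations} and~\ref{lem:two-agent-00-limit-loss-separations} for $(1,1),(0,0)$. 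These give $S^{<\omega}\setminus E^{Ord}\subseteq S^{<\omega}\setminus E$ via $E\Rightarrow E^{Ord}$.
  \end{itemize}
\end{itemize}

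The step I expect to be most delicate is the limsup/liminf equivalence in items (2) and (3), where the quantifier drops the initial-value hypothesis. The case analysis on the trajectory must be checked carefully, and the bound for (3) must be made uniform; for this I will use compactness exactly as in Lemma~\ref{lem:finite-eventual-facts}(3).
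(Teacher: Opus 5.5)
Your proposal is correct and follows the paper's proof essentially step for step. It uses the same tail-model arguments for the limit and $\limsup/\liminf$ characterizations, and Lemma~\ref{lem:finite-eventual-facts}(2)--(4) for the uniform-bound, terminal-stage and fixed-point conditions, including the $N+1$ trick to drop the antecedent $\neg\varphi$ in item (3). It also uses Lemma~\ref{lem:single-agent_collapse} for the single-agent collapse and the same separating lemmas for strictness.

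The one step you leave implicit is the unrestricted condition $\forall M,w\,\lim_{n\to\infty}\sigma_n^{M,w}=1$ in item (3). It needs the same tail argument with $SE_{01}$ in place of $E_{01}$: if $\sigma_0^{M,w}=1$ and $\sigma_m^{M,w}=0$ for some $m\geq1$, apply $SE_{01}$ to $M^m,w$. This is exactly what the paper does.
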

\begin{proof}
    The equivalences with the limit conditions for strongly eventual
    success and strong eventual impossible lies follow immediately
    from the definitions, since a sequence with values in $\{0,1\}$
    converges to $1$ or $0$ iff it is eventually constantly $1$ or
    $0$, respectively.

    For strong eventual true lies, suppose first that $\varphi$ is
    a strong eventual true lie. Take any pointed model $M,w$. If
    $\sigma_0^{M,w}=0$, then $\lim_{n\to\infty}\sigma_n^{M,w}=1$ by definition.
    Suppose $\sigma_0^{M,w}=1$. If $\sigma_n^{M,w}=1$ for every $n$, there is
    nothing to prove. Otherwise, choose $m\geq1$ such that
    $\sigma_m^{M,w}=0$. Applying strong eventual truth to the pointed tail
    model $M^m,w$ gives $\lim_{n\to\infty}\sigma_n^{M,w}=1$. Conversely, if
    every truth-value sequence converges to $1$, then in particular
    every sequence with $\sigma_0^{M,w}=0$ does so. Hence
    \[
        \varphi\text{ is a strong eventual true lie}
        \quad\Longleftrightarrow\quad
        \forall M,w\;
        \lim_{n\to\infty}\sigma_n^{M,w}=1.
    \]

    We next verify the three characterizations using $\limsup$ or
    $\liminf$. If $\varphi\in E_{11}$ and $\sigma_0^{M,w}=1$, then every
    occurrence of $1$ is followed by a later occurrence of $1$, by
    applying $E_{11}$ to the corresponding tail model. Hence $1$
    occurs infinitely often and
    $\limsup_{n\to\infty}\sigma_n^{M,w}=1$. Conversely, the latter condition
    immediately gives a positive stage at which $\varphi$ is true.
    Thus
    \[
        E_{11}(\varphi)
        \Longleftrightarrow
        \forall M,w\,
        [\sigma_0^{M,w}=1\Rightarrow
        \limsup_{n\to\infty}\sigma_n^{M,w}=1].
    \]
    The same argument gives
    $E_{00}(\varphi)\Longleftrightarrow
    \forall M,w\,[\sigma_0^{M,w}=0\Rightarrow
    \liminf_{n\to\infty}\sigma_n^{M,w}=0]$.

    For $E_{01}$, suppose first that $E_{01}(\varphi)$ holds and take
    any pointed model $M,w$. If $1$ occurred only finitely often, then
    $\sigma_n^{M,w}=0$ for every sufficiently large $n$. Applying
    $E_{01}(\varphi)$ to a tail model beginning at such a stage would
    give a later occurrence of $1$, a contradiction. Hence
    $\limsup_{n\to\infty}\sigma_n^{M,w}=1$. Conversely, if
    $\limsup_{n\to\infty}\sigma_n^{M,w}=1$ for every pointed model, then
    every pointed model with $\sigma_0^{M,w}=0$ has a positive stage at
    which $\varphi$ is true. Therefore
    \[
        E_{01}(\varphi)
        \Longleftrightarrow
        \forall M,w\;
        \limsup_{n\to\infty}\sigma_n^{M,w}=1.
    \]

    The uniform finite-bound characterizations in (1) and (4) follow
    directly from Lemma~\ref{lem:finite-eventual-facts}(3).
    For (3), suppose $E_{01}(\varphi)$ holds. By the same lemma, there
    is an $N\geq1$ such that
    \[
        \models
        \neg\varphi\to
        \bigvee_{n=1}^{N}[\uparrow\varphi]^n\varphi.
    \]
    Take any pointed model $M,w$. If $M^1,w\models\varphi$, then the
    required disjunction already holds at stage $1$. Otherwise,
    $M^1,w\models\neg\varphi$, so applying the displayed validity to
    the pointed model $M^1,w$ yields
    $M^{n+1},w\models\varphi$ for some $1\leq n\leq N$. Hence
    \[
        \models
        \bigvee_{n=1}^{N+1}[\uparrow\varphi]^n\varphi.
    \]
    The converse follows immediately from the definition of
    $E_{01}(\varphi)$.

    For (2), Lemma~\ref{lem:finite-eventual-facts}(4) gives the
    equivalence of strongly eventual self-refutation, eventual
    self-refutation, and the existence of an $N\geq1$ such that
    \[
        \models
        [\uparrow\varphi]^N
        \left(
            \neg\varphi\land
            \bigwedge_{i\in G}\Box_i\bot
        \right).
    \]
    The last condition implies
    $\lim_{n\to\infty}\sigma_n^{M,w}=0$ at every pointed model, since after
    stage $N$ the model is edgeless and $\varphi$ is false everywhere.
    Conversely, if $\lim_{n\to\infty}\sigma_n^{M,w}=0$ at every pointed
    model, then every initially true instance of $\varphi$ eventually
    becomes false, so $E_{10}(\varphi)$ holds. The limit condition implies the corresponding \(\liminf\) condition. Conversely, if \(\liminf_{n\to\infty}\sigma_n^{M,w}=0\) at every pointed model, then every initially true instance becomes false at some positive finite stage, so \(E_{10}(\varphi)\) holds.

    The implications from strong eventuality to eventuality follow
    from the definitions. Lemma~\ref{lem:finite-eventual-facts}(2)
    gives the last implication in (1), the last implication in (4),
    and the implication from an eventual true lie to the remaining
    conditions in (3). The equivalences involving always
    informativeness follow from the same lemma and the propositional
    equivalences between $C_G\varphi\to\varphi$ and
    $\varphi\leftrightarrow(\varphi\lor C_G\varphi)$, and between
    $C_G\varphi\to\neg\varphi$ and
    $\varphi\leftrightarrow(\varphi\land\neg C_G\varphi)$.

    The single-agent claim follows from
    Lemma~\ref{lem:single-agent_collapse}, together with the
    equivalences established above.

    For $|G|\geq2$, strictness in (1) is witnessed by
    Lemmas~\ref{lem:two-agent-e11-e01-oscillation}
    and~\ref{lem:two-agent-11-limit-loss-separations}.
    Strictness in (2) follows from
    Lemma~\ref{lem:two-agent-reversal}.
    Strictness in (3) follows from the $E_{01}\setminus SE_{01}$
    witness in Lemma~\ref{lem:two-agent-e11-e01-oscillation} and from
    Lemma~\ref{lem:two-agent-reversal}.
    Strictness in (4) is witnessed by
    Lemmas~\ref{lem:two-agent-e00-oscillation}
    and~\ref{lem:two-agent-00-limit-loss-separations}.
    For larger groups, all additional accessibility relations are
    interpreted as empty.
\end{proof}

In all the four cases, strong eventual notions are characterized by limit and eventual notions are characterized by (i) limit superior or limit inferior, and (ii) uniform bounds. Eventual notions further imply the fixed-point views of the sentences $\varphi\land\lnot C_G\varphi$ and $\varphi\lor C_G\varphi$.

We can view the result from three sets of categories: \emph{preservation} (11, 00) vs \emph{reversal} (10, 01), \emph{true-initial} (11, 10) vs \emph{false-initial} (01, 00), and \emph{true-target} (11, 01) vs \emph{false-target} (10, 00).

On the reversal side, the preconditions for limit and limsup/liminf characterizations can be omitted, $S_{ij}^{<\omega}$ can be simplified to $\models C_G\varphi\to\lnot\varphi$ (``common belief of $\varphi$ implies $\lnot\varphi$'') or $\models C_G\varphi\to\varphi$ (``common belief of $\varphi$ implies $\varphi$''), and there are always informativeness characterizations. On the other hand, those are absent on the preservation side.

As for true-initial and false-initial sides, the preservation cases are symmetric in the displayed characterizations, whereas the reversal cases exhibit an additional asymmetry: \(SE_{10}=E_{10}\), but \(SE_{01}\subsetneq E_{01}\) when $|G|\geq 2$. Finally, the true-target side has limsup and $\varphi\lor C_G\varphi$ while the false-target side has liminf and $\varphi\land\lnot C_G\varphi$ in their characterizations.  

Next, we analyze, for each $(i,j)\in \{0,1\}^2$, the logical relationship among $E_{ij}$, $SE_{ij}$, $E_{ij}^{Ord}$, $SE_{ij}^{Ord}$, $S_{ij}^{<\omega}$, and $S_{ij}^{Ord}$. We take the three perspectives: strong eventual vs eventual, finite vs transfinite, and what condition $S_{ij}^{Ord}$ is equivalent to. The incomparability \(E_{ii}\parallel SE_{ii}^{Ord}\) is not needed from these perspectives, but we record it for completeness since it follows immediately from the preceding lemmas.
\begin{theorem}\label{thm:transfinite-classification}
    For formulas in $\mathcal{L}$: In multi-agent K45, we have the following. In single-agent K45, all the conditions in each item are equivalent. Every displayed right arrow is strict when $|G|\geq 2$.
    \begin{enumerate}[label=(\arabic*)]
        \item
              \begin{equation*}
                  SE_{11}\Rightarrow E_{11}\Rightarrow E^{Ord}_{11}
                  \Rightarrow S^{<\omega}_{11},
                  \qquad
                  S_{11}^{Ord}\Leftrightarrow SE_{11}^{Ord}\Rightarrow E_{11}^{Ord}.
              \end{equation*}
              Moreover,
              \begin{equation*}
                  E_{11}\parallel SE^{Ord}_{11},
                  \qquad
                  SE_{11}\parallel SE^{Ord}_{11}.
              \end{equation*}
        \item
            \begin{equation*}
                SE_{10}\Leftrightarrow E_{10}\Rightarrow E_{10}^{Ord}\Leftrightarrow SE_{10}^{Ord} \Leftrightarrow S_{10}^{Ord} \Leftrightarrow S_{10}^{<\omega}
            \end{equation*}
        \item
              \begin{equation*}
                SE_{01}\Rightarrow E_{01}\Rightarrow E_{01}^{Ord}\Leftrightarrow SE_{01}^{Ord} \Leftrightarrow S_{01}^{Ord} \Leftrightarrow S_{01}^{<\omega}
            \end{equation*}
        \item
              \begin{equation*}
                  SE_{00}\Rightarrow E_{00}\Rightarrow E^{Ord}_{00}
                  \Rightarrow S^{<\omega}_{00},
                  \qquad
                  S_{00}^{Ord}\Leftrightarrow SE_{00}^{Ord}\Rightarrow E_{00}^{Ord}.
              \end{equation*}
              Moreover,
              \begin{equation*}
                  E_{00}\parallel SE^{Ord}_{00},
                  \qquad
                  SE_{00}\parallel SE^{Ord}_{00}.
              \end{equation*}

    \end{enumerate}
    Here $\parallel$ denotes incomparability.
\end{theorem}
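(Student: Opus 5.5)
The proof is an assembly of the lemmas in Section~\ref{sec:lemmas-for-main-results}. First, all displayed implications and equivalences come from Lemma~\ref{lem:ordinal-eventual-facts}. That lemma gives $SE_{ij}\Rightarrow E_{ij}\Rightarrow E_{ij}^{Ord}\Rightarrow S_{ij}^{<\omega}$, $SE_{ij}^{Ord}\Rightarrow E_{ij}^{Ord}$ and $S_{ij}^{Ord}\Leftrightarrow SE_{ij}^{Ord}$ for all $i,j$. For $i\neq j$ it also gives the collapse $E_{ij}^{Ord}\Leftrightarrow SE_{ij}^{Ord}\Leftrightarrow S_{ij}^{Ord}\Leftrightarrow S_{ij}^{<\omega}$, which is the tail of items (2) and (3). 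The extra equivalence $SE_{10}\Leftrightarrow E_{10}$ in item (2) is Lemma~\ref{lem:finite-eventual-facts}(4). The single-agent collapse of every item is exactly Lemma~\ref{lem:single-agent_collapse}.

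Next, for $|G|\geq2$ I check strictness arrow by arrow, assigning to each arrow a separating formula already constructed. Any extra agents are interpreted as empty relations, as in the proof of Theorem~\ref{thm:finite-classification}.

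In item (1):
\begin{itemize}
\item $SE_{11}\Rightarrow E_{11}$ is strict by $\theta_{11}^E\in E_{11}\setminus SE_{11}$ (Lemma~\ref{lem:two-agent-e11-e01-oscillation}).
\item $E_{11}\Rightarrow E_{11}^{Ord}$ is strict by $\theta_{11}^{SEO}\in SE_{11}^{Ord}\setminus E_{11}$ (Lemma~\ref{lem:two-agent-ordinal-return-separations}), since $SE_{11}^{Ord}\subseteq E_{11}^{Ord}$.
\item $E_{11}^{Ord}\Rightarrow S_{11}^{<\omega}$ is strict by $\theta_{11}^{<\omega}$ (Lemma~\ref{lem:two-agent-11-limit-loss-separations}).
\item $SE_{11}^{Ord}\Rightarrow E_{11}^{Ord}$ needs a formula in $E_{11}^{Ord}\setminus SE_{11}^{Ord}$. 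I would use $\theta_{11}^{SE}\in SE_{11}\setminus SE_{11}^{Ord}$, since $SE_{11}\subseteq E_{11}\subseteq E_{11}^{Ord}$.
\end{itemize}
Item (4) is handled symmetrically:
\begin{itemize}
\item $\theta_{00}^E$ (Lemma~\ref{lem:two-agent-e00-oscillation}) separates $SE_{00}$ from $E_{00}$.
\item $\theta_{00}^{SEO}$ (Lemma~\ref{lem:two-agent-ordinal-return-separations}) separates $E_{00}$ from $E_{00}^{Ord}$.
\item $\theta_{00}^{<\omega}$ (Lemma~\ref{lem:two-agent-00-limit-loss-separations}) separates $E_{00}^{Ord}$ from $S_{00}^{<\omega}$.
\item $\theta_{00}^{SE}$ separates $SE_{00}^{Ord}$ from $E_{00}^{Ord}$.
\end{itemize}
In items (2) and (3), the arrow $E_{ij}\Rightarrow E_{ij}^{Ord}$ is strict by $\theta_{10}$ and $\theta_{01}$ from Lemma~\ref{lem:two-agent-reversal}. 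Those formulas lie in $S_{ij}^{<\omega}=E_{ij}^{Ord}$ but not in $E_{ij}$. The arrow $SE_{01}\Rightarrow E_{01}$ is strict by $\theta_{11}^E\in E_{01}\setminus SE_{01}$.

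Finally, the incomparabilities in (1) and (4) need witnesses in both directions. The formula $\theta_{ii}^{SEO}$ lies in $SE_{ii}^{Ord}$ but not in $E_{ii}$, hence not in $SE_{ii}$ either. For the converse directions:
\begin{itemize}
\item $\theta_{11}^{SE}$ and $\theta_{00}^{SE}$ lie in $SE_{ii}\setminus SE_{ii}^{Ord}$, which handles $SE_{ii}\parallel SE_{ii}^{Ord}$.
\item For $E_{ii}\not\Rightarrow SE_{ii}^{Ord}$ the same formulas work, since $SE_{ii}\subseteq E_{ii}$.
\end{itemize}

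The only real care point is bookkeeping. For each strict arrow I must confirm that the witness sits on the correct side using the inclusions already proved. The subtlest instance is $E_{ij}^{Ord}\setminus SE_{ij}^{Ord}$ for $i=j$, which has to be obtained by chaining through $SE_{ij}$ rather than from a direct construction. No new model-theoretic argument should be required.
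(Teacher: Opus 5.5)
Your proposal is correct and matches the paper's proof essentially step for step: the implications and equivalences come from Lemma~\ref{lem:ordinal-eventual-facts} and Lemma~\ref{lem:finite-eventual-facts}(4), and the single-agent collapse comes from Lemma~\ref{lem:single-agent_collapse}; every strictness and incomparability claim uses the same witnesses ($\theta_{ii}^{E}$, $\theta_{ii}^{SEO}$, $\theta_{ii}^{SE}$, $\theta_{ii}^{<\omega}$, $\theta_{10}$, $\theta_{01}$, and $\theta_{11}^{E}$ for $E_{01}\setminus SE_{01}$). In particular, you obtain $E_{ii}^{Ord}\setminus SE_{ii}^{Ord}$ exactly as the paper does, by chaining $\theta_{ii}^{SE}$ through $SE_{ii}\subseteq E_{ii}\subseteq E_{ii}^{Ord}$.
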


\begin{proof}
    The valid implications and the equivalence
    $S_{ij}^{Ord}\Leftrightarrow SE_{ij}^{Ord}$ follow from
    Lemma~\ref{lem:ordinal-eventual-facts}. That lemma also gives the
    four-way equivalence when $i\neq j$.
    Lemma~\ref{lem:finite-eventual-facts}(4) gives
    $SE_{10}\Leftrightarrow E_{10}$, while
    $SE_{01}\Rightarrow E_{01}$ follows directly from the definitions.

    In the $11$ case,
Lemma~\ref{lem:two-agent-e11-e01-oscillation} gives the strictness of
$SE_{11}\Rightarrow E_{11}$, while
Lemma~\ref{lem:two-agent-ordinal-return-separations} gives
$\theta_{11}^{SEO}\in SE_{11}^{Ord}\setminus E_{11}$, and hence also
$\theta_{11}^{SEO}\in E_{11}^{Ord}\setminus E_{11}$.
Lemma~\ref{lem:two-agent-11-limit-loss-separations} gives
$\theta_{11}^{SE}\in SE_{11}\setminus SE_{11}^{Ord}$; since
$SE_{11}\subseteq E_{11}\subseteq E_{11}^{Ord}$, this also yields
$\theta_{11}^{SE}\in E_{11}^{Ord}\setminus SE_{11}^{Ord}$.
These witnesses also establish the two stated incomparability results. Lemma~\ref{lem:two-agent-11-limit-loss-separations} gives
$\theta_{11}^{<\omega}\in
S_{11}^{<\omega}\setminus E_{11}^{Ord}$, proving the strictness of
$E_{11}^{Ord}\Rightarrow S_{11}^{<\omega}$.

The $00$ case is analogous:
Lemma~\ref{lem:two-agent-e00-oscillation} gives the strictness of
$SE_{00}\Rightarrow E_{00}$,
Lemma~\ref{lem:two-agent-ordinal-return-separations} gives
$\theta_{00}^{SEO}\in SE_{00}^{Ord}\setminus E_{00}$ and hence
$\theta_{00}^{SEO}\in E_{00}^{Ord}\setminus E_{00}$, and
Lemma~\ref{lem:two-agent-00-limit-loss-separations} gives
$\theta_{00}^{SE}\in SE_{00}\setminus SE_{00}^{Ord}$ and hence
$\theta_{00}^{SE}\in E_{00}^{Ord}\setminus SE_{00}^{Ord}$.
Again these witnesses give the two incomparability results. Lemma~\ref{lem:two-agent-00-limit-loss-separations} gives
$\theta_{00}^{<\omega}\in
S_{00}^{<\omega}\setminus E_{00}^{Ord}$, proving the strictness of
$E_{00}^{Ord}\Rightarrow S_{00}^{<\omega}$.

For the $10$ and $01$ cases,
Lemma~\ref{lem:two-agent-reversal}, together with the
ordinal equivalences above, gives
$E_{ij}\subsetneq E_{ij}^{Ord}$.
Lemma~\ref{lem:two-agent-e11-e01-oscillation} gives
$SE_{01}\subsetneq E_{01}$.

    The single-agent equivalences follow from
    Lemma~\ref{lem:single-agent_collapse}.
\end{proof}
In the multi-agent case with at least two agents, the above theorem shows that strong eventuality implies eventuality in all the four cases, both in the finite and transfinite cases. In the finite case, the converse holds only for the 10 case. In the transfinite cases, the converse holds only in the reversal cases.

Also, finite eventuality implies transfinite eventuality but the converse does not hold in all the four cases. On the other hand, finite strong eventuality does not necessarily imply transfinite strong eventuality: finite strong eventuality implies transfinite strong eventuality only on the reversal side. Furthermore, transfinite strong eventuality does not imply finite strong eventuality in all the four cases.

In short, ``strong'' implies ``weak'' while ``finite'' does not necessarily imply ``transfinite'' and vice versa. Note also that the transfinite fixed-point views are equivalent to the transfinite strong eventual notions.
\begin{remark}[Extension to richer languages]\label{rem:extension-to-richer-languages}
Theorems~\ref{thm:finite-classification}
and~\ref{thm:transfinite-classification} remain unchanged for
$\varphi\in\mathcal{L}_{\mathrm{BPAL}}$ by Lemma~\ref{lem:bpal-reduction-axioms}.

For $\varphi\in\mathcal{L}_C$ or
$\varphi\in\mathcal{L}_{\mathrm{BPALC}}$, all implications and
equivalences in the two theorems whose proofs do not use compactness
remain valid. More precisely, the following modifications are needed.

For Theorem~\ref{thm:finite-classification}, the equivalences between
the eventual conditions and their uniform finite-bound
characterizations are replaced by the corresponding implications
from the uniform finite-bound conditions to the eventual conditions.
Thus, for $(i,j)\in\{(1,1),(0,1),(0,0)\}$,
\[
    \exists N\geq1\;
    \models
    \varphi^i\to
    \bigvee_{n=1}^{N}[\uparrow\varphi]^n\varphi^j
    \quad\Longrightarrow\quad
    E_{ij}(\varphi),
\]
while the converse is not asserted. The remaining limit and
fixed-point characterizations in these three items remain unchanged.

In the $(1,0)$ case, the corresponding part of the theorem becomes
\[
\begin{aligned}
&\exists N\geq1\;\models
    [\uparrow\varphi]^N
    \left(
        \neg\varphi\land
        \bigwedge_{i\in G}\Box_i\bot
    \right)\\
&\qquad\Longrightarrow
    SE_{10}(\varphi)
    \Longleftrightarrow
    \forall M,w\,
    \lim_{n\to\infty}\sigma_n^{M,w}=0\\
&\qquad\Longrightarrow
    E_{10}(\varphi)
    \Longleftrightarrow
    \forall M,w\,
    \liminf_{n\to\infty}\sigma_n^{M,w}=0.
\end{aligned}
\]
The implication from $E_{10}$ to always informativeness when true,
together with the subsequent fixed-point equivalences in
Theorem~\ref{thm:finite-classification}(2), also remains valid.
In particular, neither $E_{10}\Rightarrow SE_{10}$ nor the implication
from $E_{10}$ to the uniform terminal-stage condition above is asserted.

For Theorem~\ref{thm:transfinite-classification}, the only
modification is that
$SE_{10}\Leftrightarrow E_{10}$ is replaced by
$SE_{10}\Rightarrow E_{10}$. All the other displayed implications,
equivalences, strictness results, and incomparability results remain
valid. No strictness claim is made here for the additional implication
$SE_{10}\Rightarrow E_{10}$.

In the single-agent case, the original statements remain unchanged
also for $\mathcal{L}_{\mathrm{BPALC}}$, since on single-agent K45
frames $C_{\{a\}}\psi$ is equivalent to $\Box_a\psi$, and the BPAL
reduction axioms then reduce every such formula to a basic epistemic
formula.
\end{remark}
\section{Conclusion}\label{sec:conclusion}
We introduced eventual and strong eventual notions for success, self-refutation, true lies, and impossible lies, as well as their transfinite versions. The first theorem shows that strong eventual notions are characterized by limit while eventual notions are characterized by limsup/liminf and the uniform bound conditions. In particular, iterated announcements of any eventually self-refuting formula destroy all the agents' beliefs by the uniform bound. Eventual notions further imply fixed-point views of the Moore sentence and the self-fulfilling sentence. The second theorem shows that strong eventuality implies eventuality within both the finite and transfinite settings whereas ``finite'' does not necessarily imply ``transfinite'' and vice versa. The transfinite fixed-point views are equivalent to transfinite strong eventual notions. In the single-agent case, all the notions in each item in the two theorems are equivalent.

Future directions include analyses in multi-agent KD45 and S5 to see which directions fail and which notions become equivalent. Analyses for $\mathcal{L}_C$ and $\mathcal{L}_{BPALC}$ are left, although most of the results already hold (Remark~\ref{rem:extension-to-richer-languages}). More fine-grained interpretations and the roles of transfinite iterated announcements are also left (section~\ref{sec:basic-definitions-properties}). Connections between our results and unknowability/unbelievability~\citep{Yamada2026Unknowability}, and the idea of $\varphi$ being commonly unbelievable, which was briefly mentioned in Section~\ref{sec:eventual-strong-eventual-notions}, are also worth exploring. We could relate our results with the classification results of $\sigma$-validity~\citep{Yamada2026}.

\backmatter
\bmhead{Acknowledgements}
I thank Ryo Kashima and Koki Okura for their comments and feedback during seminars.

The author used GPT-5.6 Sol (Ultra) and GPT-6 Astra for reasoning, literature search, coding, drawing figures, and proofreading. Some of the proofs were written by these models but were thoroughly checked and modified by the author, who takes full responsibility for the final content. 

\section*{Statements and Declarations}
\bmhead{Funding}
This research was supported by the Science Tokyo Support Program for Doctoral Students, funded by the Universities for International Research Excellence.
\bmhead{Competing interests}
The author has no competing interests to declare.
\bmhead{Author contributions}
The author is the sole author of the manuscript.
\bmhead{Data availability}
No datasets were generated or analyzed during the current study.
\bibliography{bibtex_file}

@incollection{Holliday2010-HOLMPI-2,
	author = {Wesley H. Holliday and Thomas F. Icard},
	booktitle = {Advances in Modal Logic 8},
	editor = {Lev Dmitrievich Beklemishev and Valentin Goranko and Valentin Shehtman},
	pages = {178--199},
	publisher = {College Publications},
	title = {Moorean Phenomena in Epistemic Logic},
	year = {2010},
  url = {https://www.aiml.net/volumes/volume8/Holliday-Icard.pdf}
}

@article{AgotnesVanDitmarschWang2018,
  author  = {Thomas {\AA}gotnes and Hans van Ditmarsch and Yanjing Wang},
  title   = {True Lies},
  journal = {Synthese},
  volume  = {195},
  number  = {10},
  pages   = {4581--4615},
  year    = {2018},
  doi     = {10.1007/s11229-017-1423-y},
  url     = {https://doi.org/10.1007/s11229-017-1423-y}
}

@article{Miller2005-MILTUO-11,
	author = {Joseph S. Miller and Lawrence S. Moss},
	doi = {10.1007/s11225-005-3612-9},
	journal = {Studia Logica},
	number = {3},
	pages = {373--407},
	publisher = {Kluwer Academic Publishers},
	title = {The Undecidability of Iterated Modal Relativization},
	volume = {79},
	year = {2005}
}

@misc{Yamada2026,
  author        = {Yamada, Eiji},
  title         = {Classification of {$\sigma$}-Validity in Iterated Announcements},
  year          = {2026},
  eprint        = {2607.04685},
  archivePrefix = {arXiv},
  primaryClass  = {cs.LO},
  doi           = {10.48550/arXiv.2607.04685},
  url           = {https://arxiv.org/abs/2607.04685},
  note          = {Version 3, 24 August 2026}
}

@misc{Yamada2026Unknowability,
  author        = {Yamada, Eiji},
  title         = {The Sources of Unknowability and Self-refutation in Epistemic and Dynamic Epistemic Logic},
  year          = {2026},
  eprint        = {2609.21317},
  archivePrefix = {arXiv},
  primaryClass  = {cs.LO},
  doi           = {10.48550/arXiv.2609.21317},
  url           = {https://arxiv.org/abs/2609.21317},
  note          = {Version 1, 18 September 2026}
}

@incollection{Moore1942,
  author    = {Moore, G. E.},
  title     = {A Reply to My Critics},
  booktitle = {The Philosophy of G. E. Moore},
  editor    = {Schilpp, Paul Arthur},
  series    = {The Library of Living Philosophers},
  volume    = {4},
  pages     = {535--677},
  publisher = {Northwestern University Press},
  address   = {Evanston, IL},
  year      = {1942}
}

@book{Wittgenstein1953,
  author     = {Wittgenstein, Ludwig},
  title      = {Philosophical Investigations},
  translator = {Anscombe, G. E. M.},
  publisher  = {Basil Blackwell},
  address    = {Oxford},
  year       = {1953},
  note       = {See Part II, Section x}
}

@book{Hintikka1962,
  author    = {Hintikka, Jaakko},
  title     = {Knowledge and Belief: An Introduction to the Logic of the Two Notions},
  publisher = {Cornell University Press},
  address   = {Ithaca, NY},
  year      = {1962}
}

@article{plaza2007,
author    = {Jan Plaza},
title     = {Logics of Public Communications},
journal   = {Synthese},
volume    = {158},
pages     = {165--179},
year      = {2007},
doi       = {10.1007/s11229-007-9168-7}
}

@article{GerbrandyGroeneveld1997,
  author  = {Gerbrandy, Jelle and Groeneveld, Willem},
  title   = {Reasoning about Information Change},
  journal = {Journal of Logic, Language and Information},
  volume  = {6},
  number  = {2},
  pages   = {147--169},
  year    = {1997},
  doi     = {10.1023/A:1008222603071},
  url     = {https://doi.org/10.1023/A:1008222603071}
}

@inproceedings{KleinRendsvig2017,
  author    = {Dominik Klein and Rasmus K. Rendsvig},
  title     = {Convergence, Continuity and Recurrence in Dynamic Epistemic Logic},
  booktitle = {Logic, Rationality, and Interaction (LORI 2017)},
  series    = {Lecture Notes in Computer Science},
  volume    = {10455},
  pages     = {71--85},
  year      = {2017},
  publisher = {Springer},
  doi       = {10.1007/978-3-662-55665-8_6}
}

@techreport{Sadzik2006,
  author      = {Tomasz Sadzik},
  title       = {Exploring the Iterated Update Universe},
  institution = {Institute for Logic, Language and Computation (ILLC), University of Amsterdam},
  year        = {2006},
  number      = {PP-2006-26},
  month       = {March},
  type        = {ILLC Prepublication Series},
  url         = {https://illc.uva.nl/Research/Publications/Reports/PP-2006-26.text.pdf}
}

@article{vanBenthem2007,
  author  = {Johan van Benthem},
  title   = {Rational Dynamics and Epistemic Logic in Games},
  journal = {International Game Theory Review},
  volume  = {9},
  number  = {1},
  pages   = {13--45},
  year    = {2007},
  doi     = {10.1142/S0219198907001254}
}

@article{Balbiani2008,
  author  = {Balbiani, Philippe and Baltag, Alexandru and van Ditmarsch, Hans and Herzig, Andreas and Hoshi, Tomohiro and de Lima, Tiago},
  title   = {`{Knowable}' as `known after an announcement'},
  journal = {The Review of Symbolic Logic},
  volume  = {1},
  number  = {3},
  pages   = {305--334},
  year    = {2008},
  doi     = {10.1017/S1755020308080210},
  url     = {https://doi.org/10.1017/S1755020308080210}
}
\end{document}